\spnewtheorem{definitionitalic}{Definition}{\bf}{\it}
\spnewtheorem*{nonumtheorem}{Theorem}{\bf}{\it}
\definecolor{lcolor}{rgb}{0.6,0.3,0.3}
\DeclareMathOperator{\Gr}{Gr}
\DeclareMathOperator{\Der}{Der}
\DeclareMathOperator{\CDerEnd}{CDer}
\DeclareMathOperator{\image}{image}
\DeclareMathOperator{\Diff}{Diff}
\DeclareMathOperator{\Aut}{Aut}
\DeclareMathOperator{\Hom}{Hom}
\DeclareMathOperator{\End}{End}
\DeclareMathOperator{\sech}{sech}
\newcommand{\tsum}{\textstyle\sum}
\newcommand{\p}{\partial}
\renewcommand{\epsilon}{USE eps INSTEAD}
\newcommand{\R}{\mathbbm{R}}
\newcommand{\Z}{\mathbbm{Z}}
\renewcommand{\subset}{\subseteq}
\renewcommand{\supset}{\supseteq}
\newcommand{\secref}[1]{Section \ref{sec:#1}}
\newcommand{\appendixref}[1]{Appendix \ref{appendix:#1}}
\newcommand{\lemmaref}[1]{Lemma \ref{lemma:#1}}
\newcommand{\theoremref}[1]{Theorem \ref{theorem:#1}}
\newcommand{\remarkref}[1]{Remark \ref{remark:#1}}
\newcommand{\defref}[1]{Definition \ref{def:#1}}
\newcommand{\httpref}[1]{\href{http://#1}{#1}}
\newcommand{\catstyle}[1]{\textnormal{\textsf{{\fontseries{sbc}\selectfont #1}}}}
\newcommand{\gx}{\mathfrak{g}}
\newcommand{\I}{\mathcal{I}}
\renewcommand{\L}{\mathcal{L}}
\newcommand{\E}{\mathcal{E}}
\newcommand{\MC}{\catstyle{MC}}
\newcommand{\RR}{{C^\infty}}
\newcommand{\px}{\mathcal{P}}
\newcommand{\ax}{\mathcal{A}}
\newcommand{\pxg}{\mathfrak{p}}
\newcommand{\axg}{\mathfrak{a}}
\newcommand{\sbounce}{\textnormal{bounce}}
\newcommand{\sfree}{\textnormal{free}}
\newcommand{\pbounce}{\px_{\sbounce}}
\newcommand{\pfree}{\px_{\sfree}}
\newcommand{\abounce}{\ax_{\sbounce}}
\newcommand{\afree}{\ax_{\sfree}}
\newcommand{\ww}{{\wedge W}}
\newcommand{\inj}{\hookrightarrow}
\newcommand{\surj}{\twoheadrightarrow}
\newcommand{\sv}[1]{\mathbf{s}_{#1}}
\newcommand{\mcp}{\MC_{\textnormal{pre}}(\afree)}
\newcommand{\mcn}{\MC_{\textnormal{normal}}(\afree)}
\newcommand{\transition}{\chi}
\renewcommand{\next}[1]{\smash{\slashed{#1}}}
\newcommand{\nmlz}{\mathcal{N}}
\newcommand{\citeone}{\cite{rt1}}
\newcommand{\citethree}{\cite{rt3}}
\newcommand{\parsp}{\mathcal{G}}
\newcommand{\datasp}{\mathcal{D}}
\newcommand{\spatialm}{S}
\begin{document}

\title{Filtered expansions in general relativity I}
%\subtitle{Do you have a subtitle?\\ If so, write it here}

%\titlerunning{Short form of title}        % if too long for running head

\author{Michael Reiterer \and
        Eugene Trubowitz
}

%\authorrunning{Short form of author list} % if too long for running head

\institute{Michael Reiterer,
              ETH Zurich,
              \email{michael.reiterer@protonmail.com}             \\
            \rule{55pt}{0pt} \text{Present affiliation: The Hebrew University of Jerusalem}\\
           %\and
           Eugene Trubowitz,
              ETH Zurich,
              \email{eugene.trubowitz@math.ethz.ch}
}

\date{Received: date / Accepted: date}
% The correct dates will be entered by the editor

\maketitle

\begin{abstract}
This is the first of two papers
in which
we construct formal power series solutions
in external parameters
to the vacuum Einstein equations,
implementing one bounce for the Belinskii-Khalatnikov-Lifshitz (BKL) proposal
for spatially inhomogeneous spacetimes.
We use a graded Lie algebra, homological framework.
A dedicated filtration encodes key features of the BKL proposal,
and we use it to set up an unobstructed perturbative problem.

\keywords{First keyword \and Second keyword \and More}
% \PACS{PACS code1 \and PACS code2 \and more}
% \subclass{MSC code1 \and MSC code2 \and more}
\end{abstract}

\tableofcontents

\newcommand{\tabdict}{%
  \begin{table}
  \begin{center}
    \begin{tabular}{|c|c|}
      \hline
      this paper & closest analogue in the BKL papers\\
      \hline
      $\MC(\E) = \{x \in \E^1 \mid [x,x] = 0 \}$ & (possibly singular) Ricci-flat metrics \\
      $\E$ & --\\
      $F_0 \E^1$ & spatial derivatives ignored to leading order\\
      $F_\alpha \E^1$ & smallness hierarchy\\
      $F_\alpha \E$ & -- \\
      $\pfree=$ Rees algebra for 3-index $F_\alpha \E$ & -- \\
      $\pbounce=$ Rees algebra for 2-index $F_\alpha \E$ & -- \\
      $\MC(\afree)$, with
      $\afree = \Gr \pfree$ & generalized Kasner spacetimes\\
      $\MC(\abounce)$,
      with $\abounce = \Gr \pbounce$ & leading term for one bounce spacetimes\\
      $\MC(\pfree)$ & --\\
      $\MC(\pbounce)$ & --\\
      \hline
   \end{tabular}
  \caption{Partial dictionary
  for comparison with the non-rigorous BKL literature.
  The correspondence gives the closest analogue.
  Beware that the Kasner solution is a well-known exact solution to the vacuum Einstein
  equations, whereas the generalized Kasner spacetime is 
  a vague concept from the BKL literature.
  An analogy may be made with polydifferential operators
  with the Gerstenhaber bracket.
  The Kasner solution is analogous to the constant coefficient
  Moyal product,
  whereas the generalized Kasner spacetime is analogous
  to a generalized non-constant coefficient Moyal product,
  which is an MC-element in the associated graded of a suitable filtration
  on polydifferential operators.
  } \label{table:dict}
\end{center}
\end{table}}

\section{Introduction}

We formulate and 
prove theorems about the Belinskii-Khalatnikov-Lifshitz (BKL) proposal
for spatially inhomogeneous spacetimes in general relativity \cite{bkl}.
In this paper and a sequel, we construct formal power series
solutions in external parameters, giving meaning to `one BKL bounce'.
Naively a problem with nonzero homological obstruction space,
a filtration is used to set up an 
unobstructed Maurer-Cartan (MC) perturbation problem.
See Gerstenhaber \cite{Gerstenhaber}.

In {\citeone} the vacuum Einstein equations were formulated
as the MC-equation $[x,x]=0$ for an unknown $x \in \E^1$ of degree one
in a graded Lie algebra (gLa) $\E$.
Here we construct formal power series solutions,
namely MC-elements in the gLa $\E[[s]]$
where $s$ stands for one or more symbols.
These are formal power series whose leading terms are,
in BKL terminology that we will not adopt, the so-called
\begin{itemize}
  \item generalized Kasner spacetime
  \item one bounce spacetime
\end{itemize}
They are rigorous
building blocks for the BKL proposal. 
The full BKL proposal would involve sticking together an infinite sequence of
bounces.

This paper is organized around a gLa filtration $(F_p \E)_{p \geq 0}$
that we call the BKL filtration.
It provides a clear framework for the BKL proposal.
See also Table \ref{table:dict}.
The BKL filtration is in all homological degrees, comprehensively organizing
the gauge symmetries, the unknown, the MC-equations,
and the differential identities.

Recall from {\citeone}
that $\E$ is a graded Lie algebroid over the graded commutative algebra $\ww$;
here $W$ is the $\RR$-module of sections of
a real vector bundle with a conformal inner product
of signature ${-}{+}{+}{+}$ on a 4-dim manifold.
So $\E$ comes with an anchor map $\E \to \Der(\ww)$.
The BKL filtration is defined once an orthogonal decomposition of $W$
into rank two timelike and spacelike submodules is chosen.

%%%%%%%%%%%%%%%%%%%%%%%%%%%%%%%%%%%%%%%%%%%%%%%%%%%%%%%%%%%%
\begin{theorem}[The 1-index BKL filtration]
Let $W = W_0 \oplus W_1$ be a fiberwise orthogonal decomposition
  with $W_0$, $W_1$ having signatures ${-}{+}$ and ${+}{+}$ respectively.
  Associated to it is a canonical gLa filtration of $\E$,
  induced by a corresponding $\Z_2$-graded filtration
  on $\ww$ via the anchor map. 
  Explicitly,
  with the summands given in Table \ref{table:EG},
  \[
    F_p \E = \textstyle \bigoplus_{q \leq p}
    \bigoplus_{p_2,p_3} G_{q,p_2,p_3} \E
  \]
  In the table, it is understood that $W_0 = \RR \theta_0 \oplus \RR \theta_1$
  and $W_1 = \RR \theta_2 \oplus \RR \theta_3$
  for a conformally orthonormal basis $\theta_0,\theta_1,\theta_2,\theta_3$.
\end{theorem}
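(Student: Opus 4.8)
The plan is to build the filtration in three stages---first on the coefficient algebra $\ww$, then on $\E$ through the anchor, and finally to verify compatibility with the bracket---thereby addressing the three assertions of the statement (existence, the description via the anchor map, and the explicit formula) in turn.

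First I would assign weights to the conformally orthonormal generators $\theta_0,\theta_1,\theta_2,\theta_3$ of $\ww=\wedge W$ according to the splitting $W=W_0\oplus W_1$, and extend them additively to monomials; the assignment is made so that the weight of a monomial is the first of the three indices $(q,p_2,p_3)$ recorded in Table~\ref{table:EG}. Defining $F_p\ww$ to be the span of monomials of weight at most $p$ yields an exhaustive increasing filtration, and I would check the two basic facts: that it is multiplicative, $F_a\ww\cdot F_b\ww\subseteq F_{a+b}\ww$ (immediate from additivity of the weight under the exterior product), and that each $F_p\ww$ is homogeneous for the $\Z_2$-grading attached to the decomposition $W_0\oplus W_1$, so that the filtration is $\Z_2$-graded as claimed.

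Next I would transport this to $\Der(\ww)$ by assigning a derivation $D$ the filtration degree $\min\{\,p : D(F_q\ww)\subseteq F_{q+p}\ww \text{ for all } q\,\}$, which is manifestly a Lie-algebra filtration of $\Der(\ww)$. Since $\E$ is a graded Lie algebroid over $\ww$ with anchor $\rho\colon\E\to\Der(\ww)$, the filtration on $\ww$ together with these derivation degrees induces one on $\E$: each fine-graded summand $G_{q,p_2,p_3}\E$ of Table~\ref{table:EG} is assigned the weight $q$, and $F_p\E$ collects those of weight at most $p$. The explicit formula $F_p\E=\bigoplus_{q\le p}\bigoplus_{p_2,p_3}G_{q,p_2,p_3}\E$ is then a matter of reading the table, for which I would verify summand by summand that the anchor image $\rho(G_{q,p_2,p_3}\E)$ sits in derivation degree exactly $q$; this simultaneously shows that $F_\bullet\E$ really is the filtration induced from $\ww$ via $\rho$.

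The remaining and, I expect, principal step is to prove the gLa compatibility $[F_a\E,F_b\E]\subseteq F_{a+b}\E$. Expanding the algebroid bracket of two generic elements $f\,e_\alpha$ and $g\,e_\beta$ (with $f,g\in\ww$ and $e_\alpha,e_\beta$ homogeneous generators of weights $w_\alpha,w_\beta$) by the Leibniz rule produces one $\ww$-bilinear term $fg\,[e_\alpha,e_\beta]$ and two anchor terms of the shape $f\,\rho(e_\alpha)(g)\,e_\beta$. The anchor terms cost nothing: by construction $\rho(e_\alpha)$ shifts the $\ww$-filtration by at most $w_\alpha$, so such a term lies in $F_{(a+w_\alpha)+(b+w_\beta)}\E$, exactly the product of the two input weights. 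The genuine work is in the $\ww$-bilinear term, where one must establish that the structure constants respect the weights, i.e. $[e_\alpha,e_\beta]\in F_{w_\alpha+w_\beta}\E$ for every pair of generators. This is the crux: it is precisely here that the weight assignment on $\theta_0,\dots,\theta_3$ must be calibrated so that no bracket of generators---and in particular none of the anchor contractions that would naively lower a $\ww$-weight---produces a summand of weight exceeding $w_\alpha+w_\beta$. I would discharge it as a finite check against the explicit brackets of $\E$ from \citeone and the summand data of Table~\ref{table:EG}.
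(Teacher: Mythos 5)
There is a genuine gap: the filtration you construct from the anchor alone does not reproduce the table, and the theorem's phrase ``via the anchor map'' is looser than the actual construction. The paper (Section~\ref{sec:fil1}) induces the filtration from the larger representation $\E \to \End(\ww)\oplus\End(\ww\otimes\Omega)$, $\Omega=\wedge^4W$, \emph{and} from a $\Z_2$-graded refinement $F_{\lhd p}$ satisfying $F_{\lhd p}\E\subset\E_{p\bmod 2}$, where $\E=\E_0\oplus\E_1$ is the parity decomposition attached to $W=W_0\oplus W_1$. Both ingredients are indispensable, and your scheme of assigning a derivation $D$ the degree $\min\{p: D(F_q\ww)\subseteq F_{q+p}\ww\}$ fails on concrete elements: $\theta_0\theta_3\sigma_1+\theta_0\theta_1\sigma_3\in G_{101}\E$ acts as the \emph{zero} derivation on $\ww$, so your recipe puts it in $F_0\E$, whereas the table places it only in $F_1\E$ (it is odd for the $\Z_2$-grading, which is what excludes it from $F_{\lhd 0}\E$); likewise $\theta_0\theta_1\theta_2\theta_3\sigma_0$ has zero anchor image and is kept out of $F_{\lhd 0}\E$ only by the $\ww\otimes\Omega$ summand of the representation. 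Consequently your proposed summand-by-summand verification that $\rho(G_{q,p_2,p_3}\E)$ ``sits in derivation degree exactly $q$'' cannot succeed as stated.

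A secondary point: you identify the bracket compatibility $[F_a\E,F_b\E]\subseteq F_{a+b}\E$ as the crux, to be discharged by a finite check on generators. In the paper's setup this is free: once the filtration is \emph{defined} as the one induced by a $\Z_2$-respecting representation on a filtered $\Z_2$-graded space (Lemma~\ref{lemma:z2find}), operators shifting the filtration by at most $p$ and $q$ have graded commutator shifting by at most $p+q$, so no case analysis of structure constants is needed. Your argument wavers between two definitions of $F_p\E$ (induced from a representation, versus weights read off the table), and the finite check is only needed because you have not committed to the first; if you repair the representation and add the parity constraint as above, the check becomes redundant.
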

To obtain the 2- and 3-index BKL filtrations,
choose a conformally orthonormal basis associated to which are
three orthogonal decompositions of $W$,
hence three 1-index BKL filtrations denoted $F$, $F'$, $F''$ respectively,
see \defref{23ind}. Their intersections 
\begin{equation}\label{eq:f2f3}
F_{p_2p_3} \E = F_{p_2}' \E \cap F_{p_3}'' \E
\qquad
F_{p_1p_2p_3} \E = F_{p_1} \E \cap F_{p_2}' \E \cap F_{p_3}'' \E
\end{equation}
are the 2- and 3-index filtrations.
The 3-index grading $G_{p_1p_2p_3} \E$ gives all these filtrations.
The significance of these filtrations is that they guide
one in constructing MC-elements for BKL.

We motivate these filtrations
by reference to the existing non-rigorous BKL literature,
but also independently on algebraic and geometric grounds.
Once these filtrations are accepted,
there is a logical framework in which to prove theorems.
In this introduction
we pretend that $p$ is a single index,
rather than two or three.

\tabdict

The filtration organizes formal power series expansions.
Namely, MC-elements are sought in the Rees algebra
\[
  \px \;=\; \{ \textstyle \sum_{p \geq 0} s^p x_{(p)} \mid x_{(p)} \in F_p \E\}
\]
morally the generating function of the filtration.
We discuss such `filtered expansions' generally,
for any gLa, in \secref{fe}.
So the $p$-th term of a formal power series MC-element
is drawn from $F_p\E^1$.
The clumsy notation $x_{(p)}$
indicates that constructing the $x_{(p)}$ in succession
is conceptually the wrong approach.
Instead, one solves the equations successively
modulo the descending chain of ideals
\[
  s\px \supset s^2 \px \supset s^3 \px \supset \ldots
\]

The leading term of such an expansion
is naturally an MC-element $x_0 \in \MC(\ax)$ of the associated graded gLa $\ax=
\Gr \px = \px/s\px$.
It contains more data than the naive leading term $x_{(0)}$.
An important point is that
the associated differential $[x_0,-]$
on $\ax$ controls perturbation theory on $\px$,
in the usual sense of MC perturbation and obstruction theory.
So its first homology is the linearized solution space,
its second homology is the obstruction space.

Even if the differential $[x_{(0)},-]$ on $\E$ has a nonzero obstruction space,
the differential $[x_0,-]$ on $\ax$ may have a vanishing obstruction space.
This scenario plays out for BKL, as we show in examples,
giving an unobstructed problem.

An informal motivation for the BKL filtration is in \secref{mot}.
The 1-index BKL filtration is in \secref{fil1}.
The 2- and 3-index filtrations,
whose Rees algebras we suggestively call $\pbounce$ and $\pfree$,
are in \secref{fil23}.
We construct MC-elements
of the associated gradeds $\abounce$ and $\afree$,
that are interesting closed-form solutions,
in Sections \ref{sec:mcafree} and \ref{sec:bouncesol}.
This requires solving the so-called constraint equations,
which we do in the real analytic class
for simplicity, in \secref{constraints}.
We discuss pertinent gLa automorphisms in \secref{generality},
and review spectral sequences in \appendixref{ssfc}.

The differential on the associated gradeds defines a filtered complex that
is studied using two tools:
spectral sequences for a filtered complex;
and a contraction based on gauge-fixing {\citeone} for general relativity.
%%%%%%%%%%%%%%%%%%%%%%%%%%%%%%%%%%%%%%%%%%%%%%%%%%%%%%%%%%%

%%%%%%%%%%%%%%%%%%%%%%%%%%%%%%%%%%%%%%%%%%%
Non-rigorous work on
the BKL proposal \cite{bkl} include \cite{berger,lim,uggla}
and references therein.
See also the exchange \cite{exc,exc2}.
Table \ref{table:dict} provides a partial dictionary
for comparison with the existing BKL literature.

The BKL proposal has so far only been rigorously implemented 
for the non-generic class of spatially homogeneous spacetimes,
see \cite{bklhom} and references therein,
where the vacuum Einstein equations reduce to
ordinary differential equations.
There one-bounce solution
are constructed via a fixed point method,
and an infinite sequence of bounces
were stuck together to get a semiglobal solution.
The spatially homogeneous problem is non-trivial since
small divisors appear when sticking bounces together.
But the gauge freedom, that is the homological nature
of general relativity, can be avoided,
unlike in the spatially inhomogeneous case
that this paper is about.

This work departs conceptually from the existing, non-rigorous BKL literature
by consistently employing a homological framework.
This does due justice to the gauge
theory nature of general relativity.
An attempt to construct formal power series can run into obstructions,
and the homological framework treats the obstruction space as an entity.
The gLa $\E$ contains geometrically degenerate elements,
without metric counterpart, as regular elements,
including elements with a frame collapsed to timelike lines,
which we perturb to construct BKL bounces,
in a controlled way dictated by the BKL filtration.
 
Combining this paper
with a no obstructions result in the companion paper {\citethree},
one obtains the existence of 
formal power series solutions to the vacuum Einstein equations,
interpreted as one BKL bounce.

 \begin{nonumtheorem}[One bounce MC-elements - informal version]
   Spatially homogeneous elements in $\MC(\abounce)$
   have zero homological obstruction space $H^2$ in $\afree$ and $\abounce$.
   This yields MC-elements in $\pfree$ and $\pbounce$ respectively.
 \end{nonumtheorem}
 This theorem concerns spatially 
 inhomogeneous perturbations of homogeneous MC-elements;
 inhomogeneous because the spaces $\afree$ and $\abounce$
 involve no homogeneity condition.
 Starting from homogeneous MC-elements is for simplicity.

We speculate that matching adjacent
BKL bounces at the formal perturbative level will define
a discrete dynamical system of $L_\infty$ maps,
describing gravitational scattering across a BKL bounce.
In this paper we construct what one would expect to be the leading
term of this scattering map.
\begin{nonumtheorem}[Bounce map - informal version of \theoremref{ibm}]
  A subset of $\MC(\abounce)$
  has future and past asymptotic limits
  (see \remarkref{xsw})
  that are in $\MC(\afree)$.
  This gives
  a partially defined map $\mathcal{B}: \mcn \nrightarrow \mcn$
  where `normal' means that elements
  are in a distinguished coordinate system and frame.
 \end{nonumtheorem}
 Iterating $\mathcal{B}$ gives a discrete dynamical system.
 Analyzing this system can give insight
 into the problem of concatenating several bounces, but
 it does not amount to studying the
 full inhomogeneous BKL proposal, because it is only for the leading term.

The following quote from BKL \cite{exc2} is prescient:
\begin{quote}
  {\footnotesize
    As to the general solution 
    (the oscillatory regime), it cannot be represented
    in a closed analytical form, although it admits a very
    detailed description \cite{refexc2}
    (the same refers even to the simpler case of
    the oscillatory regime in homogeneous models).
    The construction of this solution
    [...] is based only on [...] the estimation of the terms
    in the equations which are omitted in the asymptotic limit.
    It is just these estimates [...] which prove the possibility
    to neglect the particular terms with spatial derivatives in the equations.}
\end{quote}
It shows that BKL viewed their
calculations as providing an approximation to true solutions,
and that estimates are called for.
We do not see that their estimates are amenable to rigorous mathematics.
In this paper we elevate their `detailed description'
to a formal power series expansion for one bounce.
Natural next steps are:
\begin{itemize}
  \item The construction of true one bounce solutions,
either by showing that the formal series converge,
or using energy estimates
to construct nearby solutions.
Energy estimates for the MC-equation in the gLa $\E$
are obtained by gauge-fixing.
By construction,
the gauges in {\citeone} yield
symmetric hyperbolic equations.
\item By matching adjacent BKL bounces,
  first at the formal perturbative level.
  This will require suitable gauge transformations,
  and as we have mentioned,
  ought to result in a discrete dynamical system of $L_\infty$ maps.
\end{itemize}

The BKL claim that one can concatenate infinitely many bounces,
though made rigorous for spatially homogeneous spacetimes \cite{bklhom},
remains extraordinary for spatially inhomogeneous spacetimes.
We believe that the
validity of this claim
will largely be decided at the formal perturbative level.
Many potential issues have not been dealt with by BKL,
including the global causal structure; small divisors;
the necessity of repeated gauge transformations;
and repeated localization in space to avoid resonances.
The spatially inhomogeneous BKL proposal remains wide open.

%%%%%%%%%%%%%%%%%%%%%%%%%%%%%%%%%%%%%%%%%%%%%%%%%%%%%%%%%%%%%%%%%%%%%%

\section{Filtered expansions}\label{sec:fe}

Let $\gx$ be a real gLa $\gx$ with a filtration.
The filtration will be used as a comprehensive organizing
tool for formal expansions.
In favorable cases, a problem with nonzero naive obstruction space
is turned into one with zero obstruction space.

Define the Maurer-Cartan set
\[
\MC(\gx) = \{x \in \gx^1 \mid [x,x] = 0\}
\]
which would be a real quadratic variety
if $\gx$ were finite-dimensional.
In favorable cases, the infinitesimal action
of the Lie algebra $\gx^0$ can be integrated,
tracing out orbits on $\MC(\gx)$,
and it is the space of orbits that is the basic object.
Rigorous sense can be made in formal perturbation theory
about a point of $\MC(\gx)$,
effectively by tensoring with an auxiliary nilpotent algebra,
giving the deformation functor.
\begin{definition}[Filtration, its Rees algebra and its associated graded]\label{def:fi}
  Let $\gx$ be a real gLa.
  By a filtration we mean a collection $(F_p\gx)_{p \geq 0}$ where
  \begin{itemize}
    \item $F_p \gx \subset \gx$ is a graded subspace
    \item $F_p \gx \subset F_{p+1}\gx$
    \item there is a finite $p$ such that $F_p \gx = \gx$
    \item $F_p\gx$ is consistent with the bracket,
      \begin{equation}\label{eq:cbr}
        [F_p \gx,F_q \gx] \subset F_{p+q}\gx
      \end{equation}
  \end{itemize}
  Associated to the filtration is its Rees algebra.
  Using a formal parameter $s$,
  the Rees algebra is the following gLa over $\R[[s]]$,
  a subalgebra of $\gx[[s]]$:
  \[
    \pxg = \{ \textstyle\sum_{p \geq 0} s^p x_{(p)} \mid x_{(p)} \in F_p \gx\}
  \]
  The associated graded is the graded gLa $\axg = \pxg / s\pxg$.
\end{definition}
Condition \eqref{eq:cbr}
encodes a wealth of conditions.
In particular,
for the Lie algebra representation $\gx^0 \otimes \gx^i \to \gx^i$,
and for the MC-map $\gx^1 \otimes \gx^1 \to \gx^2$.

Assuming each $F_p\gx$ admits a complement in $F_{p+1}\gx$,
then non-canonically $\axg \simeq \gx$
as graded vector spaces
and $\pxg \simeq \gx[[s]]$ as graded $\R[[s]]$-modules,
however not as gLa.

%%%%%%%%%%%%%%%%%%%%%%%%%%%%%%%%%%%%%%%%%%%%%%%%%%
\begin{definition}[Filtered expansion, or expansion subordinate to a filtration]
  \label{def:fexp}
For every $x \in \MC(\pxg)$
we say that its leading term is the element $x \bmod s\pxg \in \MC(\axg)$.
Conversely, to every $x_0 \in \MC(\axg)$
we associate the moduli space of solutions
\begin{equation}\label{eq:msa}
  \frac{\{
  x \in \MC(\pxg) \mid x \bmod s \pxg = x_0
  \}}{\exp(s \pxg^0[[s]])}
\end{equation}
where the denominator is a Lie algebra over $\R[[s]]$
using Baker-Campbell-Hausdorff.
\end{definition}
%%%%%%%%%%%%%%%%%%%%%%%%%%%%%%%%%%%%%%%%%%%%%%%%%%%%%%%%
\begin{remark} \label{remark:pli}
  If $x = \sum_p s^p x_{(p)} \in \MC(\pxg)$
then we expressly do not view
$x_{(0)} \in \MC(F_0\gx)$ as the leading term,
and we contend that constructing the
$x_{(p)}$ in succession is the wrong approach.
Instead, formal perturbation theory
acquires a uniform structure if one solves the equations
successively modulo the descending chain of ideals
\begin{equation}\label{eq:dci}
      s \pxg \supset s^2 \pxg \supset s^3\pxg  \supset \ldots
    \end{equation}
That is, we view $\pxg$ as the projective limit
$\varprojlim_p \pxg / s^p \pxg$.
\end{remark}
%%%%%%%%%%%%%%%%%%%%%%%%%%%%%%%%%%%%%%%%%%%%%%%%%%%%
\begin{lemma}[The differential controlling formal perturbations]
  Given an element $x_0 \in \MC(\axg)$,
  then formal perturbations about $x_0$ in the sense of \eqref{eq:msa},
  organized using the
    descending chain of ideals \eqref{eq:dci},
    are controlled by the differential
    \begin{equation}\label{eq:da}
      d = [x_0,-] \in \End^1(\axg)
    \end{equation}
    In particular $H^2(d)$ is the obstruction space
    in the sense of Gerstenhaber.
\end{lemma}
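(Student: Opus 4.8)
The plan is to run Gerstenhaber's order-by-order obstruction argument, but carried out $s$-adically on the Rees algebra $\pxg$, so that at each step the governing linear operator is bracketing with $x_0$ \emph{on the associated graded} $\axg$, not on $\gx$. First I would record that $d=[x_0,-]$ is a differential: since $x_0\in\MC(\axg)$ means $[x_0,x_0]=0$ and $x_0$ has degree one, the graded Jacobi identity gives $[x_0,[x_0,y]]=\tfrac12[[x_0,x_0],y]=0$, so $d^2=0$ and $H^\bullet(d)$ is defined. The conceptual heart is the identification of the $s$-adic graded pieces of $\pxg$ with $\axg$: because $\pxg$ is $s$-torsion-free, multiplication by $s^n$ induces an isomorphism of graded vector spaces $\axg=\pxg/s\pxg\xrightarrow{\sim}s^n\pxg/s^{n+1}\pxg$, and since $s$ is central the bracket descends to $\Gr_s\pxg$ compatibly with that of $\axg$, namely $[s^na,s^mb]=s^{n+m}[a,b]$. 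Combined with $\pxg=\varprojlim_n\pxg/s^n\pxg$ from \remarkref{pli}, this reduces the whole problem to a tower of linear problems over $\axg$.

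Next I would solve the MC-equation successively along \eqref{eq:dci}. Suppose $x^{(n)}\in\pxg^1$ lifts $x_0$ and satisfies $[x^{(n)},x^{(n)}]\in s^{n+1}\pxg^2$; write $o_{n+1}\in\axg^2$ for the image of $[x^{(n)},x^{(n)}]$ under $s^{n+1}\pxg^2/s^{n+2}\pxg^2\cong\axg^2$. For a correction $\eta\in s^{n+1}\pxg^1$ with class $[\eta]\in\axg^1$, expanding $[x^{(n)}+\eta,x^{(n)}+\eta]=[x^{(n)},x^{(n)}]+2[x^{(n)},\eta]+[\eta,\eta]$ and using $[\eta,\eta]\in s^{2(n+1)}\pxg\subset s^{n+2}\pxg$ together with $[x^{(n)},\eta]\equiv[x_0,[\eta]]=d[\eta]$ at order $n+1$, the step condition $[x^{(n+1)},x^{(n+1)}]\in s^{n+2}\pxg^2$ becomes the affine-linear equation $o_{n+1}+2d[\eta]=0$ in $\axg^2$. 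The identity $[x^{(n)},[x^{(n)},x^{(n)}]]=0$ (valid for a degree-one element over $\R$) reduces at order $n+1$ to $d\,o_{n+1}=0$, so $o_{n+1}$ is a $d$-cocycle defining a class $[o_{n+1}]\in H^2(d)$. The step is solvable exactly when $[o_{n+1}]=0$, which is precisely the obstruction to extending one more order; iterating and passing to the limit produces $x\in\MC(\pxg)$ once all obstructions vanish. This is the assertion that $H^2(d)$ is the Gerstenhaber obstruction space.

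Finally I would identify the freedom in the construction with $H^1(d)$, making precise the role of the quotient in \eqref{eq:msa}. Two extensions to order $n+1$ differ at that order by an element of $\axg^1$ that is $d$-closed, since both satisfy $d[\eta]=-\tfrac12 o_{n+1}$, i.e.\ an element of $\ker(d\colon\axg^1\to\axg^2)$. The gauge group $\exp(s\pxg^0[[s]])$ acts by $x\mapsto e^{\mathrm{ad}_\lambda}x$, which preserves $\MC(\pxg)$ and the fiber over $x_0$; taking $\lambda\in s^{n+1}\pxg^0$ changes the order-$(n+1)$ term by $[\lambda,x_0]=-d[\lambda]$, an arbitrary element of $\image(d\colon\axg^0\to\axg^1)$. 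Hence the genuine freedom at each order is $\ker/\image=H^1(d)$, the linearized solution space. I expect the only real work to be bookkeeping: verifying $s$-torsion-freeness and bracket compatibility on $\Gr_s\pxg$, and checking that the cross- and quadratic terms land in strictly higher $s$-order, so that the operator governing every step is precisely $d=[x_0,-]$ on $\axg$ — the very point distinguishing it from the naive $[x_{(0)},-]$ on $\gx$.
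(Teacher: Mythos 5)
Your argument is correct, but note that the paper does not actually prove this lemma internally: its ``proof'' is a one-line deferral to the companion paper \citeone, where the general MC perturbation/obstruction theorem is established. What you have written is a self-contained substitute, and it is the standard Gerstenhaber induction transported to the $s$-adic filtration of the Rees algebra, which is surely the intended content. The two load-bearing points you identify are exactly the right ones: (i) $s$-torsion-freeness of $\pxg\subset\gx[[s]]$ gives $s^{n}\pxg/s^{n+1}\pxg\cong\axg$ compatibly with the bracket, so every step of the induction is governed by $d=[x_0,-]$ on $\axg$ rather than by $[x_{(0)},-]$ on $\gx$ (the whole point of \remarkref{pli}); and (ii) $[x^{(n)},[x^{(n)},x^{(n)}]]=0$ (which holds for degree-one elements over $\R$, as $3[x,[x,x]]=0$ by graded Jacobi and antisymmetry) forces the obstruction to be a $d$-cocycle. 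Your identification of the order-by-order freedom with $H^1(d)$ via the gauge action of $\exp(s\pxg^0[[s]])$ correctly unpacks the quotient in \eqref{eq:msa}. Two small points you should make explicit to fully earn the phrase ``$H^2(d)$ is \emph{the} obstruction space'': the class $[o_{n+1}]\in H^2(d)$ depends a priori on the chosen extension $x^{(n)}$, so one should check (by the usual computation) that it is well defined on the set of solutions modulo $s^{n+1}\pxg$ up to gauge; and the passage to the limit uses that $\pxg=\varprojlim_n\pxg/s^n\pxg$, i.e.\ $s$-adic completeness, which holds here because each $x_{(p)}$ ranges over the fixed subspace $F_p\gx$. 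Neither is a gap in substance.
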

\begin{proof}
  This is a slightly informal 
  version of a theorem proved in {\citeone}.
\qed\end{proof}

%%%%%%%%%%%%%%%%%%%%%%%%%%%%%%%%%%%%%%%%%%%%%%%%%%%%%%%%
So filtered expansions in $\pxg$ are based on some element of $\MC(\axg)$
and are controlled by the associated differential on $\axg$.
This differential is lower triangular.
\begin{lemma}[Block lower triangularity]
  Relative to the grading $\axg = \bigoplus_p F_p \gx/ F_{p-1} \gx$,
  the differential \eqref{eq:da} is a block lower triangular,
  finite square matrix.
\end{lemma}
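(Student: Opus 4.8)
The plan is to trace the weight index $p$ through the Rees-algebra construction and observe that the associated-graded bracket can only raise it, so that $d=[x_0,-]$ is forced to be triangular. First I would make explicit the identification underlying the grading named in the statement. From $s\pxg = \{\sum_{q \geq 1} s^q y_{(q)} \mid y_{(q)} \in F_{q-1}\gx\}$ one reads off $\axg = \pxg/s\pxg \cong \bigoplus_{p \geq 0} F_p\gx/F_{p-1}\gx$ (with $F_{-1}\gx := 0$), the class of $\sum_p s^p x_{(p)}$ being recorded weight-by-weight as $x_{(p)} \bmod F_{p-1}\gx$. Write $\axg_p = F_p\gx/F_{p-1}\gx$ for the weight-$p$ summand. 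This is the grading relative to which triangularity is asserted, and I would stress that it is a separate grading from the homological one that $\End^1$ refers to; the two are orthogonal, the bracket being additive in homological degree independently of the weight bookkeeping below.

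Next I would verify the one structural fact that does the work, namely that this weight grading makes $\axg$ a $\Z_{\geq 0}$-graded gLa, $[\axg_p,\axg_q]\subset\axg_{p+q}$. This is exactly where consistency condition \eqref{eq:cbr} enters. Lifting classes $a\in\axg_p$, $b\in\axg_q$ to $\tilde a\in F_p\gx$, $\tilde b\in F_q\gx$, the bracket in $\pxg$ of the corresponding elements $s^p\tilde a$, $s^q\tilde b$ is $s^{p+q}[\tilde a,\tilde b]$ with $[\tilde a,\tilde b]\in F_{p+q}\gx$, whose reduction modulo $s\pxg$ is the class of $[\tilde a,\tilde b]$ in $\axg_{p+q}$. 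Hence the induced bracket respects weights additively, with no negative shifts.

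With this in hand the triangularity is immediate. Decompose the Maurer-Cartan element as $x_0=\sum_{p\geq 0}(x_0)_p$ with $(x_0)_p\in\axg_p$; every weight appearing is $\geq 0$. For $y\in\axg_q$ we get $dy=[x_0,y]=\sum_{p\geq 0}[(x_0)_p,y]\in\bigoplus_{p\geq 0}\axg_{q+p}$, so $d$ maps $\axg_q$ into $\bigoplus_{r\geq q}\axg_r$. Adopting the convention that the block in row $r$, column $q$ is the component $\axg_q\to\axg_r$, nonzero blocks occur only for $r\geq q$: $d$ is block lower triangular, with diagonal blocks $[(x_0)_0,-]$. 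Finiteness follows from the third bullet of \defref{fi}: there is a finite $p^\ast$ with $F_{p^\ast}\gx=\gx$, so $\axg_p=0$ for $p>p^\ast$ and the weights run over $\{0,\ldots,p^\ast\}$, giving a square $(p^\ast{+}1)\times(p^\ast{+}1)$ array of blocks.

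The only step needing care, though it is not deep, is the weight-additivity $[\axg_p,\axg_q]\subset\axg_{p+q}$, which carries the whole content; the rest is bookkeeping. I would also flag explicitly that the individual blocks may themselves be infinite-dimensional, since $\gx$ need not be finite-dimensional, so that \emph{finite square matrix} refers only to the finite number of block rows and columns, not to the size of each block.
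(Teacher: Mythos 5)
Your proof is correct and follows the same route as the paper, whose entire proof is the one-line observation that the $p$-th subdiagonal is determined by $x_{(p)} \bmod F_{p-1}\gx$ --- exactly the content you spell out via the weight-additivity of the induced bracket on $\axg$ and the decomposition $x_0 = \sum_p (x_0)_p$. Your elaboration of the identification $\axg \cong \bigoplus_p F_p\gx/F_{p-1}\gx$ and of the finiteness from the exhaustion condition in \defref{fi} is just the detail the paper leaves implicit.
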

\begin{proof}
 The $p$-th subdiagonal is determined by $x_{(p)} \bmod F_{p-1}\gx$.
\qed\end{proof}
So $d$ is naturally studied via its spectral sequence,
reviewed in \appendixref{ssfc}.
A particularly intuitive situation occurs if the block diagonal part of
$d$ has nonzero second homology,
but the subdiagonal entries kill off that homology
to give an unobstructed problem.
Situations of this kind motivate the abstract setup.

Similar statements hold for 2-index and many-index filtrations.

%%%%%%%%%%%%%%%%%%%%%%%%%%%%%%%%%%%%%%%%%%%%%%%
\begin{example}[Generalized Moyal product analogous to generalized Kasner spacetime]
  We will make sense of
  BKL's generalized Kasner spacetime as an MC-element
  in an associated graded gLa.
  An analogous algebraic example is
  a generalized Moyal product.
  Let $C^\infty$ be the smooth functions on $\R^n$.
  Let $\gx$ be the gLa of
  polyderivations \cite{k} with the Gerstenhaber bracket,
  so $\gx^i \subset \Hom_\R ((C^\infty)^{\otimes (i+1)},C^\infty)$.
  The elements of $\MC(\gx)$ are the associative maps,
  for example pointwise multiplication.
  Let $F_p \gx$ be the filtration by the total number of derivatives;
  $F_p\gx \neq \gx$ for all $p$
  but we gloss over that.
  Given
  $\pi^{ik} = -\pi^{ki} \in C^\infty$
  the generalized non-constant coefficient Moyal product
  \[
    \textstyle(f,g) \mapsto \sum_p \frac{1}{p!}
  s^{2p}
  \pi^{i_1k_1} \cdots \pi^{i_pk_p}
  (\p_{i_1}\cdots \p_{i_p} f)(\p_{k_1}\cdots \p_{k_p} g)
\]
  is not in $\MC(\pxg)$
but it is in $\MC(\axg)$.
\end{example}
%-------------------------------

%%%%%%%%%%%%%%%%%%%%%%%%%%%%%%%%%%%%%%%%%%%%%%%%%%
%%%%%%%%%%%%%%%%%%%%%%%%%%%%%%%%%%%%%%%%%%%%%%%%%%%%%%%%%%%%%
\section{The BKL filtration, informal motivation} \label{sec:mot}

There are different motivations for the BKL filtration:
\begin{itemize}
  \item \emph{Historic motivation:
    The filtration distills the algebraic content of the BKL proposal.}
    The BKL proposal simplifies
    the Einstein equations in an ad-hoc way,
    in particular by dropping terms containing spatial derivatives.
    What remains are ordinary differential equations
    along distinguished timelike lines
    that foliate the spacetime.
The simplified equations are explicitly solvable.
    We make sense of the ad-hoc simplifications
    by introducing 
    a filtration of $\E$ that we call 
    the BKL filtration. The $\MC(\ax)$ equations
    amount to the simplified equations of BKL.
    %%%%%%%%%%%%%%%%%%%%%%%%%%%%%%%%%%%%%%%%%%%%%%%%%%%%%%%%%%%%%%%%%
  \item \emph{Geometric motivation:
    The filtration encodes a controlled degenerate to nondegenerate perturbation.}
    It is natural
    to try to construct elements in $\MC(\E)$
    by perturbing simpler ones.
    Every element of $\E^1$ defines
    a map $W^\ast \to \Der(\RR)$ of rank four $\RR$-modules,
    conventionally called a frame.
    One can try to
    perturb explicitly solvable MC-elements with a degenerate frame of rank one
    into ones with a nondegenerate frame of rank four,
    and the BKL filtration is designed to do that.
  \item \emph{Algebraic motivation:
      The filtration is functorial given an orthogonal direct sum decomposition 
    of $W$.}
 One can appreciate the simplicity and functoriality of the filtration,
    and how it is constructed using the anchor map $\E \to \Der(\ww)$.
  \item \emph{The ends justify the means.}
    The BKL filtration enables us to construct
    new formal power series solutions,
    in external parameters,
    naturally interpreted as inhomogeneous one bounce spacetimes.
\end{itemize}
%%%%%%%%%%%%%%%%%%%%%%%%%%%%%%%%%%%%
%%%%%%%%%%%%%%%%%%%%%%%%%%%%%%%%%%%%%%%%%%%%%%%%%%%%%%%%%%%%%
\section{The 1-index BKL filtration} \label{sec:fil1}

From this point on, we use notation from \citeone.
The gLa $\E = \L/\I$ is also a graded Lie algebroid
over the graded commutative algebra $\ww$.
Tensor products
are over $\RR$.
Here $W$,
as in {\citeone}, is a rank four free $\RR$-module
with a conformal inner product of signature ${-}{+}{+}{+}$.
A conformally orthonormal basis is denoted
$\theta_0,\ldots,\theta_3$ as in {\citeone}.

We define a filtration on $\E$
that is functorially associated to an
orthogonal direct sum decomposition of $W$ into
two 2-dimensional submodules, one timelike 
meaning it has signature ${-}{+}$,
one spacelike meaning it has signature ${+}{+}$.

The key is the consistency \eqref{eq:cbr} with the bracket,
a quadratic system of inequalities.
Some filtrations come from a filtered representation space,
in fact, a gLa representation
$\E \to \End X$
with a filtration of $X$ by graded subspaces
induces a gLa filtration of $\E$.
We do something like this, 
except that we use filtrations with compatible $\Z_2$-grading,
and our notion of induced filtration
also depends on the $\Z_2$-grading.

\begin{definition}[Filtration with $\Z_2$-grading] \label{def:z2f}
  Let $X$ be a $\Z$-graded vector space.
  A filtration with compatible $\Z_2$-grading is
  data $X_0$, $X_1$ and $(F_p X)_{p \geq 0}$ where:
  \begin{itemize}
    \item $X_0$, $X_1$, $F_pX$ are graded subspaces of $X$,
      with respect to its $\Z$-grading
    \item $X = X_0 \oplus X_1$ as an internal direct sum
    \item $F_pX \subset F_{p+1}X$, and $F_p X = X$ for some finite $p$
    \item $F_pX = F_{\lhd p}X \oplus F_{\lhd p-1}X$
      where $F_{\lhd p}X = F_pX \cap X_{p \bmod 2}$
  \end{itemize}
  It is equivalently determined by a collection $(F_{\lhd p} X)_{p \geq 0}$
  with suitable properties.
  Beware that the $\Z_2$-grading
  and the original $\Z$-grading are separate data.
\end{definition}
Below $\End(X)$ is the gLa
of $\Z$-graded endomorphisms with the graded commutator.
It also inherits a $\Z_2$-grading from $X$.
\begin{lemma}[Induced gLa filtration]\label{lemma:z2find}
  Suppose
  $\gx$ is a gLa that also has a $\Z_2$-grading $\gx =  \gx_0 \oplus \gx_1$
  compatible with the bracket and the $\Z$-grading.
  Suppose $X$ is as in \defref{z2f}.
  Given a gLa representation $\gx \to \End(X)$
  that respects the $\Z_2$-gradings,
  then there is on $\gx$
  an induced filtration as in  \defref{z2f} determined by
\[
    F_{\lhd p} \gx
    \;=\;
    \{ g \in \gx_{p \bmod 2}
    \mid \forall q \geq 0:\; g(F_{\lhd q}X) \subset F_{\lhd q+p}X
  \}
\]
Furthermore,
$[F_{\lhd p} \gx,F_{\lhd q} \gx] \subset F_{\lhd p+q}\gx$
so both the $\Z_2$-grading
and the induced filtration $F_p \gx$ are consistent with the bracket.
\end{lemma}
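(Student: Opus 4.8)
The plan is to take the formula for $F_{\lhd p}\gx$ as the definition, adopt the convention $F_{\lhd p}\gx = 0$ for $p<0$, and set $F_p\gx = F_{\lhd p}\gx \oplus F_{\lhd p-1}\gx$. I would then verify the four bullets of \defref{z2f} for the data $\gx_0,\gx_1,(F_p\gx)$, and finally the bracket inclusion for the $\lhd$-pieces, from which consistency of both structures follows. Two preliminary observations make most of this bookkeeping. First, since $F_{\lhd p}\gx \subset \gx_{p\bmod 2}$ and $F_{\lhd p-1}\gx \subset \gx_{(p-1)\bmod 2}$ lie in complementary $\Z_2$-summands of $\gx = \gx_0\oplus\gx_1$, the sum defining $F_p\gx$ is automatically internal and direct, and $F_{\lhd p}\gx = F_p\gx \cap \gx_{p\bmod 2}$ holds by construction, so the last bullet is free. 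Second, because the representation respects both gradings, any $g\in\gx_{p\bmod 2}$ sends $X_{q\bmod 2}\supset F_{\lhd q}X$ into $X_{(p+q)\bmod 2}$, so the defining condition $g(F_{\lhd q}X)\subset F_{\lhd q+p}X$ only constrains the filtration level, not the parity.

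Next I would dispatch the routine axioms. That each $F_{\lhd p}\gx$ is a \emph{graded} subspace follows because the representation is $\Z$-graded and each $F_{\lhd q}X$, $F_{\lhd q+p}X$ is $\Z$-graded: writing $g=\sum_n g_n$ and testing the condition on $\Z$-homogeneous $x\in F_{\lhd q}X$, the requirement $g(x)\in F_{\lhd q+p}X$ splits degreewise into $g_n(x)\in F_{\lhd q+p}X$, so $g\in F_{\lhd p}\gx$ iff every $g_n\in F_{\lhd p}\gx$. For nesting I would prove the \emph{same-parity} inclusion $F_{\lhd p}\gx \subset F_{\lhd p+2}\gx$: it reduces to $F_{\lhd r}X\subset F_{\lhd r+2}X$ in $X$, which holds since $F_rX\subset F_{r+2}X$ and the two intersected parities agree. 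This yields $F_p\gx \subset F_{p+1}\gx$ directly. For stabilization, if $F_{p_0}X=X$ then for $q+p\ge p_0$ one has $F_{\lhd q+p}X = X_{(q+p)\bmod 2}$, and by the parity remark above the condition on $g$ becomes vacuous; hence $F_{\lhd p}\gx = \gx_{p\bmod 2}$ for $p\ge p_0$ and $F_p\gx = \gx$ for $p\ge p_0+1$.

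The only genuine content is the bracket inclusion $[F_{\lhd p}\gx,F_{\lhd q}\gx]\subset F_{\lhd p+q}\gx$, and it is a one-line composition estimate. For $g\in F_{\lhd p}\gx$, $h\in F_{\lhd q}\gx$, the representation sends $[g,h]$ to the graded commutator $g\circ h - (-1)^{|g||h|} h\circ g$ in $\End(X)$. Given $x\in F_{\lhd r}X$, applying $h$ lands in $F_{\lhd r+q}X$ and then $g$ lands in $F_{\lhd r+q+p}X$; the opposite order lands in $F_{\lhd r+p+q}X$ as well; so the commutator maps $F_{\lhd r}X$ into $F_{\lhd r+p+q}X$, giving $[g,h]\in F_{\lhd p+q}\gx$ (the parity $(p+q)\bmod 2$ being guaranteed by the hypothesis that the $\Z_2$-grading is bracket-compatible). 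Consistency of $F_p\gx$ then follows by expanding $[F_p\gx,F_q\gx]$ into the four brackets of $\lhd$-pieces: three of them land in $F_{\lhd p+q}\gx$ or $F_{\lhd p+q-1}\gx$, and the fourth in $F_{\lhd p+q-2}\gx\subset F_{\lhd p+q}\gx$ by same-parity nesting, all of which sit inside $F_{p+q}\gx = F_{\lhd p+q}\gx\oplus F_{\lhd p+q-1}\gx$.

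I do not expect a serious obstacle; the work is almost entirely in keeping the $\Z_2$-grading and the $\Z$-grading separate, exactly as \defref{z2f} warns. The one point I would check carefully is that formulating the shift condition against the $\lhd$-pieces $F_{\lhd q}X$ (rather than the full $F_qX$) is the right and sufficient notion here, and that no faithfulness of the representation is implicitly used — it is not, since $F_{\lhd p}\gx$ is defined directly through the action. I would also confirm the boundary conventions ($F_{\lhd -1}\gx=0$, small $p$, and the exact threshold in the stabilization step) so that the nesting and direct-sum identities hold without off-by-one errors.
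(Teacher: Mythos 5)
Your proof is correct, and it supplies exactly the bookkeeping that the paper leaves out (the paper's proof of this lemma is literally ``Omitted,'' with only the remark that it matters that the filtration on $X$ starts at $p=0$ and exhausts at finite $p$). Your stabilization step is precisely where that remark is cashed in: you need $q\geq 0$ so that $q+p\geq p_0$ holds for all $q$ once $p\geq p_0$, making the defining condition vacuous and forcing $F_{\lhd p}\gx=\gx_{p\bmod 2}$, hence $F_p\gx=\gx$ for $p\geq p_0+1$; without those two hypotheses the induced filtration would fail the exhaustion axiom. The remaining points --- the parity observation reducing the condition to a filtration-level condition, gradedness via degreewise splitting, same-parity nesting $F_{\lhd p}\gx\subset F_{\lhd p+2}\gx$, the composition estimate for the graded commutator, and the four-term expansion showing $[F_p\gx,F_q\gx]\subset F_{p+q}\gx$ --- are all sound, and you are right that no faithfulness of the representation is needed.
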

\begin{proof}
  Omitted.
  The fact that the filtration
  on $X$ only starts at $p=0$
  and exhausts at a finite $p$ is important.
\qed\end{proof}
%%%%%%%%%%%%%%%%%%%%%%%%%%%%%%%%%%%%%%%%%%%%%%%%%%%%%%%%%%%%%%%%%%%%%%%%%%%%
\begin{definition}[Orthogonal decomposition of $W$] \label{def:odw}
  This is a decomposition
 \[
   W = W_0 \oplus W_1
 \]
 with
 $W_0 = \RR \theta_0 \oplus \RR \theta_1$
 and
 $W_1 = \RR \theta_2 \oplus \RR \theta_3$
 for some conformally orthonormal basis.
 Set
 $F_{\lhd 0} W = W_0$ and $F_{\lhd 1} W = W_1$
 to obtain a filtration as in \defref{z2f}.
\end{definition}

One could define a groupoid of rank 4 vector bundles
with both a conformal inner product and an orthogonal decomposition
of $W$. The BKL filtration defines
a functor out of this groupoid, into,
for example, the category of filtered gLa.

\begin{lemma}[The 1-index BKL filtration] \label{lemma:fil1}
  For every orthogonal decomposition $W = W_0 \oplus W_1$
  there are associated
  filtrations as in \defref{z2f} as follows:
  \begin{itemize}
    \item
  On $\ww$
  the smallest such filtration
  that respects the unital algebra structure
  and the injection $W \inj \ww$. So
  $F_{\lhd p}(\ww)$ is $\RR$-spanned by
  any monomial in $\theta_0,\theta_1,\theta_2,\theta_3$,
  such that the total degree in $\theta_2,\theta_3$
  is at most $p$ and is
  even respectively odd depending on whether $p$ is even respectively odd.
  So $F_2(\ww) = \ww$.
\item On $\ww \otimes \Omega$ the filtration
  $F_{\lhd p} (\ww \otimes \Omega) = F_{\lhd p}(\ww) \otimes \Omega$.
  Here $\Omega = \wedge^4 W$ \citeone.
\item On $\E$ the filtration induced by
  the $\Z_2$-grading $\E = \E_0 \oplus \E_1$
  coming from $W = W_0 \oplus W_1$,
  together with
  \lemmaref{z2find} via
  the natural gLa representation \citeone
  \begin{equation}\label{eq:erep}
    \E \to \End(\ww) \oplus \End(\ww \otimes \Omega)
  \end{equation}
  The representation on the first summand is
  the anchor $\E \to \Der(\ww)$.
  The representation descends from a
  representation of $\L = \ww \otimes \CDerEnd(W)$,
  given explicitly by
  $\omega \delta \mapsto (\omega'z
  \mapsto (\omega \delta(\omega'))z + (\omega \omega') \delta(z))$
  with $z=1$ respectively $z \in \Omega$.
  \end{itemize}
  We call $(F_p\E)_{p \geq 0}$ the BKL filtration
  associated to the decomposition $W = W_0 \oplus W_1$.
\end{lemma}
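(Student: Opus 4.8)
The plan is to establish the three filtrations in turn: the first two by direct computation with monomials, and the third by checking the hypotheses of \lemmaref{z2find}. Throughout write $\deg_{23}$ for the total exterior degree in $\theta_2,\theta_3$. For the filtration on $\ww$, I would first equip $\ww$ with the $\Z_2$-grading in which a monomial has degree $\deg_{23}\bmod 2$; this is compatible with the exterior $\Z$-grading (the two degrees are read off independently from a monomial) and with the product (exterior multiplication either annihilates or adds $\deg_{23}$). I then verify that the explicit subspaces $F_{\lhd p}(\ww)=\SPAN$ of monomials with $\deg_{23}\le p$ and $\deg_{23}\equiv p\pmod 2$ satisfy the four axioms of \defref{z2f}; in particular $F_p(\ww)=F_{\lhd p}(\ww)\oplus F_{\lhd p-1}(\ww)$ is the span of all monomials with $\deg_{23}\le p$, so $F_2(\ww)=\ww$ since $\deg_{23}\le 2$. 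For minimality, any filtration that is multiplicative, contains the unit in $F_{\lhd 0}$, and respects $W\inj\ww$ must (by \defref{odw}) place $1,\theta_0,\theta_1$ in $F_{\lhd 0}$ and $\theta_2,\theta_3$ in $F_{\lhd 1}$; multiplicativity then forces every monomial with $\deg_{23}=m$ into $F_{\lhd m}$, so such a filtration contains the explicit one. Since the explicit filtration is itself multiplicative and restricts correctly to $W$, it is the smallest.

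The filtration on $\ww\otimes\Omega$ is immediate: $\Omega=\wedge^4 W=\theta_0\theta_1\theta_2\theta_3$ is one-dimensional with $\deg_{23}=2$, hence $\Z_2$-even, so the functor $-\otimes\Omega$ transports the $\Z_2$-grading and the filtration from $\ww$ verbatim, preserving the axioms of \defref{z2f}. For the filtration on $\E$, I would apply \lemmaref{z2find} with $\gx=\E$ and $X=\ww\oplus(\ww\otimes\Omega)$ carrying the direct-sum filtration and $\Z_2$-grading from the first two parts, and three points must be checked. (i) The splitting $W=W_0\oplus W_1$ grades $\CDerEnd(W)$, with degree $0$ for the part preserving the splitting and degree $1$ for the part interchanging $W_0$ and $W_1$; this is compatible with the commutator. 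Tensoring with the $\theta_{23}$-parity on $\ww$ grades $\L=\ww\otimes\CDerEnd(W)$, and I must check that $\I$ is $\Z_2$-homogeneous, so the grading descends to $\E=\L/\I$ as a $\Z_2$-grading $\E=\E_0\oplus\E_1$ compatible with the bracket and the homological $\Z$-grading. (ii) That $X$ is a filtration with $\Z_2$-grading is the content of the first two parts. (iii) That \eqref{eq:erep} respects the $\Z_2$-gradings: a degree-$0$ element of $\CDerEnd(W)$ acts on $\ww$ and on $\Omega$ preserving $\deg_{23}$-parity, while a degree-$1$ element flips it, so $\omega\delta$ acts with the $\Z_2$-degree of $\omega\delta$ on both the $z=1$ and the $z\in\Omega$ summand. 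Granting (i)--(iii), \lemmaref{z2find} yields the induced filtration $(F_p\E)_{p\ge 0}$ and guarantees consistency with the bracket, $[F_{\lhd p}\E,F_{\lhd q}\E]\subset F_{\lhd p+q}\E$.

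The main obstacle is step (iii), together with the homogeneity of $\I$ in step (i): confirming that the representation $\omega\delta\mapsto(\omega'z\mapsto(\omega\delta(\omega'))z+(\omega\omega')\delta(z))$ is $\Z_2$-homogeneous of the expected degree, and that the relations cutting out $\E$ are $\Z_2$-homogeneous. Both reduce to the clean behaviour of $\CDerEnd(W)$ and of the defining relations of $\E$ in \citeone\ under the orthogonal splitting $W=W_0\oplus W_1$. Everything else is bookkeeping with the degrees $\deg_{23}$.
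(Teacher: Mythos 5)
Your proposal is correct and follows essentially the same route as the paper, whose entire proof is the one-line remark that ``the $\Z_2$-gradings are natural and are compatible''; your steps (i)--(iii) are exactly the compatibility checks that remark is asserting, spelled out in detail. The extra bookkeeping (minimality of the filtration on $\ww$, homogeneity of $\I$, parity of the action on $\Omega$) is all consistent with the construction in the lemma statement and with \lemmaref{z2find}.
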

\begin{proof}
  The $\Z_2$-gradings are natural and are compatible.
\qed\end{proof}
%%%%%%%%%%%%%%%%%%%%%%%%%%%%%%%%%%%%%%%%%%%%%%%%
One can extend the filtration to a graded Lie algebroid filtration of $\E$
  over the filtered graded commutative algebra $\ww$,
with filtered module and anchor maps.

  We have defined the BKL filtration via a representation,
  which makes it look natural.
  One would expect that this representation
  should play an important role in using the BKL filtration.
  Actually, we proceed with a more explicit description.
  
%%%%%%%%%%%%%%%%%%%%%%%%%%%%%%%%%%%%%%%%%%%%%%%%%%%%%%%%%%%%%%%%%%%%%%%%%%%%%
 \begin{lemma}[Decompositions of $\E$]
   For every conformally orthonormal basis of $W$,
   there are graded free $\RR$-submodules\footnote{%
     We are abusing notation of {\citeone}.
     Here $\E_G$ is a new symbol, not immediately related
     to the output of the gauge fixing algorithm in {\citeone},
     though it can be obtained from it by a limiting procedure.
   }
   $G_{\alpha} \E_G \subset \E$ defined by Table \ref{table:EG},
   with $\alpha = p_1p_2p_3$.
   The ranks are given in the table.
   Set
   \begin{align*}
      \E_G
      &\;=\; \textstyle\bigoplus_\alpha G_\alpha \E_G\\
      G_\alpha\E
      &\;=\; G_{\alpha}\E_G \oplus \theta_0 (G_\alpha\E_G)
   \intertext{%
     These are actual internal direct sums in $\E$.
    Furthermore:}
     \E & \;=\; \E_G \oplus \theta_0 \E_G\\
      \E & \;=\; \textstyle\bigoplus_\alpha G_\alpha \E
   \end{align*}
   Here $\ww$-module multiplication by $\theta_0$ acts injectively.
   All direct summands are free graded $\RR$-submodules.
   The $\Z_{\geq 0}^3$-grading $G_{\alpha}\E$
   is invariant under scalings of the conformally orthonormal basis by a positive function.
\end{lemma}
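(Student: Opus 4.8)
The plan is to reduce the entire lemma to the construction of a single explicit graded $\RR$-basis of $\E$ that is simultaneously adapted to the multidegree $\alpha = p_1p_2p_3$ and to the binary split by the number of $\theta_0$-factors. Once such a basis is produced, every assertion in the statement — the internal direct sums, the freeness of each summand, the rank counts, and the injectivity of multiplication by $\theta_0$ — becomes a formal consequence of partitioning a basis into disjoint, exhaustive blocks. So the real content is exhibiting the right basis and checking it descends to the quotient.

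First I would recall from \citeone{} the presentation $\E = \L/\I$ with $\L = \ww \otimes \CDerEnd(W)$, and fix the representatives listed in Table \ref{table:EG}. Relative to the conformally orthonormal basis $\theta_0,\ldots,\theta_3$, every element of $\L$ is an $\RR$-combination of monomials $\theta_{i_1}\cdots\theta_{i_k}\otimes\delta$ with $\delta$ running over a basis of $\CDerEnd(W)$. To each such monomial I attach its $\theta_0$-parity (either $0$ or $1$, since $\theta_0\wedge\theta_0=0$) together with the multidegree $\alpha=p_1p_2p_3$ prescribed by the three decompositions underlying $F$, $F'$, $F''$, i.e.\ exactly as recorded in Table \ref{table:EG}. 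The crux of this step is to confirm that $\I$ is a homogeneous submodule for the combined $\Z_{\geq 0}^3\times\Z_2$ grading, so that the grading descends to $\E$ and each $G_\alpha\E_G$ is well defined in the quotient; this requires the explicit generators of $\I$ from \citeone{}.

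Next I would verify that the table entries form an $\RR$-basis of $\E$. Since $\E$ is a free graded $\RR$-module, the robust way to do this over $\RR=C^\infty$ is to express the table entries in terms of a reference $\RR$-basis of $\E$ and check that the transition matrix is invertible over $\RR$, with nowhere-vanishing determinant; this avoids the pitfall that over $C^\infty$ linear independence together with a correct count need not yield a basis. Granting the basis, the decompositions $\E_G=\bigoplus_\alpha G_\alpha\E_G$, $\E=\E_G\oplus\theta_0\E_G$, and $\E=\bigoplus_\alpha G_\alpha\E$ hold because the relevant blocks of basis vectors are disjoint and exhaust the basis, and each summand is free on its block, with ranks read off by counting entries in each homological degree. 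Injectivity of $\theta_0\cdot$ on $\E_G$ is the statement that it carries the $\theta_0$-free block to a linearly independent family; this follows from the basis property and simultaneously identifies $\theta_0\E_G$ as free of the same rank as $\E_G$. Note that this injectivity reduces, in the quotient, to the implication $\theta_0 x\in\I\Rightarrow x\in\I$ for $\theta_0$-free representatives $x$, which is again the $\theta_0$-parity homogeneity of $\I$.

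The main obstacle I expect is precisely this descent to $\E$: one must confirm that $\I$ respects the full multigrading and that the chosen representatives stay independent after passing to the quotient, since relations in $\I$ could a priori merge entries of different multidegree or collapse $\theta_0$-multiples. Everything hinges on the explicit description of $\I$ from \citeone{}. By contrast the final scaling invariance is soft: a rescaling $\theta_i\mapsto\lambda\theta_i$ by a positive function $\lambda\in\RR$ multiplies a degree-$k$ monomial by the unit $\lambda^k$ and, after tracking how $\CDerEnd(W)$ transforms, may introduce extra terms that remain inside $G_\alpha\E$; because $\lambda$ is a unit the $\RR$-span $G_\alpha\E$ is unchanged, which is exactly why the coarser combined grading $G_\alpha\E$, rather than the finer $G_\alpha\E_G$, is the invariant object.
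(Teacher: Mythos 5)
Your overall strategy coincides with the paper's: the paper's own proof is extremely terse, conceding only that ``linear independence is self-evident in $\L$ because it is a product, but not self-evident in $\E$,'' and then devoting its remaining lines to the scaling-invariance claim. You correctly isolate the same two nontrivial points -- descent of independence through the quotient $\E=\L/\I$, and the invariance of $G_\alpha\E$ (but not $G_\alpha\E_G$) under conformal rescaling -- and your suggestion to verify a nowhere-vanishing transition determinant over $\RR=C^\infty$ rather than relying on independence plus a rank count is a genuinely useful precaution that the paper does not spell out.

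Two caveats. First, your framing of the decomposition as ``partitioning a monomial basis into disjoint blocks by $\theta_0$-parity and multidegree'' does not literally match Table \ref{table:EG}: entries such as $\theta_0\sigma_0+\theta_1\sigma_1\in G_{000}\E_G$ are not $\theta_0$-parity-homogeneous as monomial combinations, and the multidegree is assigned via the action on $\ww$ through the representation \eqref{eq:erep}, not by counting $\theta$-factors. The splitting $\E=\E_G\oplus\theta_0\E_G$ is a nontrivial choice of complement forced precisely by the relations in $\I$ (which is why the table pairs $\theta_0\sigma_0$ with $\theta_i\sigma_i$ rather than listing $\theta_i\sigma_i$ alone); you do acknowledge that everything hinges on the explicit $\I$, so this is a repair of the framing rather than a fatal gap, but as written the ``disjoint exhaustive blocks'' reduction would not go through. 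Second, on scaling invariance your account is directionally right but misses the specific mechanism the paper identifies: under a positive rescaling the basis-dependent injection $\Der(\RR)\inj\CDerEnd(W)$ shifts by terms proportional to $\sigma_0$, so that e.g.\ $\theta_2\theta_3\Der(\RR)$ acquires $\theta_2\theta_3\sigma_0$ contributions, which lie in $G_{211}\E=G_{211}\E_G\oplus\theta_0(G_{211}\E_G)$ but not in $G_{211}\E_G$; this is exactly why only the coarser $G_\alpha\E$ is invariant, as you correctly anticipate.
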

%%%%%%%%%%%%%%%%%%%%%%
%%%%%%%%%%%%%%%%%%%%%%
\newcommand{\spx}{\,\,\,}
 \begin{table}
   \footnotesize{
   \[
   \begin{array}{|l|l|l|}
     \hline
     \alpha = p_1p_2p_3 & \text{$G_\alpha \E_G \subset \E$ is the $\RR$-span of these elements} & \text{$\RR$-rank} \\
     & 
     \text{note that
     $G_\alpha \E = G_\alpha \E_G \oplus \theta_0(G_\alpha \E_G)$
     }
     &\\
     \hline
     \hline
     000 & \Der(\RR),\spx\sigma_0,\spx\theta_0 \sigma_0 + \theta_1\sigma_1,\spx
             \theta_0\sigma_0 + \theta_2\sigma_2,\spx \theta_0\sigma_0 + \theta_3\sigma_3, & 9\\
             & 
             \theta_2\theta_3\sigma_{23}+\theta_3\theta_1\sigma_{31}+\theta_1\theta_2\sigma_{12}
             + 2\theta_0\theta_1\sigma_1+2\theta_0\theta_2\sigma_2+2\theta_0\theta_3\sigma_3 & \\
     \hline
     200 & -\theta_1\sigma_{23} + \theta_2\sigma_{31} + \theta_3\sigma_{12} & 1\\
     \hline
     020 & +\theta_1\sigma_{23} - \theta_2\sigma_{31} + \theta_3\sigma_{12} & 1\\     
     \hline
     002 & +\theta_1\sigma_{23} + \theta_2\sigma_{31} - \theta_3\sigma_{12} & 1\\
     \hline
     011 & \sigma_1,\spx
             \sigma_{23},\spx
             \theta_1 \Der(\RR),\spx
             \theta_0\sigma_1+\theta_1\sigma_0,\spx
             \theta_2\sigma_3+\theta_3\sigma_2, & 11\\
           & \theta_3\sigma_{31},\spx
             \theta_2\sigma_{12},\spx
             \theta_0\theta_2\sigma_{12}+\theta_1\theta_2\sigma_2 & \\
             \hline
     101 & \sigma_2,\spx
             \sigma_{31},\spx
             \theta_2 \Der(\RR),\spx
             \theta_0\sigma_2+\theta_2\sigma_0,\spx
             \theta_3\sigma_1+\theta_1\sigma_3, & 11\\
           & \theta_1\sigma_{12},\spx
             \theta_3\sigma_{23},\spx
             \theta_0\theta_3\sigma_{23}+\theta_2\theta_3\sigma_3 & \\
             \hline
     110 & \sigma_3,\spx
             \sigma_{12},\spx
             \theta_3 \Der(\RR),\spx
             \theta_0\sigma_3+\theta_3\sigma_0,\spx
             \theta_1\sigma_2+\theta_2\sigma_1, & 11\\
           & \theta_2\sigma_{23},\spx
             \theta_1\sigma_{31},\spx
             \theta_0\theta_1\sigma_{31}+\theta_3\theta_1\sigma_1 & \\
             \hline
             %-----------
     211 & \theta_2\sigma_3 - \theta_3\sigma_2,\spx
             \theta_2\theta_3 \Der(\RR), & 7\\
           & \theta_0\theta_2\sigma_3-\theta_0\theta_3\sigma_2-2\theta_2\theta_3\sigma_0,\spx
             \theta_0\theta_2\sigma_3+\theta_0\theta_3\sigma_2-2\theta_1\theta_2\sigma_{31} & \\
             \hline
     121 & \theta_3\sigma_1 - \theta_1\sigma_3,\spx
             \theta_3\theta_1 \Der(\RR), & 7\\
           & \theta_0\theta_3\sigma_1-\theta_0\theta_1\sigma_3-2\theta_3\theta_1\sigma_0,\spx
             \theta_0\theta_3\sigma_1+\theta_0\theta_1\sigma_3-2\theta_2\theta_3\sigma_{12} & \\
             \hline
     112 & \theta_1\sigma_2 - \theta_2\sigma_1,\spx
             \theta_1\theta_2 \Der(\RR), & 7\\
           & \theta_0\theta_1\sigma_2-\theta_0\theta_2\sigma_1-2\theta_1\theta_2\sigma_0,\spx
             \theta_0\theta_1\sigma_2+\theta_0\theta_2\sigma_1-2\theta_3\theta_1\sigma_{23} & \\
             \hline
     222 & \theta_0\theta_1\sigma_{23}+\theta_0\theta_2\sigma_{31}
             + \theta_0\theta_3\sigma_{12}
             - 2 \theta_2\theta_3\sigma_1
             - 2 \theta_3\theta_1\sigma_2
             - 2 \theta_1\theta_2\sigma_3, & 6 \\
           & \theta_1\theta_2\theta_3 \Der(\RR), & \\
           & \theta_0\theta_2\theta_3\sigma_1
            +\theta_0\theta_3\theta_1\sigma_2
            +\theta_0\theta_1\theta_2\sigma_3
            +3\theta_1\theta_2\theta_3\sigma_0 &
     \\
     \hline
     \text{else} & \text{none} & 0\\
     \hline
 \end{array}
 \]}
 \caption{%%%
 Definition of $G_\alpha\E_G$
 associated to a 
 conformally orthonormal basis $\theta_0,\theta_1,\theta_2,\theta_3$.
 We omit wedges, so $\theta_0\theta_1 = \theta_0 \wedge \theta_1$.
 The basis-dependent injection $\Der(\RR) \inj \CDerEnd(W)$ is implicit \citeone.
 All elements in the table are
 elements of $\E$ via the canonical surjection $\mathcal{L} \surj \E$.
 }\label{table:EG}
 \end{table}

%%%%%%%%%%%%%%%%%%%%%%
%%%%%%%%%%%%%%%%%%%%%%
\begin{proof}
  The direct sums are well-defined;
  linear independence is self-evident in $\L$ because it is a product,
  but not self-evident in $\E$.
  The $G_{\alpha}\E_G$
  are not invariant under conformal scalings of the basis,
  since the injection $\Der(\RR) \inj \CDerEnd(W)$
  changes by terms proportional to $\sigma_0$.
  Hence in $G_{211}\E_G$
  the term $\theta_2\theta_3\Der(\RR)$
  generates new terms $\theta_2\theta_3\sigma_0$;
  more details will be in \theoremref{g13}.
  While such terms are not contained in $G_{211}\E_G$,
  they are in $G_{211}\E$. This example generalizes.
\qed\end{proof}
The space $\E_G$
will be used for calculations,
at the moment it plays a minor role.
%%%%%%%%%%%%%%%%%%%%%%%%%%%%%%%%%%%%%%%%%%%%%%%%%%%%%%%%%%%%%%%%%%
\begin{lemma}[The 1-index BKL filtration, 
       explicit definition] \label{lemma:fil1next}
  The 1-index BKL filtration associated to
  $W_0 = \RR \theta_0 \oplus \RR \theta_1$
  and $W_1 = \RR \theta_2 \oplus \RR \theta_3$ is
  \[
    F_{\lhd p} \E = \textstyle \bigoplus_{k \geq 0}
    \bigoplus_{p_2,p_3} G_{p-2k,p_2,p_3} \E
  \]
\end{lemma}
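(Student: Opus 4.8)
The plan is to unwind the abstract definition of the filtration from \lemmaref{fil1} and match it against Table \ref{table:EG}. By \lemmaref{fil1}, $F_{\lhd p}\E$ is the filtration induced, through \lemmaref{z2find} and the representation \eqref{eq:erep}, by the $\Z_2$-graded filtration of $X = \ww\oplus(\ww\otimes\Omega)$ that assigns to each monomial its \emph{weight}, meaning its total degree in $\theta_2,\theta_3$. Since $W_1 = \RR\theta_2\oplus\RR\theta_3$ has rank two, this weight takes only the values $0,1,2$; the $\Z_2$-grading of $\E$ is exactly the weight modulo $2$; and the induced filtration degree of $g\in\E$ is the largest weight increase produced by $g$ in its action on $X$. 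So the statement reduces to reading off, for each generator in Table \ref{table:EG}, the top weight-shift of its action on $\ww$ and on $\ww\otimes\Omega$, and checking it equals the first index $p_1$.

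The first thing I would record is that, because the weight on $X$ never exceeds $2$, the filtration $F_{\lhd q}X$ stabilizes: it is all even-weight vectors for every even $q\geq2$ and all odd-weight vectors for every odd $q\geq1$. Substituting this into the formula of \lemmaref{z2find} shows that every odd $g$ automatically obeys $g(F_{\lhd q}X)\subset F_{\lhd q+1}X$ and every even $g$ automatically obeys $g(F_{\lhd q}X)\subset F_{\lhd q+2}X$, so $F_{\lhd p}\E = \E_{p\bmod2}$ for all $p\geq1$. One checks directly that the claimed right-hand side $\bigoplus_{k}\bigoplus_{p_2,p_3}G_{p-2k,p_2,p_3}\E$ also equals $\E_{p\bmod2}$ for $p\geq1$, using that the first index ranges only over $\{0,1,2\}$. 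The whole statement therefore collapses to the single level $p=0$, where I must prove the identity
\[
  F_{\lhd0}\E = G_{000}\E\oplus G_{020}\E\oplus G_{002}\E\oplus G_{011}\E,
\]
whose right-hand side I abbreviate $A$. By the same stabilization, the only nontrivial constraint in \lemmaref{z2find} at $p=0$ is that $g$ not send weight $0$ into weight $2$, so $F_{\lhd0}\E$ is precisely the kernel of the weight-raising symbol $\E_0\to\Hom(\wedge W_0,\theta_2\theta_3\wedge W_0)$ together with its analogue on $\ww\otimes\Omega$, sending $g$ to the weight-$(+2)$ part of its action.

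For ``$\supseteq$'' I would verify that this symbol annihilates $A$, i.e.\ that every generator of the four blocks with $p_1=0$ acts without raising the weight. The generators are exactly the weight-preserving combinations: $\sigma_0,\sigma_1,\sigma_{23}$ and the $\Der(\RR)$ terms preserve weight outright, while combinations such as $\theta_1\sigma_{23}-\theta_2\sigma_{31}+\theta_3\sigma_{12}$ in $G_{020}$ and $\theta_2\sigma_3+\theta_3\sigma_2$ in $G_{011}$ have their a priori weight-$(+2)$ parts cancel, as a one-line evaluation on $\theta_0,\theta_1$ confirms. The basis dependence of $\Der(\RR)\inj\CDerEnd(W)$ is harmless, since it only adds multiples of the weight-preserving $\sigma_0$, and on the $\ww\otimes\Omega$ summand the extra term $(\omega\omega')\delta(z)$ is again weight-preserving for these generators because $\omega$ has weight $0$ or the relevant $\delta$ is traceless. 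For ``$\subseteq$'' I would show the symbol is injective on the complementary even blocks $B = G_{200}\E\oplus G_{211}\E\oplus G_{222}\E$; granting this and the splitting $\E_0 = A\oplus B$, modularity of the lattice of subspaces forces $F_{\lhd0}\E = A$.

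The main obstacle is precisely this injectivity on $B$: I must rule out that some nonzero combination of the generators spanning $B$ acquires a vanishing weight-$(+2)$ symbol through an accidental cancellation across the blocks. The opposite signs already visible in the table are the mechanism preventing this --- for instance $\theta_2\sigma_3-\theta_3\sigma_2$ in $G_{211}$ raises the weight by two, with the sign opposite to its weight-preserving cousin in $G_{011}$ --- but to be safe I would compute the symbol as an explicit linear map from $B$ into the weight-$2$ operators on $X$, a finite-dimensional rank computation, and verify it is injective. The $\ww\otimes\Omega$ summand is genuinely needed, since the anchor action on $\ww$ alone is not faithful; care is required to track the $\delta(z)$ contributions and the $\sigma_0$ corrections, but these enter triangularly. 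Finally, because multiplication by $\theta_0$ is injective and weight-preserving, the $\theta_0(G_\alpha\E_G)$ halves contribute in parallel and introduce no new degeneracy.
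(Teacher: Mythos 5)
Your proposal is correct and follows essentially the same route as the paper's (sketched) proof: match the $\Z_2$-parities of the table's blocks against the grading from $W_0 \oplus W_1$, check that the $p_1=0$ generators act without raising the $\theta_2,\theta_3$-weight, and rule out cancellations on the complementary $p_1$-even blocks, using the $\ww \otimes \Omega$ summand where the anchor alone is not faithful. Your added observation that the filtration on $X$ stabilizes, so that the whole statement collapses to the single level $p=0$, is a clean packaging of what the paper leaves implicit; like the paper, you leave the final injectivity verification as a routine finite-dimensional computation.
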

%%%%%%%%%%%%%%%%%%%%%%%%%%%%%%%%%%%%%%%%%%%%%%%%%%%%%%%%%%%%%%%%%
\begin{proof}
  The $\Z_2$-grading is important because,
  for example, $\theta_0\theta_3\sigma_1+\theta_0\theta_1\sigma_3
  \in G_{101}\E$
  is mapped to the zero map under \eqref{eq:erep}
  and hence would be a candidate for $F_0 \E$
  if it were not for the $\Z_2$-grading, which puts it in $F_{\lhd 1}\E$.
  One checks that in Table \ref{table:EG},
  the entries with $p_1$ even (odd) are in the even (odd)
  sector of $\E = \E_0 \oplus \E_1$
  for its $\Z_2$-grading associated to $W_0 \oplus W_1$.
  For example, $A_{\pm} = \theta_2 \sigma_3 \pm \theta_3\sigma_2$
  are even and indeed appear in the table with $p_1$ even.
  Note that $A_+$ annihilates $\theta_0$, $\theta_1$
  and from $\theta_2$, $\theta_3$
  produces $\theta_3\theta_0$, $\theta_2 \theta_0$ terms
  not increasing filtration degree, and using the derivation
  property we get $A_+ \in F_{\lhd 0} \E$.
  By contrast, applying $A_-$ to $\theta_0 \in F_{\lhd 0}(\ww)$
  gives $\theta_2\theta_3 \notin F_{\lhd 0}(\ww)$,
  which implies $A_- \notin F_{\lhd 0} \E$,
  so it is only in $F_{\lhd 2} \E$.
  By similar arguments, no linear combination of elements in
  $G_{2p_2p_3} \E_G$, and in fact $G_{2p_2p_3}\E$,
  can be in $F_{\lhd 0}\E$.
  The $\ww \otimes \Omega$
  part in \eqref{eq:erep} is used,
  for example, to make
  $\theta_0\theta_1\theta_2\theta_3 \sigma_0 \notin F_{\lhd 0}\E$.
\qed\end{proof}
As an application, note that
elements of $F_0 \E^1$ define a frame of rank at most two,
hence a degenerate frame, because
it contains
$\theta_i \Der(\RR)$
with $i=0,1$ but not $i=2,3$.
Degenerate frames were part of the motivation in \secref{mot}.

%%%%%%%%%%%%%%%%%%%%%%%%%%%%%%%%%%%%%%%%%%%%%%%%%%%%%%%%%%%%%%%%%%%%
\section{The 2-index and 3-index BKL filtrations} \label{sec:fil23}

\begin{definition}[Full orthogonal decomposition of $W$] \label{def:fud}
  This is a decomposition
 \[
   W = \RR \theta_0 \oplus \RR \theta_1 \oplus \RR \theta_2 \oplus \RR \theta_3
 \]
 for some conformally orthonormal basis,
 with the understanding that two
 bases that differ only by a scaling by a positive function define
 the same decomposition.
\end{definition}
\begin{definition}[2-index and 3-index BKL filtrations] \label{def:23ind}
Given an orthogonal decomposition of $W$ introduce the following 1-index BKL filtrations:
\begin{itemize}
\item
$(F_p\E)$ the BKL filtration
  associated to
  $W_0 = \RR \theta_0 \oplus \RR \theta_1$,
  $W_1 = \RR \theta_2 \oplus \RR \theta_3$.
\item
$(F_p'\E)$ the BKL filtration
  associated to
  $W_0 = \RR \theta_0 \oplus \RR \theta_2$,
  $W_1 = \RR \theta_3 \oplus \RR \theta_1$.
\item
$(F_p''\E)$ the BKL filtration
  associated to
  $W_0 = \RR \theta_0 \oplus \RR \theta_3$,
  $W_1 = \RR \theta_1 \oplus \RR \theta_2$.
\end{itemize}
    Then \eqref{eq:f2f3} defines the 2-index and 3-index BKL filtrations.
    They are gLa filtrations, and\footnote{Here $\beta = q_1q_2q_3 \leq \alpha = p_1p_2p_3$
    iff $q_1 \leq p_1$
    and $q_2 \leq p_2$ and $q_3 \leq p_3$}
    $F_{\alpha} \E = \bigoplus_{\beta \leq \alpha}
    G_{\beta} \E$.
    The filtration
    is free over $\RR \mathbbm{1} \oplus \RR \theta_0$ with $\theta_0^2 = 0$,
    the exterior algebra in one generator.
  \end{definition}
  Elements
  of $F_{00}\E^1$ and $F_{000} \E^1$
  define a frame of rank at most one,
  hence a degenerate frame, because
they contain
$\theta_i \Der(\RR)$
only when $i=0$.
See also \secref{mot}.

  \begin{lemma}[Rees algebras]
Let $\R[[s_1,s_2,s_3]]$
be the $\R$-algebra of formal power series in the symbols $s_1,s_2,s_3$.
    There are gLa
    \begin{align*}
      \pbounce & \;=\;
      \{ \textstyle\sum_{p_2,p_3} s^{p_2}_2 s^{p_3}_3 x_{p_2p_3}
      \mid x_{p_2p_3} \in F_{p_2p_3}\E\}\\
      \pfree & \;=\;
      \{ \textstyle\sum_{p_1,p_2,p_3}
      s^{p_1}_1 s^{p_2}_2 s^{p_3}_3 x_{p_1p_2p_3}
      \mid x_{p_1p_2p_3} \in F_{p_1p_2p_3}\E\}
    \intertext{%
      free over $\R[[s_2,s_3]]$ and
      $\R[[s_1,s_2,s_3]]$,
  subalgebras of $\E[[s_2,s_3]]$ and $\E[[s_1,s_2,s_3]]$ respectively.
  The associated graded spaces are the real gLa}
     \abounce &\;=\; \pbounce/(s_2,s_3)\\
     \afree &\;=\; \pfree/(s_1,s_2,s_3)
\end{align*}
where $(s_2,s_3)$ is the ideal generated by these elements,
similar for $(s_1,s_2,s_3)$.
  \end{lemma}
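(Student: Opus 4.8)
The plan is to treat $\pbounce$ and $\pfree$ as the multi-index analogue of the single-symbol Rees algebra of \defref{fi}, reducing every claim to two inputs already available: the bracket-consistency of the constituent $1$-index BKL filtrations, and the explicit grading $\E = \bigoplus_\alpha G_\alpha\E$ with $F_\alpha\E = \bigoplus_{\beta\leq\alpha}G_\beta\E$ recorded in \defref{23ind} and Table \ref{table:EG}. I would carry out the argument for $\pbounce$ over $\R[[s_2,s_3]]$; the case of $\pfree$ over $\R[[s_1,s_2,s_3]]$ is identical with one extra index.

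First I would establish multi-index consistency, namely $[F_{p_2p_3}\E, F_{q_2q_3}\E]\subset F_{(p_2+q_2)(p_3+q_3)}\E$. Writing $F_{p_2p_3}\E = F_{p_2}'\E\cap F_{p_3}''\E$ as in \eqref{eq:f2f3} and using that $F'$ and $F''$ are $1$-index BKL filtrations, each consistent with the bracket via \eqref{eq:cbr} (\lemmaref{fil1}), the bracket of two such intersections lands in $F_{p_2+q_2}'\E$ by projecting to the first factor and in $F_{p_3+q_3}''\E$ by projecting to the second, hence in their intersection; the $3$-index case adds the analogous constraint from $F$. With consistency in hand the two structural claims are routine: $\pbounce$ is an $\R[[s_2,s_3]]$-submodule of $\E[[s_2,s_3]]$ because multiplying by $s_2$ or $s_3$ only shifts a coefficient into a larger (hence still admissible) filtration stage, and $\pbounce$ is closed under the graded bracket because, computing in $\E[[s_2,s_3]]$, the coefficient of $s_2^{r_2}s_3^{r_3}$ in $[x,y]$ is the finite sum $\sum [x_{p_2p_3},y_{q_2q_3}]$ over $p_2+q_2=r_2$, $p_3+q_3=r_3$, each term lying in $F_{r_2r_3}\E$ by consistency.

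The freeness is where the explicit grading is essential. I would fix a homogeneous $\R$-basis $\{e_i\}$ of $\E$ adapted to $\E = \bigoplus_\alpha G_\alpha\E$, so each $e_i$ lies in a single $G_{\alpha(i)}\E$ with $\alpha(i) = (p_1(i),p_2(i),p_3(i))$. Given $x = \sum_{p_2,p_3}s_2^{p_2}s_3^{p_3}x_{p_2p_3}\in\pbounce$, I would decompose each $x_{p_2p_3}\in F_{p_2p_3}\E = \bigoplus_{q_1,\,q_2\leq p_2,\,q_3\leq p_3}G_{q_1q_2q_3}\E$ along $\{e_i\}$ and rearrange the resulting double sum; this exhibits $x$ as an $\R[[s_2,s_3]]$-combination of the elements $s_2^{p_2(i)}s_3^{p_3(i)}e_i$, and uniqueness is inherited from the freeness of $\E[[s_2,s_3]]$ over $\R[[s_2,s_3]]$ on $\{e_i\}$. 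Hence $\{s_2^{p_2(i)}s_3^{p_3(i)}e_i\}$ is an $\R[[s_2,s_3]]$-basis of $\pbounce$. If desired, this refines to freeness over the exterior algebra $\R\mathbbm{1}\oplus\R\theta_0$ via $\E = \E_G\oplus\theta_0\E_G$.

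Finally I would identify the associated graded. By freeness, $\abounce = \pbounce/(s_2,s_3)\pbounce = \pbounce\otimes_{\R[[s_2,s_3]]}\R$ is the free $\R$-module on the reductions of $\{s_2^{p_2(i)}s_3^{p_3(i)}e_i\}$; grouping these by $(p_2(i),p_3(i))$ identifies it with $\bigoplus_{p_2,p_3}F_{p_2p_3}\E/(F_{(p_2-1)p_3}\E + F_{p_2(p_3-1)}\E)$, the associated graded of the $2$-index filtration, whose $(p_2,p_3)$-summand is $\bigoplus_{q_1}G_{q_1p_2p_3}\E$. The bracket descends to this quotient by consistency, giving the leading-order bracket and making $\abounce$ a graded real gLa. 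The hard part will be purely organizational rather than conceptual: keeping the multi-index bookkeeping honest in the freeness step, and checking that the descended bracket on the associated graded is well defined and independent of the chosen complements $G_\alpha\E$. The intersection argument for consistency and the submodule and subalgebra checks are formal once \lemmaref{fil1} is invoked.
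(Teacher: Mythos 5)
The paper gives no proof here --- its ``proof'' reads ``Omitted'' and merely points to \citeone{} for the definition of freeness --- so there is no argument of the authors' to compare yours against. Your reconstruction is correct and is the natural one: the intersection argument $[F_{p_2}'\E\cap F_{p_3}''\E,\,F_{q_2}'\E\cap F_{q_3}''\E]\subset F_{p_2+q_2}'\E\cap F_{p_3+q_3}''\E$ reduces multi-index bracket consistency to \lemmaref{fil1} (via \lemmaref{z2find}), the coefficientwise computation gives closure of the Rees algebra under the bracket, and your freeness and associated-graded identifications are exactly what the paper itself uses implicitly in \lemmaref{overp}, where $\mathcal{U}=\bigoplus_\alpha s^\alpha G_\alpha\E$ is declared canonically isomorphic to $\E$, $\abounce$ and $\afree$. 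Two small points: the notion of ``free'' is defined in the companion paper \citeone, so your interpretation (topologically free, i.e.\ $\pbounce\simeq \mathcal{U}[[s_2,s_3]]$ as a module, with the rearrangement of the double sum justified because only finitely many $\alpha$ have $G_\alpha\E\neq 0$) should be stated as such rather than assumed; and the final worry you raise about the descended bracket depending on the chosen complements $G_\alpha\E$ is vacuous --- the gLa structure on $\pbounce/(s_2,s_3)$ is a quotient by an ideal and is canonical, the complements entering only in the subsequent identification with $\bigoplus_{p_2,p_3}\bigoplus_{q_1}G_{q_1p_2p_3}\E$.
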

  \begin{proof}
  Omitted.
Freeness is defined in the discussion of MC perturbations in {\citeone}.
  \qed\end{proof}

%%%%%%%%%%%%%%%%%%%%%%%%%%%%%%%%%%%%%%%%%%%%%%%%%%%%%%%%%%%%%%%%%%%%%%%%%%%%%%%%%%%%
\section{Maurer-Cartan elements in $\afree$} \label{sec:mcafree}

Elements of $\MC(\afree)$
and their differentials are key in filtered expansions
on $\pfree$, see \secref{fe}.
This section contains logically complete statements
for $\MC(\afree)$.
We keep this section compact by postponing some important things:
\begin{itemize}
  \item The solution of certain partial differential equations in three dimensions,
called `constraints' in the vernacular of general relativity,
is postponed to \secref{constraints}.
  \item We write down a partial map
    from $\MC(\afree)$ to itself that we call the bounce map.
    It is astonishing that the constraints are preserved.
    This can be checked by direct, unrevealing calculation,
    but the rationale is that this map is a `scattering map'
    for an element of $\MC(\abounce)$ that we postpone to \secref{bouncesol}.
  \item The elements of $\MC(\afree)$ in this section are essentially general,
    but to see this one needs certain
    automorphisms of $\afree$ that are postponed to \secref{generality}.
  \item Studying the differential on $\afree$ is postponed.
    In particular, the obstruction space.
\end{itemize}

\begin{definition}[Foliated base manifold
                   and orthogonal decomposition] \label{def:fbm}
  The base manifold is $\R \times \spatialm$
  for a manifold $\spatialm \simeq \R^3$.
  The projection out of the first factor $t: \R \times \spatialm \to \R$
  is called `time', the partial derivative along the first factor is denoted
  \[
      D_0 = \tfrac{\p}{\p t}
  \]
  We assume 
  $W = \RR \theta_0 \oplus \RR \theta_1 \oplus \RR \theta_2 \oplus \RR \theta_3$
  is a full orthonormal decomposition,
  see \defref{fud}.
  There is canonical injection
  $\Der(\RR) \inj \CDerEnd(W)$
  that produces elements that annihilate the basis elements $\theta_i$,
  and that we use implicitly.
\end{definition}
The base manifold carries a foliation by affine lines,
suitable for BKL.
This is made precise by letting
\defref{fbm} define the objects of a groupoid,
see \theoremref{g13}.

Cyclic $(i,j,k)$
means that they run over $\{(1,2,3),(2,3,1),(3,1,2)\}$.
%%%%%%%%%%%%%%%%%%%%%%%%%%%%%%%%%%%%%%%%%%%%%%%%%%%%%%%%%%%%%%%%%%%%%%%%%%%%%%%%%%%%%
  \begin{lemma}[Master space
    and overparametrization of an affine gauge subspace] \label{lemma:overp}
    Given \defref{fbm} and given three symbols $s_1,s_2,s_3$, and abbreviating
    $\alpha = p_1p_2p_3$ and $s^{\alpha} = s_1^{p_1} s_2^{p_2} s_3^{p_3}$,
    define the external direct sums
      \begin{align*}
      \mathcal{U}\;& =\; \textstyle\bigoplus_{\alpha}
                            s^{\alpha} G_{\alpha} \E\\
      \mathcal{U}_G \;&=\; \textstyle\bigoplus_{\alpha}
                            s^{\alpha} G_{\alpha} \E_G
        \end{align*}
        Then the $\RR$-module `master space' $\mathcal{U}$
        is canonically isomorphic to each of
        $\E$, $\abounce$, $\afree$
        and enables identification between any two of them.
     In this way, $\mathcal{U}$ inherits three different
     real gLa structures.
     The `affine gauge subspace'
\[
  \theta_0 D_0 + \mathcal{U}_G^1 \subset \mathcal{U}^1
\]
admits an over-parametrization (surjection) in terms of, with $i=1,2,3$,
\begin{itemize}
  \item $\beta_i^{\mu},
    \, \gamma_i^a,\, \mu_i
    \in C^{\infty}(\R \times \spatialm, \R)$ with $\mu = 0,1,2,3$
    and
    $a=0,1,2,3,4,5,6$
  \item $D_i$ a frame of vector fields on $\spatialm$
    extended to $\Der(\RR)$ by $D_i(t)=0$
\end{itemize}
Concretely,
\begin{align*}
  \theta_0 D_0 \;+\; \tsum_{\textnormal{cyclic $(i,j,k)$}}
(\;\;
  -\,& \gamma_i^0 (\theta_i \sigma_i + \theta_0\sigma_0)\\
  -\,& \sv{i}^2 \gamma_i^1 (\theta_i \sigma_{jk} - \theta_j \sigma_{ki} - \theta_k \sigma_{ij})\\
  +\,& \sv{j}\sv{k} \theta_i \beta_i\\
  -\,& \sv{j}\sv{k} (\gamma_i^2 + \beta_i(\mu_1+\mu_2+\mu_3))(\theta_0\sigma_i + \theta_i\sigma_0)\\
  -\,& \sv{j}\sv{k} (\gamma_i^3 
                     -\beta_i(\mu_k)) \theta_k \sigma_{ki}\\
  -\,& \sv{j}\sv{k} (\gamma_i^4
                     +\beta_i(\mu_j)) \theta_j \sigma_{ij}\\
  -\,& \sv{j}\sv{k} \gamma_i^5(\theta_k\sigma_j + \theta_j\sigma_k)\\
  +\,& \sv{j}\sv{k}\sv{i}^2 \gamma_i^6 (\theta_j\sigma_k - \theta_k\sigma_j)
  )
\end{align*}
with the abbreviations
$\sv{i} = s_i e^{\mu_i}$ 
and
$\beta_i = \beta_i^0 D_0 + \beta_i^1 D_1 + \beta_i^2 D_2 + \beta_i^3 D_3$.
The notation in this lemma is used consistently in the following.
  \end{lemma}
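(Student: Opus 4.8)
The plan is to treat the two assertions separately: first the canonical identifications of $\mathcal{U}$ with $\E$, $\abounce$ and $\afree$, then the over-parametrization of the affine gauge subspace. For the identifications I would start from the decomposition $\E = \bigoplus_\alpha G_\alpha\E$ together with the explicit form $F_\alpha\E = \bigoplus_{\beta\le\alpha}G_\beta\E$ of the $3$-index filtration. Forgetting the markers $s^\alpha$ gives an $\RR$-module isomorphism $\mathcal{U}\to\E$. For $\afree$, the map $s^\alpha a\mapsto s^\alpha a$ (with $a\in G_\alpha\E\subset F_\alpha\E$) sends $\mathcal{U}$ into $\pfree$ and, composed with $\pfree\surj\afree$, identifies $s^\alpha G_\alpha\E$ with the degree-$\alpha$ piece of $\afree$; this is an isomorphism because $F_{\alpha}\E/\sum_i F_{\alpha-e_i}\E\cong G_\alpha\E$, the sum on the left ($e_i$ the $i$-th unit vector) being exactly $\bigoplus_{\beta\le\alpha,\,\beta\ne\alpha}G_\beta\E$. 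The $2$-index filtration leaves the first index unconstrained, so $F_{p_2p_3}\E=\bigoplus_{q_1,\,q_2\le p_2,\,q_3\le p_3}G_{q_1q_2q_3}\E$ and the same computation identifies $\mathcal{U}$ with $\abounce$. Transporting the brackets of $\E$, $\abounce$, $\afree$ through these isomorphisms endows $\mathcal{U}$ with its three gLa structures: the bracket of $s^\alpha a$ and $s^\beta b$ keeps all components $\le\alpha+\beta$ for $\E$, only those saturating the second and third index for $\abounce$, and only the top component $\alpha+\beta$ for $\afree$, as dictated by $[F_\alpha\E,F_\beta\E]\subset F_{\alpha+\beta}\E$.

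For the over-parametrization I would first check, line by line, that every summand of the displayed formula lies in the correct graded piece $s^\alpha G_\alpha\E_G$. The matching is forced by pairing the monomial in $\sv{i}=s_i e^{\mu_i}$ with the marker $s^\alpha$: the factor $\sv{i}^2$ carries $s_i^2$ and pins down $G_{200},G_{020},G_{002}$ (line $2$) and, with $\sv{j}\sv{k}$, the pieces $G_{211},G_{121},G_{112}$ (line $8$); the factor $\sv{j}\sv{k}$ alone carries $s_js_k$ and pins down $G_{011},G_{101},G_{110}$ (lines $3$--$7$); line $1$ carries the trivial marker and lands in $G_{000}$. In each case one reads off from Table \ref{table:EG}, using the basis-adapted injection $\Der(\RR)\inj\CDerEnd(W)$ fixed in \defref{fbm}, that the bracketed combination is exactly (a multiple of) one of the listed $G_\alpha\E_G$ generators, with signs arranged cyclically; the $e^{\mu_i}$ and $\beta_i(\mu_\bullet)$ factors only modify the scalar coefficient and never change the graded piece. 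Since every generator that occurs involves a single factor from $W$, i.e. lies in homological degree one, and since the offset $\theta_0 D_0\in\theta_0(G_{000}\E_G)$ is the unique term outside $\E_G$, the whole expression lies in $\theta_0 D_0+\mathcal{U}_G^1$.

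It then remains to prove surjectivity. The degree-one part of $G_\alpha\E_G$ is, by the same inspection of Table \ref{table:EG}, spanned over $\RR$ precisely by the one-$\theta$ generators, and these are in bijection with the summands of the formula; the block $G_{222}$, together with the zero- and two-$\theta$ generators of the other blocks, carries no degree-one content and correctly does not appear. Given a target in $\theta_0 D_0+\mathcal{U}_G^1$ I would recover the data as follows: the $\theta_i\Der(\RR)$ component determines $\beta_i$ uniquely because $D_0,D_1,D_2,D_3$ is a frame, hence a $C^\infty$-basis of $\Der(\RR)$; the coefficients $e^{\mu_i}$ are invertible, so after fixing $\mu_i$ arbitrarily the remaining scalar coefficients are hit by $\gamma_i^0,\gamma_i^1,\gamma_i^5,\gamma_i^6$ directly and by $\gamma_i^2,\gamma_i^3,\gamma_i^4$ after absorbing the already-determined twists $\beta_i(\mu_\bullet)$. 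This exhibits a genuine surjection for every choice of frame and of $\mu_i$, which is why the parametrization is redundant (the $\mu_i$ are extra data); the specific twisted form is not needed for surjectivity here but is chosen to streamline the $\MC$-equations in the following sections.

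The main obstacle is the line-by-line verification in the second paragraph: reconciling the cyclic sign conventions of the formula with the generators as tabulated, and confirming that the $\sv{i}$-monomials match the $\Z_{\ge0}^3$ markers, is finite but delicate. Everything else — the associated-graded bookkeeping and the final surjectivity — is routine once the degree-one generators of $\E_G$ have been identified.
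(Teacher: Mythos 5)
Your proposal is correct and follows the same route the paper intends: the paper's proof is literally ``Clear. Use Table \ref{table:EG}.'', and what you have written is the expanded version of exactly that check --- the identification of $\mathcal{U}$ with $\E$, $\abounce$, $\afree$ via $\E=\bigoplus_\alpha G_\alpha\E$ and $F_\alpha\E=\bigoplus_{\beta\leq\alpha}G_\beta\E$, followed by matching each summand of the displayed formula to the degree-one, one-$\theta$ generators of the $G_\alpha\E_G$ blocks and recovering the coefficients from the frame property of $D_0,D_1,D_2,D_3$. No gaps; the accounting of ranks (e.g.\ the eight degree-one generators of $G_{011}\E_G$ against $\beta_i^{0,1,2,3},\gamma_i^{2,3,4,5}$, and the absence of degree-one content in $G_{222}$) is exactly the verification the paper leaves to the reader.
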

%----------------------------------------------
  \begin{proof}
    Clear.
    Use Table \ref{table:EG}.
  \qed\end{proof}
%------------------------------------------------
The $\mu_i$ will simplify later statements;
we often require $D_0(\mu_i) = \gamma_i^0$.
An element for which $D_0,\beta_1,\beta_2,\beta_3$ is a frame
 may be thought of as infinitesimally nondegenerate.

\begin{lemma}[Synchronous frame] \label{lemma:g60}
If an element of $\theta_0 D_0 + \mathcal{U}_G^1$
is an MC-element
of any one of $\mathcal{E}$, $\abounce$, $\afree$ then we have
the implication
  \[
    \beta_1^0 = \beta_2^0 = \beta_3^0 = 0
    \qquad
    \Longrightarrow
    \qquad
    \gamma_1^6 = \gamma_2^6 = \gamma_3^6 = 0
  \]
\end{lemma}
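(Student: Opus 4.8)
The plan is to substitute the explicit parametrization of \lemmaref{overp} into the Maurer-Cartan equation $[x,x]=0$ and to isolate the single graded component that determines $\gamma_i^6$. In that parametrization each $\gamma_i^6$ occurs exactly once, in the summand $\sv{j}\sv{k}\sv{i}^2\gamma_i^6(\theta_j\sigma_k-\theta_k\sigma_j)$, which by Table~\ref{table:EG} lies in the top sector $s^{\alpha}G_{\alpha}\E$ with $\alpha=2\mathbf{e}_i+\mathbf{e}_j+\mathbf{e}_k$ (so $\alpha=211$ when $i=1$). Hence $\gamma_i^6$ enters $[x,x]$ only linearly, and the lowest grading at which it can appear is exactly $\alpha$. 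I would therefore extract the $G_{211}\E^2$-component of $[x,x]$, solve it for $\gamma_1^6$, and then treat $\gamma_2^6,\gamma_3^6$ by the cyclic symmetry of the parametrization.

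First I would record the three grading pairs summing to $211$, namely $\{000,211\}$, $\{200,011\}$ and $\{101,110\}$, so that after projection to $G_{211}\E^2$ the relevant equation reads $[x_{000},x_{211}]+[x_{200},x_{011}]+[x_{101},x_{110}]=0$, where $x_\alpha$ denotes the grading-$\alpha$ part of $x$. Only $[x_{000},x_{211}]$ is linear in $\gamma_1^6$, and bracketing the $\gamma_1^6$-term against the frame piece $\theta_0 D_0$ of $x_{000}$ produces a contribution whose coefficient is an invertible multiple of the conformal factor $\sv{2}\sv{3}\sv{1}^2$. Solving this component for $\gamma_1^6$ gives an identity of the schematic shape $(\text{invertible})\,\gamma_1^6=F$, where $F$ is assembled from $[x_{200},x_{011}]$, $[x_{101},x_{110}]$ and the $\gamma_1^0$-terms of $[x_{000},x_{211}]$. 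The spatial frame vectors $\beta_i$ enter $F$ only through the summands $\sv{j}\sv{k}\theta_i\beta_i$ and through the coefficients $\gamma_i^{2},\gamma_i^{3},\gamma_i^{4}$ that contain $\beta_i(\mu_\bullet)$; a direct expansion in the basis of Table~\ref{table:EG}, combined with the Maurer-Cartan equations already available at the lower gradings $011,101,110,200$, is meant to reduce $F$ to an expression in $\beta_1^0,\beta_2^0,\beta_3^0$ with no $\beta^0$-independent part. Setting $\beta_1^0=\beta_2^0=\beta_3^0=0$ then forces $\gamma_1^6=0$, and cyclically $\gamma_2^6=\gamma_3^6=0$.

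The argument is meant to be uniform across $\E$, $\abounce$ and $\afree$, because the component we exploit sits at the top of its filtration degree. For the associated-graded structures $\abounce$ and $\afree$ the bracket landing in $G_{211}$ is by construction the $G_{211}$-projection of the $\E$-bracket of representatives, so the grading-$211$ equation there is exactly the $\beta+\gamma=211$ part computed above. In $\E$ itself there are additional contributions to $G_{211}\E^2$ from pairs with $\beta+\gamma$ lying strictly above $211$ (for instance $\{011,211\}$ and $\{110,112\}$); these modify the coefficient of $\gamma_1^6$ and couple it to $\gamma_2^6,\gamma_3^6$ and to $F$ by higher-order terms, and I would check that the resulting linear system for $(\gamma_1^6,\gamma_2^6,\gamma_3^6)$ stays invertible and that its right-hand side still vanishes when $\beta^0=0$. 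The step I expect to absorb most of the work, and the main obstacle, is exactly this bookkeeping: expanding the graded-commutators $[x_{200},x_{011}]$ and $[x_{101},x_{110}]$ in the basis of Table~\ref{table:EG} --- tracking both the anchor action of the $\beta_i$ on coefficients and the algebraic $\CDerEnd(W)$-products of the $\sigma$'s --- and verifying that, after substituting the Maurer-Cartan equations at gradings $011,101,110,200$, the spatial parts $\beta_i^{1},\beta_i^{2},\beta_i^{3}$ cancel so that $F$ retains no $\beta^0$-independent piece and the coefficient of $\gamma_1^6$ does not collapse against the $\gamma_1^0$-contributions.
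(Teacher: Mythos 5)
You have located the right coefficient --- the $\theta_j\theta_k D_0$ part of the grading-$(2,1,1)$ (and cyclic) component of $[x,x]$, where $\gamma_i^6$ enters linearly through $[\,\sv{j}\sv{k}\sv{i}^2\gamma_i^6(\theta_j\sigma_k-\theta_k\sigma_j),\,\theta_0 D_0\,]$ --- but you are missing the one structural observation that makes the lemma a two-line argument rather than the large computation you defer to at the end. The paper's proof rests on the fact that the bracket of any two of the eleven generators $D_0,D_1,D_2,D_3,\sigma_0,\sigma_1,\sigma_2,\sigma_3,\sigma_{23},\sigma_{31},\sigma_{12}$ never produces a $D_0$. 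Consequently a $D_0$-component in $[x,x]$ can only arise from a bracket in which one factor \emph{already} carries a $D_0$, via the module part $\omega\delta(\omega')\,D_0$ of the algebroid bracket. In the parametrization of \lemmaref{overp} the only terms of $x$ carrying $D_0$ are $\theta_0 D_0$ and $\sv{j}\sv{k}\beta_i^0\,\theta_i D_0$; once $\beta_1^0=\beta_2^0=\beta_3^0=0$, the sole source of $s_i^2s_js_k\,\theta_j\theta_k D_0$ is the bracket of $\theta_0D_0$ against $\sv{j}\sv{k}\sv{i}^2\gamma_i^6(\theta_j\sigma_k-\theta_k\sigma_j)$ (note $(\theta_j\sigma_k-\theta_k\sigma_j)(\theta_0)=2\theta_j\theta_k$, whereas the $\gamma_i^5$ combination $\theta_k\sigma_j+\theta_j\sigma_k$ annihilates $\theta_0$). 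So the term $F$ in your schematic identity $(\text{invertible})\,\gamma_1^6=F$ vanishes \emph{identically} for structural reasons: $[x_{200},x_{011}]$, $[x_{101},x_{110}]$, and the non-$D_0$ part of $[x_{000},x_{211}]$ simply cannot contribute to the $D_0$-coefficient. No expansion of those brackets, no appeal to the lower-grading MC equations, and no check of the cancellation of $\beta_i^{1,2,3}$ is needed; likewise your worry about higher pairs $\{011,211\}$, $\{110,112\}$ in $\E$ coupling the $\gamma_i^6$ together disappears, since those brackets also produce no $D_0$.

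As written, your proposal has a genuine gap at exactly the step you flag as ``the main obstacle'': you assert, but do not establish, that after substituting the lower-grading MC equations the remainder $F$ has no $\beta^0$-independent piece. That cancellation is the whole content of the lemma, and leaving it as bookkeeping to be verified means the implication $\beta_i^0=0\Rightarrow\gamma_i^6=0$ is not actually proved. The fix is not more bookkeeping but the observation above about where $D_0$-terms can come from; with it, your choice of component immediately yields $2\,\sv{j}\sv{k}\sv{i}^2\gamma_i^6\,\theta_j\theta_k D_0=0$ and hence $\gamma_i^6=0$ in each of $\E$, $\abounce$, $\afree$.
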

\begin{proof}
  If the $\beta_i^0$ vanish,
  then only the $\gamma_i^6$
  appear as coefficients in front of $s_i^2 s_j s_k \theta_j \theta_k D_0$
  in the MC-equations,
  hence the $\gamma_i^6$ have to vanish.
  To check this, note that the bracket of any two of the eleven elements
  $D_0,D_i,\sigma_0,\sigma_i,\sigma_{jk}$
  does not produce a $D_0$.
\qed\end{proof}
%%%%%%%%%%%%%%%%%%%%%%%%%%%%%%%%%%%%%%%%%%%%%%%%%%%%%%%%%%%%%%%%%

\begin{lemma}[Necessary conditions for $\MC(\afree)$] \label{lemma:mcefree}
To be in
\[
  \MC(\afree) \;\cap\; (\theta_0 D_0 + {\afree^1}_{,G})
  \;\cap\; \{D_0(\mu_i) = \gamma_i^0 \}
\]
requires
$D_0(\beta_i^{0,1,2,3}) = D_0(\gamma_i^{0,1,2,3,4,6}) = 
D_0(e^{\mu_j+\mu_k} \gamma_i^5)=0$ for cyclic $(i,j,k)$.
\end{lemma}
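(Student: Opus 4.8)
The plan is to reduce the single equation $[x,x]=0$ in $\afree$, where $x = \theta_0 D_0 + y$ with $y \in {\afree^1}_{,G}$ the element of \lemmaref{overp}, to its components in the $\Z_{\geq 0}^3$-grading $\afree = \bigoplus_\alpha s^\alpha G_\alpha\E$ furnished by the master-space identification. Under that identification the bracket on $\afree$ sends $G_\beta\E \otimes G_\gamma\E$ to $G_{\beta+\gamma}\E$ via the top-grade projection $\Pi_{\beta+\gamma}$ of the $\E$-bracket, so the grade-$\alpha$ part of the MC-equation reads $\sum_{\beta+\gamma=\alpha}\Pi_\alpha[u_\beta,u_\gamma] = 0$, a system that is pointwise algebraic in the coefficient functions of \lemmaref{overp} apart from the $t$-derivative terms. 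I would treat the listed coefficients one grade at a time, examining in each case the equation in front of a single distinguished basis monomial of $G_\alpha\E$.

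The key structural observation is that $D_0=\partial_t$ is the only operator in the parametrization that differentiates coefficients in $t$, and it enters only through $\theta_0 D_0$, which lies in $G_{000}\E$. Hence every $t$-derivative in the grade-$\alpha$ equation comes from the linear term $2\Pi_\alpha[\theta_0 D_0, u_\alpha]$; schematically $[\theta_0 D_0, f\,A] = \theta_0 D_0(f)\,A + f[\theta_0 D_0, A]$ for a coefficient $f$ and an algebraic generator $A$, so $D_0(f)$ appears in front of $\theta_0\wedge A$. Because each coefficient in the statement carries an explicit exponential prefactor from $\sv{i}=s_i e^{\mu_i}$, the operator $D_0$ also hits $e^{\mu_j+\mu_k}$; invoking the standing assumption $D_0(\mu_i)=\gamma_i^0$ turns this into factors $(\gamma_j^0+\gamma_k^0)$ multiplying the bare coefficient, alongside $D_0$ of that coefficient.

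Next I would collect the remaining contributions to the same monomial: the algebraic bracket $2\Pi_\alpha[-\gamma_l^0(\theta_l\sigma_l + \theta_0\sigma_0),\,u_\alpha]$ coming from the grade-$000$ part of $x$, and the genuinely quadratic brackets $\Pi_\alpha[u_\beta, u_\gamma]$ with $\beta+\gamma=\alpha$ and $\beta,\gamma\neq 0$. For the frame components $\beta_i^\mu$ and for $\gamma_i^{0,1,2,3,4,6}$ I expect the $\gamma^0$-bracket to cancel precisely the $(\gamma_j^0+\gamma_k^0)$-drift generated by the prefactor, leaving the clean conclusion $D_0(\text{coefficient})=0$; this is the same mechanism already visible in \lemmaref{g60} for $\gamma_i^6$. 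For $\gamma_i^5$, whose algebraic generator is the symmetric combination $\theta_k\sigma_j+\theta_j\sigma_k$, the $\gamma^0$-bracket does not cancel, and the surviving quantity $(\gamma_j^0+\gamma_k^0)\gamma_i^5 + D_0(\gamma_i^5)$ is instead recognized as $e^{-\mu_j-\mu_k}D_0(e^{\mu_j+\mu_k}\gamma_i^5)$, which accounts for the weighted form of the conclusion.

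The main obstacle is the verification, grade by grade and monomial by monomial using Table~\ref{table:EG} together with the $\E$-bracket relations of \citeone, that (i) no quadratic bracket $\Pi_\alpha[u_\beta,u_\gamma]$ contributes to the chosen $\theta_0\wedge A$ monomial, or that whatever it contributes is itself free of $\gamma^0$ and hence harmless; and (ii) every potentially contaminating term in which a spatial frame field $D_i$ differentiates a coefficient, or a commutator $[D_0,D_i]$ appears, lands in strictly higher $3$-index grade and is therefore annihilated by the projection defining $\afree$. Point (ii) is exactly the associated-graded incarnation of the BKL prescription to drop spatial derivatives to leading order, and separating it cleanly from the $t$-derivative terms is where the care is needed; once that is in place, matching the surviving contributions against the exponential prefactors as above yields the stated conditions.
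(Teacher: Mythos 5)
Your proposal is correct and follows essentially the same route as the paper: isolating the coefficients of the $\theta_0\wedge A$ monomials is exactly the paper's reduction to the $\afree^2/{\afree^2}_{,G}$ part of the MC-equations (since $G_\alpha\E = G_\alpha\E_G\oplus\theta_0(G_\alpha\E_G)$), and the observation that all $t$-derivatives enter through $[\theta_0 D_0,-]$ and land in that complement, with the exponential prefactors $\sv{j}\sv{k}$ accounting for the weighted condition on $\gamma_i^5$, is the intended mechanism. The paper's own proof is a one-line appeal to this same direct calculation, so your more detailed plan, including the admitted grade-by-grade verification, matches it.
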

%-----------------
\begin{proof}
  Required for the 
  $\afree^2/{\afree^2}_{,G}$ part of the MC-equations.
  The parametrization in \lemmaref{overp}
  was designed to get simple equations.
\qed\end{proof}
%%%%%%%%%%%%%%%%%%%%%%%%%%%%%%%%%%%%%%%%%%%%%%%%%%%%%%%%%%%%%%%%%%%%%%%%%%%%%%%%%
It should be clear from this lemma that the $\gamma_i^5$
are three odd ducks,
that potentially behave badly.
Luckily, one can get rid of them using
nilpotent automorphisms of\footnote{%
Crucially, these nilpotent automorphisms
are induced from automorphisms of $\pfree$
and hence extend to formal perturbation theory in $\pfree$.} $\afree$
that leave the affine gauge subspace invariant.
See 
\secref{generality}, particularly Remark \ref{remark:gen56}.
For the same reasons,
the additional conditions in
the next lemma do not entail a significant restriction of generality.

Below we associate to
the frame $D_i$
the `structure functions' $c_{jk}^i \in C^\infty(M,\R)$ by
\begin{equation}
\label{eq:mce3}
[D_j,D_k] = \tsum_{i=1}^3 c_{jk}^i D_i
\end{equation}
%-----------------------------------------------
\begin{definition} \label{def:xpre}
Let $\mcp$
be the set of tuples
\[
  (\spatialm,g_1^0,g_2^0,g_3^0,D_1,D_2,D_3,\xi)
\]
with $g_1^0, g_2^0, g_3^0, \xi \in C^\infty(\spatialm,\R)$
subject to the `constraint' equations,
for $(i,j,k)$ cyclic,
\begin{align}
  \label{eq:mce1}
 0 & = g_2^0g_3^0 + g_3^0g_1^0 + g_1^0g_2^0\\
 \label{eq:mce2}
 0 & = D_i(g_j^0+g_k^0)
 + c_{ij}^j(g_i^0 - g_j^0)
 + c_{ik}^k(g_i^0 - g_k^0)
 - 2 D_i(\xi) g_i^0
\end{align}
with \eqref{eq:mce3} understood.
To every such tuple we associate $g_i^{1,2,3,4} \in
C^\infty(\spatialm,\R)$ by
\[
  g_i^1 = -\tfrac{1}{2} c_{jk}^i
  \qquad
  g_i^2 = D_i(\xi)
  \qquad
    g_i^3 = -c_{ki}^k - g_i^2
  \qquad
    g_i^4 = -c_{ij}^j + g_i^2
  \]
  for all cyclic $(i,j,k)$.
\end{definition}
%%%%%%%%%%%%%%%%%%%%%%%%%%%%%%%%%%%%%%%%%%%%%%%%%%%%%%%%%%%%%%%%%%%%%%%%%%%%%%%%
\begin{lemma}[Sufficient conditions for $\MC(\afree)$] \label{lemma:mcefreenf}
An element is in
\[
  \MC(\afree) \;\cap\; (\theta_0 D_0 + {\afree^1}_{,G})
  \;\cap\;
  \{\;
    \mu_i = t \gamma_i^0,
    \;\;
  \beta_i = D_i,\;\;
  \gamma_i^{5,6} = 0
 \;\}
\]
iff $\gamma_i^{0,1,2,3,4} = g_i^{0,1,2,3,4}$
for some element of $\mcp$.
\end{lemma}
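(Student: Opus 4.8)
The plan is to evaluate the Maurer-Cartan equation $[x,x]=0$ directly for the element $x = \theta_0 D_0 + y$ of \lemmaref{overp}, after imposing the gauge conditions $\mu_i = t\gamma_i^0$, $\beta_i = D_i$, $\gamma_i^{5,6}=0$, and to read off that its vanishing is equivalent to the data of a tuple in $\mcp$. First I would split the equation along the decomposition $\afree^2 = {\afree^2}_{,G} \oplus \theta_0\,{\afree^2}_{,G}$ induced by $\E = \E_G \oplus \theta_0 \E_G$. The complementary $\afree^2/{\afree^2}_{,G}$ part is exactly what \lemmaref{mcefree} governs: it forces every coefficient to be annihilated by $D_0$. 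Since $\beta_i = D_i$ is a spatial frame we have $\beta_i^0=0$, so \lemmaref{g60} is consistent with the imposed $\gamma_i^6=0$; and $\mu_i = t\gamma_i^0$ together with $D_0(\gamma_i^0)=0$ gives $D_0(\mu_i)=\gamma_i^0$, so the hypotheses of \lemmaref{mcefree} hold. The upshot is that, inside this gauge slice, being an MC-element forces all of $\gamma_i^{0,1,2,3,4}$ to be pulled back from $\spatialm$, matching the requirement $g_i \in C^\infty(\spatialm,\R)$ on the $\mcp$ side, with the $t$-dependence surviving only through $\sv{i} = s_i e^{t\gamma_i^0}$.

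With time-independence in hand, the remaining content is the ${\afree^2}_{,G}$ part of $[x,x]$, which I would compute term by term using the bracket rules of $\E$ from \citeone. The decisive simplification is the structure of the associated graded: the bracket of $s^{\alpha}u$ with $s^{\beta}v$ retains only the $G_{\alpha+\beta}\E$ component of $[u,v]$ and discards all lower-filtration pieces. This is precisely what collapses the Einstein system to the explicitly solvable BKL system, and it makes the calculation finite. Grouping the output by the $\Z_{\geq 0}^3$-grading, I expect the single-$s_i^2$ sectors (carrying $[D_j,D_k]=\tsum_i c_{jk}^i D_i$, see \eqref{eq:mce3}) to identify $\gamma_i^1$ with $-\tfrac12 c_{jk}^i$; the mixed $s_j s_k$ sectors to identify $\gamma_i^3,\gamma_i^4$ with $-c_{ki}^k$ and $-c_{ij}^j$ shifted by a common function and to force that function to be a gradient $D_i(\xi)$; and the $G_{000}$ sector together with the surviving $\theta_0\theta_1\theta_2\theta_3$ sector to yield the two genuine constraints \eqref{eq:mce1} and \eqref{eq:mce2}. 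Existence of $\xi$ follows because the integrability condition thrown off by the MC-equation says exactly that the coframe components $\gamma_i^2$ form a closed $1$-form; since $\spatialm \simeq \R^3$ is simply connected, it is $D_i(\xi)$ for some $\xi$, unique up to an additive constant. Setting $g_i^{0,1,2,3,4} := \gamma_i^{0,1,2,3,4}$ then exhibits a tuple of $\mcp$, giving the forward direction, and substituting the defining relations of \defref{xpre} back into the computed components gives the converse.

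The main obstacle is the bookkeeping of the ${\afree^2}_{,G}$ bracket: many graded sectors occur and every generator of Table \ref{table:EG} contributes, so the computation must be organized so that the structure functions $c_{jk}^i$ and the potential $\xi$ appear in exactly the combinations dictated by \defref{xpre}. The one genuinely non-mechanical point is recognizing the component that forces $\gamma_i^2$ to be exact, so that $\xi$ can be produced; everything else is a finite, if lengthy, verification that the over-parametrization of \lemmaref{overp} was engineered, as advertised, to make these equations emerge in the stated normal form.
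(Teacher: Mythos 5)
Your proposal is correct and follows essentially the same route as the paper: a direct evaluation of the MC-equation in the over-parametrization of \lemmaref{overp}, using \lemmaref{mcefree} to kill the $t$-dependence, discarding the constraint components that reduce to the Jacobi identity for the frame, and producing $\xi$ from the integrability condition $D_j(g_k^2)-D_k(g_j^2)=\tsum_p c_{jk}^p g_p^2$ via the \Poincare{} lemma on the simply connected $\spatialm\simeq\R^3$. The paper's proof is just this computation stated tersely, and you have correctly isolated its one non-mechanical step.
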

\begin{proof}
  By direct calculation. Here are some remarks.
  Note that $\beta_i = D_i$ means
  $\beta_i^0 = 0$ and $\beta_i^j = \delta_i^j$ for $i,j = 1,2,3$.
  Since $D_1,D_2,D_3$ is a frame,
  some constraint equations amount to the Jacobi identity and are dropped.
  The MC-equations also imply
      $D_j(g_k^2) - D_k(g_j^2) =
      \tsum_{p=1}^3 c_{jk}^p g_p^2$,
      which modulo the other MC-equations
      is equivalent to $g_i^2 = D_i(\xi)$
  by the Poincare lemma, since $\spatialm \simeq \R^3$ is simply connected.
\qed\end{proof}
\begin{remark}
  In \lemmaref{mcefreenf} a special class of MC-elements
  is given by requiring that $\gamma_i^{0,1}$ be constant and
  that $\gamma_i^{2,3,4,5,6} = 0$.
  These elements will be called homogeneous.
  Here only the
  structure coefficients
  $c^i_{jk}$ with cyclic $(i,j,k)$
  can be nonzero.
\end{remark}
%%%%%%%%%%%%%%%%%%%%%%%%%%%%%%%%%%%%%%%%%%%%%%
\begin{definition}
        Let $\mcn \subset \mcp$ be the elements with
  \begin{align*}
  (g_1^0,g_2^0,g_3^0) & = \tfrac{1}{2}(1,-1-u,-1-\tfrac{1}{u})\\
  c_{23}^1 & = -2,\;\textnormal{equivalently}\;g_1^1 = 1
\end{align*}
  for some $u \in C^{\infty}(\spatialm,\R)$ with $u > 0$.
\end{definition}
%%%%%%%%%%%%%%%%%%%%%%%%%%%%%%%%%%%%%%%%%%%%%%%%%%%%%%%%%%%%%%%%%%%%%%%%%%%%%%%%
\begin{lemma}[Normal form elements in $\MC(\afree)$] \label{lemma:mcpNEW}
  \begin{itemize}
\item[(a)]
Given an element of $\mcp$,
then for all $A,B \in C^{\infty}(\spatialm,\R)$ with
  $A,B>0$ and for all $\sigma \in \{-1,+1\}$ we have
  \begin{multline*}
      \big(\spatialm,\,Ag_1^0,\,Ag_2^0,\,Ag_3^0,\,
      \sigma A B^{g_2^0+g_3^0} D_1,\,
      \sigma A B^{g_3^0+g_1^0} D_2,\,
      \sigma A B^{g_1^0+g_2^0} D_3,\\
      \xi + \log A + (g_1^0+g_2^0+g_3^0) \log B \big)\;\in\; \mcp
    \end{multline*}
  \item[(b)]
    For every element of $\mcp$
    that satisfies
  $g_1^0 > 0$ and $g_2^0, g_3^0 < 0$ and $c_{23}^1 \neq 0$
  everywhere on $\spatialm$,
  there is a unique transformation as in (b) that
  yields an element of $\mcn$. This yields a partial map,
  that we call normalization,
  \[
      \nmlz\;:\; \mcp \nrightarrow \mcn
  \]
\end{itemize}
\end{lemma}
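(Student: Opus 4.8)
The plan is to treat (a) as a closure statement under the three-parameter family of transformations and (b) as an explicit, \emph{unique} solving for those parameters; both reduce to the constraint equations \eqref{eq:mce1} and \eqref{eq:mce2}. Throughout I would write the transformed data with tildes: $\tilde g_i^0 = A g_i^0$, $\tilde D_i = f_i D_i$ with $f_i = \sigma A B^{h_i}$ and $h_i = (g_1^0+g_2^0+g_3^0) - g_i^0$ (nowhere zero since $A,B>0$, so the $\tilde D_i$ are again a frame), and $\tilde\xi = \xi + \log A + (g_1^0+g_2^0+g_3^0)\log B$. Since \eqref{eq:mce1} is homogeneous quadratic in the $g_i^0$, it is merely multiplied by $A^2$ and so is preserved at once; the real content is \eqref{eq:mce2}.

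For \eqref{eq:mce2} I would first record how the structure functions transform. From $[\tilde D_i,\tilde D_j] = f_if_j[D_i,D_j] + f_i D_i(f_j)D_j - f_j D_j(f_i)D_i$ one reads off $\tilde c_{ij}^j = f_i\big(c_{ij}^j + D_i(\log f_j)\big)$, and similarly for $\tilde c_{ik}^k$. Substituting $\tilde g^0$, $\tilde D$, $\tilde\xi$ and these $\tilde c$ into the transformed \eqref{eq:mce2} and factoring out the common factor $f_i A$, the piece that reproduces the original \eqref{eq:mce2} vanishes by hypothesis, and what remains splits into three groups, proportional respectively to $D_i(\log A)$, to $D_i(\log B)$, and to $\log B$ (the last arising because $D_i(\log f_a)$ contains a term $\log B\cdot D_i(h_a)$).

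Each group then vanishes. The $D_i(\log A)$ coefficient is the identically zero combination $(g_j^0+g_k^0)+(g_i^0-g_j^0)+(g_i^0-g_k^0)-2g_i^0$. The $D_i(\log B)$ coefficient, after substituting $h_a = (g_1^0+g_2^0+g_3^0)-g_a^0$, collapses to $-2(g_1^0g_2^0+g_2^0g_3^0+g_3^0g_1^0)$, which is zero by \eqref{eq:mce1}. The $\log B$ coefficient collapses to $-\big[(g_j^0+g_k^0)D_i(g_i^0)+(g_i^0+g_k^0)D_i(g_j^0)+(g_i^0+g_j^0)D_i(g_k^0)\big]$, which is precisely the $D_i$-derivative of the left side of \eqref{eq:mce1}; since \eqref{eq:mce1} holds identically on $\spatialm$, this too vanishes. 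I expect this last cancellation to be the crux: the single algebraic constraint \eqref{eq:mce1} has to be used both undifferentiated (for the $D_i(\log B)$ terms) and differentiated along $D_i$ (for the $\log B$ terms), and recognizing the $\log B$ coefficient as $D_i$ of \eqref{eq:mce1} is the only nonobvious step.

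For (b) I would solve for $A$, $B$, $\sigma$ in turn, the sign hypotheses being exactly what is needed. Imposing $\tilde g_1^0 = \tfrac12$ forces $A = 1/(2g_1^0)$, smooth and positive since $g_1^0 > 0$. Because the transformation of part (a) preserves \eqref{eq:mce1} and the negative signs of $g_2^0,g_3^0$, the image triple satisfies $\tilde g_1^0 = \tfrac12$, $\tilde g_2^0,\tilde g_3^0 < 0$ and $\tilde g_2^0\tilde g_3^0 + \tfrac12(\tilde g_2^0+\tilde g_3^0) = 0$; solving this for $\tilde g_3^0$ and using both negativities shows $\tilde g_2^0 < -\tfrac12$, so $u := -2\tilde g_2^0 - 1 > 0$ and the triple equals $\tfrac12(1,-1-u,-1-\tfrac1u)$ automatically. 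Finally the all-distinct structure function transforms as $\tilde c_{23}^1 = (f_2f_3/f_1)c_{23}^1 = \sigma A B^{2g_1^0} c_{23}^1$, so $\tilde c_{23}^1 = -2$ forces $\sigma$ to equal minus the sign of $c_{23}^1$ (well defined since $c_{23}^1 \neq 0$) and then determines $B^{2g_1^0} = -4g_1^0/(\sigma c_{23}^1) > 0$, hence a unique positive $B$. Thus the transformation exists and is unique on exactly the locus where the sign hypotheses hold, which defines the partial map $\nmlz$; part (a) places its output in $\mcp$, and the conditions just arranged place it in $\mcn$.
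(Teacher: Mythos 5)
Your proof is correct and takes essentially the same route as the paper: part (a) is the direct verification of \eqref{eq:mce1}--\eqref{eq:mce2} under the transformation (which the paper leaves as ``direct calculation,'' noting only that the transformations later turn out to come from automorphisms of $\afree$), and part (b) solves for $A$, then $\sigma$ and $B$ via $\tilde c_{23}^1 = \sigma A B^{2g_1^0} c_{23}^1$, exactly as in the paper. Your explicit identification of the residual $D_i(\log B)$ and $\log B$ coefficients with \eqref{eq:mce1} and its $D_i$-derivative is the correct content of the omitted calculation.
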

\begin{proof}
  Claim (a) by direct calculation;
  we will see later that these transformations actually
  come from automorphisms of $\afree$.
  For (b),
  equation \eqref{eq:mce1} implies that the sum of
  any two $g_i^0$ is negative,
  so we must take
  $A = 1/(2g_1^0)$
  and we get
  $u = - (g_1^0+g_2^0)/g_1^0 > 0$,
  and we must choose $B$, $\sigma$ so that
  $\sigma A B^{2 g_1^0} c_{23}^1 = -2$.
\qed\end{proof}
%%%%%%%%%%%%%%%%%%%%%%%%%%%%%%%%%%%%%%%%%%%%%%%%%%%%%%%
%%%%%%%%%%%%%%%%%%%%%%%%%%%%%%%%%%%%%%%%%%%%%%%%%%%%%%%%%%%%%%%%%%%%%%%%%%%%
\begin{lemma} \label{lemma:outg}
  Given an element in $\mcn$ with
  $u \neq \tfrac{1}{2}, 2$ everywhere
   then
  \[
  \big(
  \spatialm,\;
  -\tfrac{1}{2},\;
  \tfrac{1}{2}(1-u),\;
  \tfrac{1}{2}(1-\tfrac{1}{u}),\;
  D_1  - \tfrac{2 g_3^4}{u-2} D_2 + \tfrac{2 u g_2^3}{2 u-1} D_3,\;
  D_2,\;
  D_3,\;
  \xi\big)
  \;\in\; \mcp
\]
\end{lemma}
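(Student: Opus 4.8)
The plan is to verify directly that the output tuple, call it $T' = (\spatialm, \hat g_1^0, \hat g_2^0, \hat g_3^0, D_1', D_2, D_3, \xi)$ with $\hat g_1^0 = -\tfrac12$, $\hat g_2^0 = \tfrac12(1-u)$, $\hat g_3^0 = \tfrac12(1-\tfrac1u)$ and $D_1' = D_1 - \tfrac{2g_3^4}{u-2}D_2 + \tfrac{2u g_2^3}{2u-1}D_3$, satisfies the two constraints \eqref{eq:mce1} and \eqref{eq:mce2} defining $\mcp$. Observe first that relative to the input $\mcn$ parametrization the new exponents are the shifts $\hat g_1^0 = g_1^0 - 1$, $\hat g_2^0 = g_2^0 + 1$, $\hat g_3^0 = g_3^0 + 1$; this is the leading-order BKL bounce, an incarnation of the Kasner map. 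The quadratic constraint \eqref{eq:mce1} for $T'$ is then a pointwise algebraic identity in $u$ alone: substituting the three values, the sum $\hat g_2^0\hat g_3^0 + \hat g_3^0 \hat g_1^0 + \hat g_1^0 \hat g_2^0$ telescopes to $\tfrac14(2 - u - \tfrac1u) - \tfrac14(1-\tfrac1u) - \tfrac14(1-u) = 0$, requiring no information about the frame or its derivatives.

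The substantive work is \eqref{eq:mce2} for $T'$. First I would record the structure functions ${c'}^{i}_{jk}$ of the new frame. Writing $D_1' = D_1 + aD_2 + bD_3$ with $a = -2g_3^4/(u-2)$, $b = 2ug_2^3/(2u-1)$, and $D_2' = D_2$, $D_3' = D_3$, the bracket $[D_2',D_3']$ is unchanged, while $[D_3',D_1']$ and $[D_1',D_2']$ acquire the expected terms $a,b$ times old structure functions together with $D_2(a),D_2(b),D_3(a),D_3(b)$; re-expressing $D_1 = D_1' - aD_2' - bD_3'$ gives each ${c'}^{i}_{jk}$ explicitly in terms of the old $c_{jk}^i$, of $a,b$, and their $D_2,D_3$-derivatives. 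The $\mcn$-normalization $c_{23}^1 = -2$ (that is $g_1^1 = 1$) simplifies these. I would also use the definitions of \defref{xpre}, in particular $g_3^4 = -c_{31}^1 + D_3(\xi)$ and $g_2^3 = -c_{12}^1 - D_2(\xi)$, to trade the coefficients $a,b$ for structure functions and $\xi$-derivatives.

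With the new structure functions in hand I would substitute into the three cyclic components of \eqref{eq:mce2} for $T'$ and reduce. The $i=2$ and $i=3$ components, for which the frame direction is unchanged, follow most quickly: expanding the shifted $\hat g^0$ and the altered structure functions ${c'}^{1}_{jk}$ and invoking the original constraint \eqref{eq:mce2} for $T \in \mcn$ leaves an identity. The $i=1$ component is the hard part: here $D_1'$ differentiates both $u$ and $\xi$, and the new structure functions feed in $D_2(a),D_2(b),D_3(a),D_3(b)$, hence second $\xi$-derivatives and derivatives of the structure functions. To cancel these I would differentiate the original constraints and combine them with the identity $[D_j,D_k]\xi = \tsum_p c_{jk}^p D_p(\xi)$ (that is \eqref{eq:mce3} applied to $\xi$) and, where derivatives of structure functions remain, the Jacobi identity for the frame, which relates the $D_i(c_{jk}^l)$. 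The coefficients $a,b$ are engineered precisely so that the residual resonant terms cancel, which is exactly why the denominators $u-2$ and $2u-1$ appear and why $u \neq \tfrac12, 2$ is required.

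I expect the $i=1$ component to be the main obstacle: organizing the second-order and differentiated-structure-function terms so that they cancel is delicate and offers little insight on its own. The conceptual explanation — that $T'$ is the asymptotic (future/past) limit of an element of $\MC(\abounce)$, so that this map is the leading-order scattering map and automatically preserves the constraints — only becomes available after the automorphisms of \secref{generality} and the bounce solution of \secref{bouncesol}; at this stage the verification is best carried out as a direct, if unrevealing, computation.
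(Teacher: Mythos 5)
Your proposal takes essentially the same route as the paper: the paper's proof is simply ``by direct unrevealing calculation, \eqref{eq:mce1} and \eqref{eq:mce2} hold,'' followed by exactly the conceptual remark you make at the end (that the tuple arises as the asymptotic limit of the $\MC(\abounce)$ element of \theoremref{bouncesol}). Your explicit check of \eqref{eq:mce1} and your outline of the \eqref{eq:mce2} verification are consistent with, and somewhat more detailed than, what the paper records.
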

\begin{proof}
  By direct unrevealing calculation, \eqref{eq:mce1} and \eqref{eq:mce2} hold.
  Alternatively and more rationally, note that the element
  of $\MC(\abounce)$ written out in
  \theoremref{bouncesol} below
  is asymptotic to an element of $\MC(\afree)$
  as $t \to -\infty$ and another as $t \to +\infty$,
  using the module identifications
  $\abounce \simeq \mathcal{U} \simeq \afree$.
\qed\end{proof}

%%%%%%%%%%%%%%%%%%%%%%
%%%%%%%%%%%%%%%%%%%%%%%%%%%%%%%%%%%%%%%%%%%%%%%%%%%%

\begin{theorem}[The inhomogeneous bounce map] \label{theorem:ibm}
  There is a partial map
  \[
      \mathcal{B}\;:\; \mcn \nrightarrow \mcn
  \]
  that assigns to every
  $X \in \mcn$,
  with associated
   $g_i^{0,1,2,3,4}$ and $u$, the following element
   $\mathcal{B}(X)$,
   in each case provided only that the partial map $\nmlz$ is defined:
  \begin{itemize}
    \item If $0 < u < 1$ and $u \neq \tfrac{1}{2}$ everywhere then set,
      with $\next{u} = \tfrac{1}{u}-1$,
      \begin{multline*}
        \mathcal{B}(X)
        \;\;
        = \;\;\nmlz\big(
  \spatialm,\;
  \tfrac{1}{2},\;
  -\tfrac{1}{2}(1+\next{u}),\;
  -\tfrac{1}{2}(1+\tfrac{1}{\next{u}}), \\
  \tfrac{1}{1-u}D_2,\;
  \tfrac{1}{1-u}D_3,\;
  \tfrac{1}{1-u}
  \big(D_1  - \tfrac{2 g_3^4}{u-2} D_2 + \tfrac{2 u g_2^3}{2 u-1} D_3\big),\;
  \xi + \log \tfrac{1}{1-u} \big)
      \end{multline*}
      %----------------------------------------
    \item If $u > 1$ and $u \neq 2$ everywhere then set,
      with $\next{u} = u-1$,
      \begin{multline*}
        \mathcal{B}(X)
        \;\;
        = \;\;\nmlz\big(
  \spatialm,\;
  \tfrac{1}{2},\;
  -\tfrac{1}{2}(1+\next{u}),\;
  -\tfrac{1}{2}(1+\tfrac{1}{\next{u}}), \\
  \tfrac{u}{u-1}D_3,\;
  \tfrac{u}{u-1}D_2,\;
  \tfrac{u}{u-1}
  \big(D_1  - \tfrac{2 g_3^4}{u-2} D_2 + \tfrac{2 u g_2^3}{2 u-1} D_3\big),\;
  \xi + \log \tfrac{u}{u-1} \big)
      \end{multline*}
  \end{itemize}
\end{theorem}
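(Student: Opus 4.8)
The plan is to reduce the whole statement to two structural facts about $\mcp$: it is invariant under the scaling transformations of \lemmaref{mcpNEW}(a), and it is invariant under permutations of the three indices $\{1,2,3\}$. Granting these, $\mathcal{B}$ is well defined because in each case the tuple to which $\nmlz$ is applied is obtained from the element of $\mcp$ produced by \lemmaref{outg} by a single scaling followed by a single index permutation; it therefore lies in $\mcp$, and $\mathcal{B}(X)=\nmlz(\cdots)\in\mcn$ exactly when $\nmlz$ is defined, which is precisely the proviso in the statement. So there is nothing to prove about $\nmlz$ itself beyond invoking \lemmaref{mcpNEW}(b).

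First I would record the permutation invariance. Constraint \eqref{eq:mce1} is visibly symmetric in $g_1^0,g_2^0,g_3^0$. In \eqref{eq:mce2}, for a fixed distinguished index $i$ the right-hand side depends only on $g_j^0+g_k^0$ and on $c_{ij}^j(g_i^0-g_j^0)+c_{ik}^k(g_i^0-g_k^0)$, both invariant under the swap $j\leftrightarrow k$; hence the three equations of \eqref{eq:mce2}, indexed by the distinguished index, are merely permuted among themselves under any relabeling of $\{1,2,3\}$. Since membership in $\mcp$ depends only on $g_i^0,D_i,\xi$ through \eqref{eq:mce1}--\eqref{eq:mce2} (the $g_i^{1,2,3,4}$ being derived), $\mcp$ is invariant under the full symmetric group acting by $D_i\mapsto D_{\pi(i)}$, $g_i^0\mapsto g_{\pi(i)}^0$, including orientation-reversing transpositions, which is what the second case needs.

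The remaining step is the explicit match, which I would carry out with scaling parameter $B=1$ and $\sigma=1$ throughout, so that each frame vector is simply multiplied by $A$ and then reordered. Write $Y\in\mcp$ for the element of \lemmaref{outg}, with $g^0$-data $(-\tfrac12,\tfrac12(1-u),\tfrac12(1-\tfrac1u))$ and frame $(D_1-\tfrac{2g_3^4}{u-2}D_2+\tfrac{2ug_2^3}{2u-1}D_3,\,D_2,\,D_3)$. For $0<u<1$ I would scale $Y$ by $A=\tfrac{1}{1-u}$ and apply the cyclic relabeling $(1,2,3)\mapsto(2,3,1)$: the $g^0$-data become $(-\tfrac{1}{2(1-u)},\tfrac12,-\tfrac{1}{2u})$, which after the cyclic shift is $(\tfrac12,-\tfrac{1}{2u},-\tfrac{1}{2(1-u)})$, matching $(\tfrac12,-\tfrac12(1+\next{u}),-\tfrac12(1+\tfrac{1}{\next{u}}))$ with $\next{u}=\tfrac1u-1$; the frame becomes $(\tfrac{1}{1-u}D_2,\tfrac{1}{1-u}D_3,\tfrac{1}{1-u}(D_1-\cdots))$ and $\xi\mapsto\xi+\log\tfrac{1}{1-u}$, exactly the first bullet. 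For $u>1$ I would scale by $A=\tfrac{u}{u-1}$ and apply the transposition $1\leftrightarrow3$: the $g^0$-data become $(-\tfrac{u}{2(u-1)},-\tfrac{u}{2},\tfrac12)$ and after the swap $(\tfrac12,-\tfrac{u}{2},-\tfrac{u}{2(u-1)})$, matching the second bullet with $\next{u}=u-1$, and the frame and $\xi$ match likewise. In both cases $A$ is smooth and positive under the case hypothesis, so the scaling is legitimate, and the sign pattern $g_1^0>0$, $g_2^0,g_3^0<0$ required by $\nmlz$ holds automatically.

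The main obstacle is not conceptual but the careful bookkeeping of this scaling-then-permutation, together with the one genuinely structural point that \eqref{eq:mce2} is symmetric in $j\leftrightarrow k$, which licenses the orientation-reversing transposition in the $u>1$ case. As a sanity check and a more conceptual route, the same conclusion follows from \theoremref{bouncesol}: that element of $\MC(\abounce)$ has asymptotic limits in $\MC(\afree)$ as $t\to\pm\infty$ (as already used in \lemmaref{outg}), and $\mathcal{B}$ simply records the passage from the past to the future limit, so its output is an MC-element by construction rather than by computation.
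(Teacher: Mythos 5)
Your proposal is correct and follows essentially the same route as the paper: apply \lemmaref{outg} to land in $\mcp$, rescale via \lemmaref{mcpNEW}(a) (with $B=1$, $\sigma=1$), relabel the indices so the positive $g^0$ sits in the first slot (the cyclic shift for $0<u<1$, the transposition $1\leftrightarrow 3$ for $u>1$), and finish with $\nmlz$. The paper's proof is terser, but your added bookkeeping --- the explicit scaling factors, the permutation invariance of \eqref{eq:mce1}--\eqref{eq:mce2}, and the check of the sign pattern needed for $\nmlz$ --- is exactly the content it leaves implicit.
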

\begin{proof} By Lemmas \ref{lemma:mcpNEW}, \ref{lemma:outg}.
  Permute $1,2,3 \mapsto 3,1,2$
  respectively $1,2,3 \mapsto 3,2,1$
  so that $\frac{1}{2}(1-u) > 0$
  respectively $\frac{1}{2}(1-\frac{1}{u}) > 0$
  appear first,
   so that $\nmlz$ has a chance of being defined.
  We have applied
  a transformation in \lemmaref{mcpNEW} (a)
  to bring out the map $u \mapsto \next{u}$
  explicitly (this map is well-known from the BKL papers).
\qed\end{proof}

If $\mathcal{B}$ is not defined for some element,
it may be defined after restricting to an 
open subset of $\spatialm \simeq \R^3$.
So iterating $\mathcal{B}$
 may require restricting to subsets, again and again.

The conceptual interpretation of $\mathcal{B}$ is to come.
As remarked in the proofs of Lemmas \ref{lemma:mcpNEW} and \ref{lemma:outg},
it arises from taking past and future limits of
elements in $\MC(\abounce)$ and applying automorphisms to
bring them into a normal form.
These automorphisms can be interpreted as
selecting a distinguished coordinate system and frame.

%%%%%%%%%%%%%%%%%%%%%%%%%%%%%%%%%%%%%%%%%%%%%%%%%%%%%%%%%%%%%%%%%%%%%%%%%%%%%

\section{The constraint equations} \label{sec:constraints}

We parametrize the space of real analytic germs
of solutions to the constraint equations in \lemmaref{mcefreenf},
at the origin $0 \in \R^3$.
We confine ourselves to real analytic germs for simplicity.
Every representative of a germ is defined on some open ball,
so we get an actual solution
by defining $\spatialm$ to be such a ball.
This proves existence of elements in $\MC(\afree)$
and, via \theoremref{bouncesol} below, of $\MC(\abounce)$.
\begin{itemize}
  \item The coordinates on $\R^3$ are denoted $x^1,x^2,x^3$,
    the partial derivatives $\p_1,\p_2,\p_3$.
  \item We assume
    \begin{equation} \label{eq:assx1}
      D_1 = \p_1
      \qquad
      D_2^1(0) > 0
      \qquad
      D_3^1(0) > 0
    \end{equation}
where $D_i = D_i^1 \p_1 + D_i^2 \p_2 + D_i^3 \p_3$.
There is always a local real analytic
change of coordinates that maps the origin to itself and brings $D_1$, $D_2$, $D_3$
into this form.
\item We assume,
  with $a$ and $u > 0$ germs of real analytic functions,
  \begin{equation} \label{eq:assx2}
      (g_1^0,g_2^0,g_3^0) \;=\; \tfrac{1}{2} e^a (1,-1-u,-1-\tfrac{1}{u})
    \end{equation} 
\end{itemize}
\begin{lemma}[All real analytic germs of elements in $\MC(\afree)$] \label{lemma:rag}
  Let $\parsp$ be the set of all
  real analytic germs at $0 \in \R^3$ of functions
  \[
        D_i^j,\; c_{ij}^k,\; g_i^0,\; \xi 
  \]
  that satisfy \eqref{eq:mce3}, \eqref{eq:mce1}, \eqref{eq:mce2}, 
  \eqref{eq:assx1}, \eqref{eq:assx2},
  in particular the $g_i^0$ are given in terms of $(a,u)$.
Let $\datasp$ be the set of
  6 real analytic germs at $0 \in \R^3$
  and 9 real analytic germs at $0 \in \{ x \in \R^3 \mid x^1 = 0\}$,
  that by an abuse of notation we denote by:
  \begin{itemize}
    \item Germs $c_{31}^i$ and $c_{12}^i$ at $0 \in \R^3$.
    \item Germs $D_2^i$, $D_3^i$, $a$, $u$, $\xi$
      at $0 \in \{x \in \R^3 \mid x^1 = 0\}$,
      subject to $D_2^1(0),\,D_3^1(0),\,u(0)>0$.
  \end{itemize}
  Then the map $\parsp \to \datasp$,
   given  by the identity on the first 6 and by restriction
   to the $x^1 = 0$ hypersurface for the other 9,
   is bijective.
\end{lemma}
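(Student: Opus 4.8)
The plan is to read the constraint system of \defref{xpre} as a Cauchy problem in the $x^1$-direction and solve it by Cauchy--Kovalevskaya, with $\{x^1=0\}$ as initial hypersurface and the germs $c_{12}^i,\,c_{31}^i$ as prescribed analytic coefficients. First I would dispose of \eqref{eq:mce1}: substituting the parametrization \eqref{eq:assx2} into \eqref{eq:mce1} gives $\tfrac14 e^{2a}\big[(2+u+\tfrac1u)-(1+\tfrac1u)-(1+u)\big]=0$ identically, so \eqref{eq:mce1} imposes no condition. This is exactly why the $g_i^0$ are carried as the pair $(a,u)$; the genuine unknowns are then $D_2^i,D_3^i,a,u,\xi$, with $g_i^0$ an explicit analytic function of $(a,u)$.

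Next I would put the remaining equations into evolution form using $D_1=\p_1$ from \eqref{eq:assx1}. The brackets $[\p_1,D_2]$ and $[\p_1,D_3]$ in \eqref{eq:mce3} read $\p_1 D_2^i = c_{12}^1\delta_1^i + c_{12}^2 D_2^i + c_{12}^3 D_3^i$ and likewise for $D_3$, so they express $\p_1 D_2,\p_1 D_3$ through the prescribed germs $c_{12}^i,\,c_{31}^i=-c_{13}^i$ and the current frame. The third bracket $[D_2,D_3]=\sum_i c_{23}^i D_i$ then determines $c_{23}^i$ algebraically; this is solvable because $D_1,D_2,D_3$ is a frame at $0$ and hence nearby. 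Finally the three equations \eqref{eq:mce2}, for $i=1,2,3$, are viewed as a linear system $M\,(\p_1 a,\p_1 u,\p_1\xi) = (\text{terms in }\p_2,\p_3\text{ and lower order})$, whose symbol matrix $M$ collects the $\p_1$-coefficients of \eqref{eq:mce2}.

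The crux, and the step I expect to be the main obstacle, is to verify that $\{x^1=0\}$ is non-characteristic, i.e. that $M$ is invertible. Computing the $\p_1$-symbols of \eqref{eq:mce2} and taking the determinant yields
\[
  \det M \;=\; -\tfrac12\, e^{3a}\, D_2^1 D_3^1\, \frac{(1+u+u^2)^2}{u^3},
\]
which is nonzero precisely when $D_2^1,D_3^1\neq 0$ and $u\neq 0$. The hypotheses $D_2^1(0),D_3^1(0),u(0)>0$ thus make $M$ invertible at the origin, hence on a neighborhood by continuity, and also keep $D_1,D_2,D_3$ a frame near $0$ so that $c_{23}$ remains well defined. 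On this region the full system acquires Cauchy--Kovalevskaya form: every $x^1$-derivative is an analytic function of the unknowns and their $\p_2,\p_3$-derivatives, with $c_{12}^i,c_{13}^i$ appearing as prescribed analytic coefficients. (The reduction to $(1+u+u^2)^2/u^3$ after the cancellations in the $3\times 3$ determinant is the one genuinely delicate calculation.)

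With non-characteristicity in hand I would invoke Cauchy--Kovalevskaya directly. Existence and uniqueness of a real analytic germ solving the evolution system, for prescribed data $D_2^i,D_3^i,a,u,\xi$ on $\{x^1=0\}$ and prescribed germs $c_{12}^i,c_{31}^i$ at $0$, is exactly the asserted bijectivity of $\parsp\to\datasp$: injectivity is uniqueness, surjectivity is existence. It remains only to confirm that a Cauchy--Kovalevskaya solution genuinely lies in $\parsp$, and here no separate constraint-propagation argument is needed: \eqref{eq:mce1} holds by \eqref{eq:assx2}; \eqref{eq:mce3} holds by construction, since the prescribed $c_{12},c_{13}$ are the commutators $[\p_1,D_\bullet]$ and the Jacobi identity is automatic for honest vector fields; and \eqref{eq:mce2} holds because the $(a,u,\xi)$-evolution was defined to solve it. Conditions \eqref{eq:assx1}, \eqref{eq:assx2} are built in, so no equation survives as a residual hypersurface constraint, and the $6+9$ count of free data in $\datasp$ matches the evolution system exactly.
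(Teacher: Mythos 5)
Your proof is correct and rests on the same core device as the paper's: a Cauchy--Kovalevskaya evolution off the non-characteristic hypersurface $\{x^1=0\}$, with the same split of the data into the $6$ prescribed germs $c_{12}^i, c_{31}^i$ and the $9$ restrictions, and with \eqref{eq:mce1} discharged identically via \eqref{eq:assx2}. Where you genuinely diverge is in the treatment of $c_{23}^i$: you eliminate it algebraically from $[D_2,D_3]=\sum_i c_{23}^i D_i$ (substituting the $\p_1 D_2^j,\p_1 D_3^j$ evolution equations so that $c_{23}^i$ becomes a function of the unknowns and their $\p_2,\p_3$-derivatives), which reduces the system to $9$ equations for $9$ unknowns and, as you note, makes any constraint-propagation argument unnecessary. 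The paper instead keeps $c_{23}^i$ among the unknowns, evolves it by the differentiated Jacobi identity \eqref{eq:mcx} ($12$ equations, $12$ unknowns), and must then argue separately that the omitted bracket relation $[D_2,D_3]=\sum_i c_{23}^i D_i$ holds everywhere because it holds on $\{x^1=0\}$ and propagates via \eqref{eq:mcx}. Your route buys a shorter argument and an explicit non-characteristicity certificate (the determinant $-\tfrac12 e^{3a}D_2^1D_3^1(1+u+u^2)^2/u^3$, which I spot-checked and which makes precise the paper's ``short calculation''); the paper's route buys a purely first-order quasilinear system in which $c_{23}^i$ never has to be solved for, at the cost of the propagation lemma. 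Two caveats: your algebraic elimination needs $D_1,D_2,D_3$ to be a frame near $0$, which does not follow from \eqref{eq:assx1} alone (e.g.\ $D_2=D_3$ is not excluded by $D_2^1(0),D_3^1(0)>0$) --- though the paper's claim that \eqref{eq:mce3} ``uniquely determines'' $c_{23}^i|_{x^1=0}$ has the same implicit dependence on the frame condition from \lemmaref{overp}, so this is a shared convention rather than a gap of yours; and you should state explicitly that after substitution the $c_{23}^i$-terms in \eqref{eq:mce2} contribute no $\p_1$-derivatives of $a,u,\xi$, so that your $3\times 3$ symbol matrix $M$ really is the full $\p_1$-symbol of that block.
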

\begin{proof}
  We use Cauchy Kowalewski
  to construct a unique element in $\parsp$
  for a given element in $\datasp$.
  Equation \eqref{eq:mce1} is automatic
  for elements of $\parsp$ since we assume \eqref{eq:assx2}.
  Given an element of $\datasp$,
  equation \eqref{eq:mce3} uniquely determines
  the restriction of $c_{23}^i$ to $x^1 = 0$.
  By the Jacobi identity for vector fields, we necessarily have in $\parsp$,
  \begin{equation} \label{eq:mcx}
    \tsum_{\textnormal{cyclic $(i,j,k)$}} (\tsum_p c_{jk}^p c_{ip}^q + D_i(c_{jk}^q)) = 0
    \end{equation}
  We apply Cauchy Kowalewski to the following data:
  \begin{itemize}
      \item We use $x^1 = 0$ as the initial hypersurface.
    \item The unknown vector is
      $v = (D_2^i, D_3^i, c_{23}^i, a, u, \xi)$.
      Its restriction to $x^1 = 0$ is given by the data
      in $\datasp$ or, in the case of $c_{23}^i$, forced by the data
      as explained above.
    \item The system of differential equations consists of
      equation \eqref{eq:mce2}, 
       the equations for $[D_3,D_1]$ and $[D_1,D_2]$
      in \eqref{eq:mce3},
      and equation \eqref{eq:mcx}.
  \end{itemize}
  So there are 12 scalar unknowns and 12 scalar equations.
  Using $D_2^i(0), D_3^i(0), u(0) > 0$ one can bring these 12 equations
  into the quasilinear form
  \[
    \p_1 v \;=\; A(x,v) + A_2(x,v) \p_2 v
    + A_3 (x,v) \p_3 v
  \]
  with $A$, $A_2$, $A_3$ real analytic in a neighborhood of
  $(0,v(0))$. In fact, \eqref{eq:mce2} 
  can be solved for $\p_1 a$, $\p_1 u$,
  $\p_1 \xi$ which follows from a short calculation.
  By Cauchy Kowalewski there exists a unique solution germ
  for these 12 equations.
  This is not yet the system defining $\parsp$.
  To see that the equation for $[D_2,D_3]$
  in \eqref{eq:mce3} also holds, observe that it holds along
  $x^1 = 0$ by choice of $c_{23}^i$,
  and then everywhere by \eqref{eq:mcx}.
\qed\end{proof}

%%%%%%%%%%%%%%%%%%%%%%%%%%%%%%%%%%%%%%%%%%%%%%%%%%%%%%%%%%%%%%%%%%%%%%%%%%%%%

\section{Maurer-Cartan elements in $\abounce$} \label{sec:bouncesol}

The following theorem has already been referred to.
It is the result of a computer calculation.
It is somewhat remarkable that one has a closed form solution
for elements in $\MC(\abounce)$,
and we only found it after experimentation
with different $\E_G$ spaces, and
different over-parametrizations as in \lemmaref{overp}.

%-----------------------------------------------
\begin{theorem} \label{theorem:bouncesol}
  Suppose an element of $\mcn$ is given that satisfies
 $u \neq \tfrac{1}{2}, 2$ everywhere,
 with their associated $g_i^{0,1,2,3,4} \in C^\infty(\spatialm,\R)$.
  Then an element of
  \[
    \MC(\abounce) \cap (\theta_0 D_0 + {\abounce^1}_{,G})
  \]
  is given,
  setting $\transition = \tfrac{1}{2}(1+\tanh t)$
  and using the over-parametrization \lemmaref{overp}, by
\begin{align*}
\mu_1  \;&=\;  -\tfrac{1}{2} \log(2\cosh t) &
\gamma_1^2  \;&=\;  g_1^2
                    +\tfrac{2 \mathbf{A}_3}{(u-2) (2 u-1)} \chi\\
\mu_2  \;&=\;  -\tfrac{tu}{2} + \tfrac{1}{2} \log(2\cosh t) &
\gamma_2^2  \;&=\;  g_2^2\\
\mu_3  \;&=\;  -\tfrac{t}{2u} + \tfrac{1}{2} \log(2\cosh t) &
\gamma_3^2  \;&=\;  g_3^2 \displaybreak[0]\\
\beta_1  \;&=\;  D_1  - \tfrac{2 g_3^4}{u-2} \chi D_2 + \tfrac{2 u g_2^3}{2 u-1} \chi D_3 &
\gamma_1^3  \;&=\;  g_1^3
                    -\tfrac{2\mathbf{A}_4}{(u-2) (2 u-1)^2} \chi
                   -\tfrac{8 u g_2^3 g_3^4}{(u-2) (2 u-1)} \chi^2\\
\beta_2    \;&=\; D_2 &
\gamma_2^3  \;&=\;  g_2^3-\tfrac{4 u g_2^3}{2 u-1} \chi\\
\beta_3    \;&=\; D_3 &
\gamma_3^3  \;&=\;  g_3^3+\tfrac{4 g_3^4}{u-2} \chi \displaybreak[0]\\
\gamma_1^0  \;&=\;  \tfrac{1}{2}-\chi &
\gamma_1^4  \;&=\;  g_1^4
                   -\tfrac{2 \mathbf{A}_5}{(u-2)^2 (2 u-1)} \chi
                     -\tfrac{8 u g_2^3 g_3^4}{(u-2) (2 u-1)} \chi^2 \\
\gamma_2^0  \;&=\;  -\tfrac{1}{2} (1+u)+\chi &
\gamma_2^4  \;&=\;  g_2^4-\tfrac{4 u g_2^3}{2 u-1} \chi\\
\gamma_3^0  \;&=\;  -\tfrac{1}{2} (1+\tfrac{1}{u})+\chi &
\gamma_3^4  \;&=\;  g_3^4+\tfrac{4 g_3^4}{u-2} \chi \displaybreak[0]\\
\gamma_1^1  \;&=\;  1 &
\gamma_1^6  \;&=\;  0\\
\gamma_2^1  \;&=\;  g_2^1
                    +\tfrac{\mathbf{A}_1}{(u-2)^2} \chi
                    +\tfrac{4 (g_3^4)^2}{(u-2)^2} \chi^2 &
\gamma_2^6  \;&=\;  0\\
\gamma_3^1  \;&=\;  g_3^1
                    +\tfrac{\mathbf{A}_2}{(2 u-1)^2} \chi
                    +\tfrac{4 u^2 (g_2^3)^2}{(2 u-1)^2} \chi^2 &
\gamma_3^6  \;&=\;  0
\end{align*}
and
\begin{align*}
  \gamma_1^5  \;&=\;  (-\tfrac{4 u g_2^3 g_3^4}{(u-2) (2 u-1)} \chi
-\tfrac{\mathbf{A}_6}{2 u (1+u^2)^2}
+\tfrac{\mathbf{A}_7}{2 u^2 (1+u^2)} t) \sech t
\\
\gamma_2^5  \;&=\;  -\tfrac{2 u g_2^3}{2 u-1} \sech t\\
\gamma_3^5  \;&=\;  \tfrac{2 g_3^4}{u-2} \sech t
\end{align*}
%---------------------------------------
with the abbreviations $\mathbf{A}_1,\ldots,\mathbf{A}_7 \in C^{\infty}(\spatialm,\R)$ given by
\begin{align*}
  \mathbf{A}_1 & = -2 D_3(g_3^4)+u D_3(g_3^4)-D_3(u) g_3^4-2 g_3^3 g_3^4+u g_3^3 g_3^4-2 (g_3^4)^2+u (g_3^4)^2\displaybreak[0]\\
  \mathbf{A}_2 & = -u D_2(g_2^3)+2 u^2 D_2(g_2^3)-D_2(u) g_2^3+u (g_2^3)^2-2 u^2 (g_2^3)^2\\
  & \qquad +u g_2^3 g_2^4-2 u^2 g_2^3 g_2^4\displaybreak[0]\\
  \mathbf{A}_3 & = -2 u g_2^3 g_3^2+u^2 g_2^3 g_3^2+g_2^2 g_3^4-2 u g_2^2 g_3^4\displaybreak[0]\\
  \mathbf{A}_4 & = 2 u D_3(g_2^3)-5 u^2 D_3(g_2^3)+2 u^3 D_3(g_2^3)+2 D_3(u) g_2^3-u D_3(u) g_2^3\\
  & \qquad +2 u g_2^3 g_3^4-5 u^2 g_2^3 g_3^4+2 u^3 g_2^3 g_3^4-g_2^4 g_3^4+4 u g_2^4 g_3^4-4 u^2 g_2^4 g_3^4\displaybreak[0]\\
  \mathbf{A}_5 & = 2 D_2(g_3^4)-5 u D_2(g_3^4)+2 u^2 D_2(g_3^4)+4 u g_2^3 g_3^3-4 u^2 g_2^3 g_3^3+u^3 g_2^3 g_3^3\\
  & \qquad +D_2(u) g_3^4-2 u D_2(u) g_3^4-2 g_2^3 g_3^4+5 u g_2^3 g_3^4-2 u^2 g_2^3 g_3^4\displaybreak[0]\\
  \mathbf{A}_6 & = -4 u^3 D_1(u)+2 u D_2(D_3(u))-2 u^3 D_2(D_3(u))+u^2 D_2(g_3^2)+u^4 D_2(g_3^2)\\
  & \qquad +u^2 D_2(g_3^4)+u^4 D_2(g_3^4)-4 D_2(u) D_3(u)+u^2 D_3(g_2^2)+u^4 D_3(g_2^2)\\
  & \qquad -u^2 D_3(g_2^3)-u^4 D_3(g_2^3)+2 u^2 g_1^3+2 u^4 g_1^3+2 u^2 g_1^4+2 u^4 g_1^4\\
  & \qquad +2 u^3 D_3(u) g_2^2-2 u^3 D_3(u) g_2^3-2 u^3 D_2(u) g_3^2+u^2 g_2^3 g_3^2+u^4 g_2^3 g_3^2\\
  & \qquad -u^2 g_2^4 g_3^2-u^4 g_2^4 g_3^2+2 u D_2(u) g_3^3-2 u^3 D_2(u) g_3^3+u^2 g_2^2 g_3^3+u^4 g_2^2 g_3^3\\
  & \qquad -u^2 g_2^3 g_3^3-u^4 g_2^3 g_3^3-2 u D_2(u) g_3^4-u^2 g_2^2 g_3^4-u^4 g_2^2 g_3^4\\
  & \qquad -2 u^2 g_2^3 g_3^4-2 u^4 g_2^3 g_3^4-u^2 g_2^4 g_3^4-u^4 g_2^4 g_3^4\displaybreak[0]\\
  \mathbf{A}_7 & = 2 u^3 D_1(u)-u D_2(D_3(u))+u^3 D_2(D_3(u))+2 D_2(u) D_3(u)-u^3 D_3(u) g_2^2\\
  & \qquad +u^3 D_3(u) g_2^3+u^3 D_2(u) g_3^2-u D_2(u) g_3^3+u^3 D_2(u) g_3^3+u D_2(u) g_3^4
\end{align*}
\end{theorem}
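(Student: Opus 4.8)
The plan is to verify the Maurer--Cartan equation $[x,x]=0$ in $\abounce$ directly for the explicit degree-one element $x = \theta_0 D_0 + y$, where $y \in {\abounce^1}_{,G}$ is the expression produced by the over-parametrization of \lemmaref{overp} with the coefficients $\mu_i,\beta_i,\gamma_i^{0,\dots,6}$ substituted from the statement. I would compute in the master space $\mathcal{U}\simeq\abounce$ of \lemmaref{overp}, expanding each bracket through Table~\ref{table:EG} and, in the associated graded for $(s_2,s_3)$, keeping only the top $(p_2,p_3)$-degree part while letting the $p_1$-index run freely. Since $\theta_0 \wedge \theta_0 = 0$ and $D_0$ annihilates the basis $\theta_0,\dots,\theta_3$, one has $[\theta_0 D_0,\theta_0 D_0]=0$, so graded symmetry reduces the equation to $2[\theta_0 D_0,y] + [y,y] = 0$. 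The first term supplies the time derivatives $\partial_t$ of the coefficient functions, because the anchor of $\theta_0 D_0$ acts on $C^\infty$-coefficients as $D_0=\partial/\partial t$; the second term supplies the algebraic and spatial bilinears built from the structure functions $c^i_{jk}$.

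The computation is then organized by the target grading components, and I expect each scalar equation to take the form $\partial_t(\text{coefficient}) = (\text{bilinear source})$. All time-dependence enters through $\chi$ and $\sech t$, governed by $\chi' = 2\chi(1-\chi)$ and $(\sech t)' = -(2\chi-1)\sech t$; substituting these identities collapses every such equation into an algebraic relation among the $g_i^{0,\dots,4}$, their frame derivatives $D_i(\cdot)$, and the $c^i_{jk}$, which is exactly the content of the constraint equations \eqref{eq:mce1}, \eqref{eq:mce2} defining $\mcn$ together with the definitions of $g_i^{1,2,3,4}$. The quadratic-in-$\chi$ corrections, carrying the cross terms $g_2^3 g_3^4$ and the packaged sources $\mathbf{A}_1,\dots,\mathbf{A}_7$, close up only through the denominators $u-2$ and $2u-1$, which is precisely where the hypothesis $u \neq \tfrac12, 2$ is used. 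As a guide and consistency check, the formula is pinned down by its two asymptotic limits: as $t\to-\infty$ one has $\chi\to0$, $\sech t\to0$, $t\,\sech t\to0$ and $\mu_i\to t\,\gamma_i^0$, so $x$ limits to the embedding of the given $\mcn$ element of \lemmaref{mcefreenf}, while the $t\to+\infty$ limit recovers the free element of \lemmaref{outg}.

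A conceptual explanation for why such an interpolation between two free solutions remains Maurer--Cartan is a curvature-propagation argument. Setting $R=\tfrac12[x,x]\in\abounce^2$, the graded Jacobi identity gives $[x,R]=0$, hence $[\theta_0 D_0,R]=-[y,R]$; since $[\theta_0 D_0,-]$ acts on the coefficients of $R$ as $\partial_t$ plus a degree-preserving algebraic operator, this is a first-order linear homogeneous transport system for the finitely many components of $R$ along the $t$-lines, and vanishing at one end would force $R\equiv0$. The catch is that the $\abounce$-bracket retains strictly more than the $\afree$-bracket --- all the lower-$p_1$ pieces, not just the top one --- so the $t\to-\infty$ limit being Maurer--Cartan in $\afree$ is weaker than what is needed in $\abounce$; one must separately check that these extra lower-$p_1$ components of $R$ also vanish at the limit. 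Thus the propagation argument explains the structure but does not by itself eliminate the computation.

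The main obstacle is bulk rather than concept: there are on the order of thirty coupled scalar fields, and arranging the cancellations so that they are manifest rather than accidental is delicate, especially the associated-graded bookkeeping with the $p_1$-index free and the $(p_2,p_3)$-indices taken to leading order, and the verification that the transport system is genuinely source-free at the excluded values of $u$. For this reason the reliable verification is a computer-algebra calculation, as the authors note; the human-checkable skeleton is the asymptotic matching to $\MC(\afree)$ at $t=\pm\infty$ via Lemmas \ref{lemma:mcefreenf} and \ref{lemma:outg}, the two ordinary differential identities for $\chi$ and $\sech t$, and the constraint equations \eqref{eq:mce1}, \eqref{eq:mce2}, which carry all the spatial content.
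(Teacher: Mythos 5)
Your proposal matches the paper's proof, which is exactly "by direct computer calculation, using the constraint equations in Definition \ref{def:xpre}" with the remark that the denominators are nonzero because $u, u-2, 2u-1 \neq 0$. Your added scaffolding --- the reduction of $2[\theta_0 D_0,y]+[y,y]=0$ to transport equations collapsed by $\chi'=2\chi(1-\chi)$ and $(\sech t)'=-(2\chi-1)\sech t$, and the asymptotic matching to Lemmas \ref{lemma:mcefreenf} and \ref{lemma:outg} --- is a correct and somewhat more informative account of the same verification.
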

%----------------------------------
\begin{proof}
  By direct computer calculation, using the constraint equations
  in \defref{xpre}.
  All denominators are nonzero since $u,\,u-2,\,2u-1 \neq 0$ everywhere.
\qed\end{proof}
%-----------------------------------------------------------------
\begin{remark} \label{remark:xsw}
  This solution satisfies
  $\lim_{t \to -\infty} \gamma_i^{0,1,2,3,4} = g_i^{0,1,2,3,4}$
  with exponentially fast convergence, and as $|t| \to \infty$ we have,
again exponentially quickly,
\begin{align*}
  \mu_1 & \to -\tfrac{1}{2}|t|&
  \mu_2 & \to -\tfrac{tu}{2} + \tfrac{1}{2}|t|&
  \mu_3 & \to -\tfrac{t}{2u} + \tfrac{1}{2}|t|&
  \gamma_i^5 & \to 0
\end{align*}
In this sense, one recovers
the given element of
$\mcn$ as the past limit,
and another element of $\mcp$ as the future limit,
see \lemmaref{outg} and \theoremref{ibm}.
This is convergence in a rather weak sense,
future work will have to strengthen this
in order to construct the speculative 
($L_\infty$) scattering map mentioned in the introduction.
\end{remark}
%%%%%%%%%%%%%%%%%%%%%%%%%%%%%%%%%%%%%%%%%%%%%%%%%%%%%%%%%%%%%%%%%%%%%%%%%%%%%

\section{Some automorphisms of $\pbounce$ and $\pfree$} \label{sec:generality}

Automorphisms provide conceptual background for various statements
that have been made. 
From a broader perspective,
the BKL problem is sufficiently complicated that
fixing a gauge globally is unlikely to succeed,
one will have to invoke automorphisms.
The normalization operator $\nmlz$ in 
\theoremref{ibm} exemplifies this point of view.

\begin{theorem}[The $1+3$ groupoid and isomorphisms] \label{theorem:g13}
  \begin{itemize}
    \item There is a groupoid, the `1+3 groupoid',
      where an object is as in\footnote{%
        The objects are in one-to-one correspondence
      with manifolds $\spatialm \simeq \R^3$.} \defref{fbm},
      and a
      morphism is a map $\Phi : \R \times \spatialm \to \R \times \spatialm'$,
      $(t,x) \mapsto (a(x)t+b(x),\varphi(x))$
together with a bundle isomorphism
given on sections by
$W' \to W$, 
$\sum_i f_i \theta_i \mapsto \sum_i c (f_i \circ \Phi) \theta_i$,
with $a,b,c \in C^\infty(\spatialm,\R)$ and $a,c > 0$ and $\varphi \in \Diff^\infty(\spatialm,\spatialm')$.
\item A functor from the 1+3 groupoid into the category of real gLa is given
  by the construction of $\E$.
  As a special case, morphisms in the 1+3 groupoid with $\varphi
  = \mathbbm{1}$
  are mapped to the isomorphism induced by\footnote{%
  Morphisms with $a=c=1$
  and $b=0$ and general $\varphi$ are clear.
  }:
  \begin{align*}
    f \;&\mapsto\; f \circ \Phi &
    \theta_0 \;& \mapsto\; c\theta_0 &
  \theta_i \; &\mapsto\; c\theta_i\\
       \sigma_0 \;&\mapsto\; \sigma_0 &
       \sigma_i \;& \mapsto\; \sigma_i &
    \sigma_{ij}\;& \mapsto\; \sigma_{ij}\\
    D_0 \;&\mapsto\; \tfrac{1}{a} D_0 &
  X \; & \mapsto\; \mathrlap{X - \tfrac{1}{a} X(at+b) D_0
  - \tfrac{1}{c} X(c) \sigma_0}
  \end{align*}
  for all $f \in \RR$
  and all $X \in \Der(\RR(\spatialm,\R))$ extended by $X(t)=0$.
  As always, the
  basis-dependent inclusion $\Der(\RR) \inj \CDerEnd(W)$ is implicit.
\item This functor induces
  functors into the category of
 real filtered gLa, by the construction of the 1- or 2- or 3-index
  BKL filtrations.
  The corresponding isomorphisms
  of $\E[[s_2,s_3]]$, $\E[[s_1,s_2,s_3]]$
  restrict to isomorphisms of
  $\pfree$, $\pbounce$
  and these, in turn, induce isomorphisms of the associated gradeds
  $\afree$, $\abounce$. 
  \end{itemize}
\end{theorem}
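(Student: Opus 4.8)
The plan is to derive all three bullets from the naturality of the construction of $\E$ in \citeone, treating them in order. For the groupoid structure I would verify the axioms by direct computation: composing $\Phi:(t,x)\mapsto(a(x)t+b(x),\varphi(x))$ with $\Phi':(t',x')\mapsto(a'(x')t'+b'(x'),\varphi'(x'))$ yields a map that is again affine in $t$, with coefficients $a''=(a'\circ\varphi)a$, $b''=(a'\circ\varphi)b+b'\circ\varphi$, spatial part $\varphi'\circ\varphi$, and conformal factor $c''=(c'\circ\varphi)c$; since $a,a',c,c'>0$ these stay positive, so composites remain in the stated class. The identity is $a=c=1$, $b=0$, $\varphi=\mathbbm{1}$; inverses exist because $a,c>0$ and $\varphi$ is a diffeomorphism; associativity is inherited from composition of maps and of bundle maps.

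For the functor into real gLa, the key point is that the \citeone construction of $\E$ is natural in its input data, a rank-four bundle with conformal inner product over the base together with the anchor into $\Der(\RR)$, so it suffices to recognize a 1+3 morphism as an isomorphism of that data. The one substantive check is that the bundle map $\sum f_i\theta_i'\mapsto\sum c(f_i\circ\Phi)\theta_i$ is conformal: from $\langle c\theta_i,c\theta_j\rangle=c^2\langle\theta_i,\theta_j\rangle$ it rescales the inner product by $c^2$ and hence preserves its conformal class, which is precisely why the conformal rather than metric structure is used. Granting naturality, the induced gLa isomorphism exists automatically, and the explicit generator formulas are obtained by tracing the construction through $\RR$, $W$, $\Der(\RR)$ and $\CDerEnd(W)$. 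The only genuine computations are two chain-rule identities: $D_0\mapsto\tfrac{1}{a}D_0$, from $\partial_t(g\circ\Phi)=a\,((\partial_{t'}g)\circ\Phi)$; and, for spatial $X$ with $X(t)=0$, the mixing $X\mapsto X-\tfrac{1}{a}X(at+b)D_0-\tfrac{1}{c}X(c)\sigma_0$, in which the $D_0$ term records the $x$-dependence of $t'=at+b$ and the $\sigma_0$ term records the infinitesimal conformal scaling produced by the $x$-dependence of $c$. I would organize this using the factorization $(a,b,c,\varphi)=(1,0,1,\varphi)\circ(a,b,c,\mathbbm{1})$: the pure-diffeomorphism factor is clear by the footnoted remark, so it remains to treat $\varphi=\mathbbm{1}$, which is exactly the case written out explicitly. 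Bracket-preservation and the functoriality identity $(\Psi\circ\Phi)_*=\Psi_*\circ\Phi_*$ then follow from naturality, and can be cross-checked against the composition coefficients $a''$, $b''$, $c''$ above.

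For the filtered statement I would use that each 1-index BKL filtration depends only on the orthogonal decomposition of $W$, i.e.\ on the labelled conformally orthonormal basis (\lemmaref{fil1}, \defref{odw}), through the natural representation \eqref{eq:erep}. A 1+3 morphism sends $\theta_i'\mapsto c\theta_i$, matching the $i$-th basis line of $W'$ with that of $W$ up to the positive factor $c$, which is of degree zero in the $\theta$'s; it therefore preserves the total degree in $\theta_2,\theta_3$ and so carries each $F_{\lhd p}(\ww)$ to its counterpart, and intertwines the representations \eqref{eq:erep}. By the induced-filtration construction (\lemmaref{z2find}) the induced isomorphism then respects the $\Z_2$-gradings and the filtrations $F_{\lhd p}\E$, hence each of the 1-, 2- and 3-index BKL filtrations. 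Acting coefficientwise and $\R[[s]]$-linearly on $\E[[s_2,s_3]]$ and $\E[[s_1,s_2,s_3]]$ and sending $F_\alpha$ into $F_\alpha$, it preserves the Rees subalgebras, restricting to isomorphisms of $\pbounce$ and $\pfree$; and since it preserves the ideals $(s_2,s_3)$ and $(s_1,s_2,s_3)$ it descends to isomorphisms of $\abounce$ and $\afree$.

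I expect the main obstacle to be the second bullet: making airtight the identification of a 1+3 morphism with an isomorphism of the precise input data of the \citeone construction, and reproducing the correction terms in the $X$-transformation with their exact coefficients. Once naturality is in hand the bracket-preservation is free, so the real work is confirming that the two chain-rule computations yield exactly the stated formulas, including the conformal-weight term $-\tfrac{1}{c}X(c)\sigma_0$, and that these combine correctly under composition.
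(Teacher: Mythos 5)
Your proposal is correct and matches the paper's approach: the paper's own proof is simply ``by direct calculation,'' and your plan (verifying the groupoid axioms by composing the affine-in-$t$ maps and conformal factors, deriving the generator formulas from the two chain-rule identities, and checking that the bundle map $\theta_i'\mapsto c\theta_i$ preserves each $\RR\theta_i$ and hence the BKL filtrations) is exactly the calculation being invoked.
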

\begin{proof}
  By direct calculation.
\qed\end{proof}
%%%%%%%%%%%%%%%%%%%%%%%%%%%%%%%%%%%%%%%%%%%%%%%%%%%%%%%%%%%%%%%%%%%%%%%%%%%%%%%%
The $1+3$ groupoid in \theoremref{g13} 
restricts the class of automorphisms,
enough so that the BKL filtration can be defined,
and then some more to fix a foliation by timelike lines,
reflecting the structure (rank 1 frame) of MC-elements in
the associated gradeds
about which we perturb.
Informally, of the original 11 gauge degrees of freedom,
only 5 are left in \theoremref{g13},
the remaining 6 are still present as nilpotent automorphisms.

\newcommand{\sepx}{\;\;&\mapsto\;\;}
\begin{lemma}[Some nilpotent automorphisms of $\afree$]
  \label{lemma:diuehgfieh}
      Every $x \in \bigoplus_{\alpha \neq (0,0,0)} G_{\alpha} \afree^0$
      generates a nilpotent
      automorphism $\exp([x,-]) \in \Aut(\afree)$
      that need not preserve the $\alpha$-grading.
      Every such automorphism is
      induced by an $\R[[s_1,s_2,s_3]]$-linear automorphism of $\pfree$.
 In particular, 
  for all $f \in \RR = C^{\infty}(\R \times \spatialm, \R)$
  we have on $\afree^1$, with formulas
  only given for some elements of $\afree^1$,
  for all cyclic $(i,j,k)$:
  \begin{itemize}
\item[(i)]
  The $\RR$-linear automorphism
  $\exp([f s_js_k \sigma_i,-])$
  acts as:
  \begin{align*}
    \theta_0D_0 \sepx
      \theta_0D_0 - s_js_k D_0(f) \theta_0 \sigma_i
                  + s_js_k f \theta_i D_0\\
      \theta_0\sigma_0 + \theta_i\sigma_i \sepx
          \theta_0\sigma_0 + \theta_i\sigma_i
            + s_js_k f(\theta_0 \sigma_i + \theta_i \sigma_0)\\
      \theta_0\sigma_0 + \theta_j\sigma_j \sepx
        \theta_0\sigma_0 + \theta_j\sigma_j
        + s_js_k f(\theta_i\sigma_0 + \theta_j \sigma_{ij})\\
      \theta_0\sigma_0 + \theta_k\sigma_k \sepx
              \theta_0\sigma_0 + \theta_k\sigma_k
        + s_js_k f(\theta_i \sigma_0 - \theta_k \sigma_{ki})
  \end{align*}
    \item[(ii)] The $\RR$-linear automorphism
  $\exp([f s_js_k \sigma_{jk},-])$
  acts as:
  \begin{align*}
    \theta_0D_0 \sepx
      \theta_0D_0 - s_js_k D_0(f) \theta_0 \sigma_{jk}\\
      \theta_0\sigma_0 + \theta_i\sigma_i \sepx
        \theta_0\sigma_0 + \theta_i\sigma_i\\
      \theta_0\sigma_0 + \theta_j\sigma_j \sepx
        \theta_0\sigma_0 + \theta_j\sigma_j
        - s_js_k f(\theta_j\sigma_k + \theta_k \sigma_j)\\
      \theta_0\sigma_0 + \theta_k\sigma_k \sepx
              \theta_0\sigma_0 + \theta_k\sigma_k
        + s_js_k f(\theta_j\sigma_k + \theta_k \sigma_j)
  \end{align*}
  \end{itemize}
  These automorphisms map
  a given element of
  $\theta_0 D_0 + {\afree^1}_{,G}$ back to this space if and only if
  $f$ satisfies, respectively:
  \begin{itemize}
    \item[(i)] $D_0(f) - (\gamma_j^0+\gamma_k^0)f = 0$.
      Equivalently,
      $D_0(e^{-\mu_j-\mu_k}f)=0$
      if $D_0(\mu_i) = \gamma_i^0$ for all $i$.
    \item[(ii)] $D_0(f) = 0$.
  \end{itemize}
\end{lemma}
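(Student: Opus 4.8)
The plan is to dispatch the four assertions in turn: the first two follow from the grading and from $s$-adic completeness, the third is a bracket computation, and the fourth—the invariance criterion—is where the real work lies.

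First I would show that $\exp([x,-])$ is a well-defined automorphism. Since $\afree = \bigoplus_\alpha G_\alpha\afree$ carries the $\Z_{\geq 0}^3$-grading with $[G_\alpha\afree, G_\beta\afree]\subset G_{\alpha+\beta}\afree$, and since only finitely many $G_\gamma\afree$ are nonzero (Table \ref{table:EG}), any $x\in\bigoplus_{\alpha\neq(0,0,0)}G_\alpha\afree^0$ produces an operator $[x,-]$ that strictly raises the total degree $p_1+p_2+p_3$. Hence $[x,-]$ is nilpotent, $\exp([x,-])$ is a finite sum, and it is a gLa automorphism as the exponential of an inner derivation. It visibly need not preserve the $\alpha$-grading, since each homogeneous piece $[x_\alpha,-]$ carries $G_\beta\afree$ into $G_{\beta+\alpha}\afree$.

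To see that the automorphism descends from $\pfree$, I would use the module identification $\mathcal{U}\simeq\afree$ of \lemmaref{overp} to write each homogeneous component of $x$ as $s^\alpha g$ with $g\in G_\alpha\E^0$, and lift it to the honest element $s^\alpha g\in\pfree^0$; this is legitimate because $G_\alpha\E\subset F_\alpha\E$, so $s^\alpha g$ lies in $\pfree$. On $\pfree$ the operator $[s^\alpha g,-]$ raises the $s$-degree by $|\alpha|\geq 1$, so $\exp([s^\alpha g,-])$ converges $s$-adically and defines an $\R[[s_1,s_2,s_3]]$-linear gLa automorphism of $\pfree$; reducing modulo $(s_1,s_2,s_3)$ recovers $\exp([x,-])$ on $\afree$. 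This is precisely the assertion that these automorphisms extend to formal perturbation theory. The explicit tables in (i) and (ii) then follow by computing the relevant brackets in $\E$ from the gLa structure of \citeone and transporting them via \lemmaref{overp}; for each listed target element one checks that $[x,[x,-]]$ already vanishes, so that $\exp([x,-]) = \mathbbm{1}+[x,-]$ there, and the images are read off from $[\sigma_i,-]$ and $[\sigma_{jk},-]$ applied to $D_0$, $\sigma_0$, $\sigma_\ell$ and the $\theta_\ell$.

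The substantive step, and the one I expect to be the main obstacle, is the invariance criterion for $\theta_0 D_0 + {\afree^1}_{,G}$. Writing a general element as $Y = \theta_0 D_0 + X$, with $X\in{\afree^1}_{,G}$ having graded components in the $G_\alpha\E_G$-summands, I would demand that $(\exp([x,-])-\mathbbm{1})Y$ have vanishing $\theta_0\E_G$-component relative to $\E = \E_G\oplus\theta_0\E_G$. For (ii) this is immediate: the only $\theta_0\E_G$ term produced is $-s_js_k D_0(f)\,\theta_0\sigma_{jk}$ from the action on $\theta_0 D_0$, while the remaining images $\mp s_js_k f(\theta_j\sigma_k+\theta_k\sigma_j)$ already lie in $\E_G$ by Table \ref{table:EG}; so the criterion is $D_0(f)=0$. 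For (i) the subtlety is that one must re-expand each bracket output in the $\E_G\oplus\theta_0\E_G$ decomposition: the combinations $\theta_i\sigma_0+\theta_j\sigma_{ij}$ and $\theta_i\sigma_0-\theta_k\sigma_{ki}$ carry a $\theta_0\sigma_i$-component coming from $\theta_i\sigma_0 = (\theta_0\sigma_i+\theta_i\sigma_0)-\theta_0\sigma_i$, and these, paired with the coefficients $-\gamma_j^0$ and $-\gamma_k^0$ of the frame terms in $X$, combine with the $-D_0(f)$ contribution from $\theta_0 D_0$ to give a total $\theta_0\sigma_i$-coefficient proportional to $D_0(f)-(\gamma_j^0+\gamma_k^0)f$. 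Setting this to zero yields the stated condition, which under $D_0(\mu_\ell)=\gamma_\ell^0$ becomes $D_0(e^{-\mu_j-\mu_k}f)=0$ by the product rule. The careful bookkeeping of this $\E_G\oplus\theta_0\E_G$ decomposition—together with confirming that once the first-order $\theta_0$-part is killed the higher-order terms of $\exp([x,-])$ remain in ${\afree^1}_{,G}$—is the crux of the argument.
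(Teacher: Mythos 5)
Your proposal is correct and follows essentially the same route as the paper: lift $x$ along the surjection $\pfree^0\to\afree^0$, exponentiate $s$-adically on $\pfree$, descend to $\afree$, and use the vanishing of $G_{p_1p_2p_3}\afree^1$ when two indices are $\geq 2$ to reduce $\exp([x,-])$ to $\mathbbm{1}+[x,-]$ on the listed elements. You are in fact more explicit than the paper on the invariance criterion (the $\E_G\oplus\theta_0\E_G$ bookkeeping yielding $D_0(f)-(\gamma_j^0+\gamma_k^0)f=0$, and the observation that the unlisted bracket terms such as those of multidegree $s_i^2s_js_k$ land in ${\afree^1}_{,G}$ automatically), which the paper's proof only gestures at.
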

\begin{proof}
  A preimage of $x$ under the surjection $\pfree^0 \to \afree^0$
  generates a (formal power series)
  automorphism of $\pfree$ by exponentiation
  that induces the nilpotent automorphism generated by $x$.
 The elements $x \in \afree^0$ in (i), (ii)
 satisfy $\exp([x,-]) = \mathbbm{1} + [x,-]$
  when restricted to $\afree^1$
  since $G_{p_1p_2p_3} \afree^1 = 0$
  if two of $p_1p_2p_3$ are $\geq 2$.
  We have not written out
  how these nilpotent automorphisms act on all elements of $\afree^1$,
  for example what $s_i^2 s_j s_k$ terms they generate,
  but such terms are in ${\afree^1}_{,G}$ anyway.
\qed\end{proof}
%%%%%%%%%%%%%%%%%%%%%%%%%%%%%%%%%%%%%%%%%%%%%%%%%%%%
\begin{remark} \label{remark:gen56}
  By \lemmaref{diuehgfieh},
  every MC-element as in \lemmaref{mcefree}
can be brought, via a nilpotent automorphism of $\afree$,
into the more specific form in \lemmaref{mcefreenf},
provided the $\smash{\gamma_j^0 - \gamma_k^0}$ do not vanish,
which informally means that the MC-element has
three preferred spatial directions.
Use (i) to set the $\beta_i^0$ to zero
which yields a synchronous frame 
and makes the $\gamma_i^6$ zero by \lemmaref{g60};
use (ii) to set the $\gamma_i^5$ to zero,
informally
to align the MC-element with the orthogonal decomposition of $W$
that defines the filtration;
finally choose $\mu_i$ and $D_i$ appropriately.
\end{remark}
%%%%%%%%%%%%%%%%%%%%%%%%%%%%%%%%%%%%%%%%%%%%%%%

\begin{lemma}[Automorphism giving the transformation in \lemmaref{mcpNEW}]
  The transformation (a) in \lemmaref{mcpNEW}
  is given by composing the following automorphisms of $\afree$,
  induced by automorphisms of $\pfree$,
  in this specific order:
  \begin{itemize}
    \item[1)] An automorphism with
      $a = c = A$ and $b = \log B$
      as in \theoremref{g13}.
    \item[2)] The nilpotent 
      $\exp([f_1s_2s_3\sigma_1,-])
      \circ \exp([f_2s_3s_1\sigma_2,-])
      \circ \exp([f_3s_1s_2\sigma_3,-])$ with
      \[
          f_i\;=\;
          e^{(At + \log B)(g_j^0+g_k^0)} D_i(A t + \log B)
      \]
    \item[3)] Reflection by $\sigma \mathbbm{1}$
     (with $\sigma$ as in \lemmaref{mcpNEW})
      of the basis elements $\theta_1,\theta_2,\theta_3$.
  \end{itemize}
\end{lemma}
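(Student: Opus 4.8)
The plan is to start from the normalized Maurer--Cartan element attached to the given tuple, push it through the three maps in the listed order (first 1), then 2), then 3)), and check that the output is the normalized element of the transformed tuple. Concretely, let $x_0 \in \MC(\afree)$ be the element of the normalized slice $\theta_0 D_0 + {\afree^1}_{,G}$ cut out by $\mu_i = t\gamma_i^0$, $\beta_i = D_i$, $\gamma_i^{5,6}=0$ that \lemmaref{mcefreenf} associates to $(\spatialm,g_i^0,D_i,\xi)$. Each of the three maps is a gLa automorphism of $\afree$ induced from $\pfree$: map 1) by \theoremref{g13} with $\varphi=\mathbbm{1}$, $a=c=A$, $b=\log B$; map 2) by \lemmaref{diuehgfieh}; map 3) functorially from the conformal bundle isometry $\theta_0\mapsto\theta_0$, $\theta_i\mapsto\sigma\theta_i$, which preserves the three orthogonal decompositions of \defref{23ind} and therefore lifts to a filtered automorphism of $\pfree$. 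So the composite is an automorphism of $\afree$ carrying $\MC(\afree)$ to itself. Since by \defref{xpre} the auxiliary data $g_i^{1,2,3,4}$ are determined by $(g_i^0,D_i,\xi)$, it will suffice, once I know the output again lies in the normalized slice, to verify that the composite induces on the primary data exactly $g_i^0\mapsto Ag_i^0$, $D_i\mapsto \sigma A B^{g_j^0+g_k^0}D_i$, and $\xi\mapsto \xi+\log A+(g_1^0+g_2^0+g_3^0)\log B$, which is transformation (a) of \lemmaref{mcpNEW}.

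First I would compute the effect of map 1) from the explicit formulas of \theoremref{g13}. The base term $\theta_0 D_0$ is fixed, since $\theta_0\mapsto A\theta_0$ and $D_0\mapsto A^{-1}D_0$, while $\mu_i\mapsto (At+\log B)g_i^0$; the $t$-linear part gives $\gamma_i^0\mapsto Ag_i^0$, and the constant-in-$t$ part $g_i^0\log B$ is reabsorbed into the frame through the redundancy $\sv{i}=s_i e^{\mu_i}$ of the over-parametrization \lemmaref{overp} — this is exactly what manufactures the factor $B^{g_j^0+g_k^0}$ in the rescaled frame. The substitution $D_i\mapsto D_i - A^{-1}D_i(At+\log B)D_0 - A^{-1}D_i(A)\sigma_0$ then does two things: it scales the spatial frame to $A B^{g_j^0+g_k^0}D_i$, and it creates a nonzero $\beta_i^0$ (a $\theta_i D_0$ term) together with $\sigma_0$-corrections that shift the $\gamma_i^2$-block. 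Thus after map 1) the element is again in $\theta_0 D_0 + {\afree^1}_{,G}$ but is no longer synchronous.

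Next, map 2) is engineered to undo precisely this loss of synchronicity. By \lemmaref{diuehgfieh}(i) the generator $f_i s_j s_k\sigma_i$ adds $s_j s_k f_i\,\theta_i D_0$ to $\theta_0 D_0$; with the prescribed $f_i = e^{(At+\log B)(g_j^0+g_k^0)}D_i(At+\log B)$ this exactly cancels the $\theta_i D_0$ term produced by map 1), restoring $\beta_i^0=0$ and, via \lemmaref{g60}, keeping $\gamma_i^6=0$. The point is that the \emph{same} $f_i$ must simultaneously handle the $\theta_0\sigma_i$, $\theta_i\sigma_0$, $\theta_j\sigma_{ij}$, $\theta_k\sigma_{ki}$ contributions of \lemmaref{diuehgfieh}(i): combined with the $\sigma_0$-corrections from map 1) these must recombine into the affine-gauge-subspace form $(\theta_0\sigma_i+\theta_i\sigma_0)$ and yield the $\gamma_i^{2,3,4}$-block forced by the transformed tuple. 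Finally map 3) multiplies $\theta_1,\theta_2,\theta_3$ by $\sigma$, hence multiplies the frame coefficient by $\sigma$ to give $\sigma A B^{g_j^0+g_k^0}D_i$, while fixing the even-in-$\theta_{1,2,3}$ data; one checks it preserves the synchronous normalized form.

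The hard part will be the $\xi$-bookkeeping in map 2). The shift $\xi\mapsto \xi+\log A+(g_1^0+g_2^0+g_3^0)\log B$ never appears as a standalone coefficient; it is visible only through $g_i^2=D_i(\xi)$ and the relations $g_i^3=-c_{ki}^k-g_i^2$, $g_i^4=-c_{ij}^j+g_i^2$, so a naive approach must disentangle the $A^{-1}D_i(A)\sigma_0$ pieces of map 1), the $D_0(f_i)$ pieces of map 2), and the change of the structure functions $c_{jk}^i$ forced by the frame rescaling. Rather than grind through this, the efficient route is to exploit that transformation (a) is already known to preserve $\mcp$ (\lemmaref{mcpNEW}(a)) and that the normalized slice is in bijection with $\mcp$ (\lemmaref{mcefreenf}). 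Then, once I have verified the two clean facts that the composite lands back in the normalized slice (the content of the $f_i$ cancellation of $\beta_i^0$) and reproduces the primary data $g_i^0$ and $D_i$, the entire $\gamma_i^{1,2,3,4}$-block is \emph{forced} to agree, because both the composite automorphism and transformation (a) respect $\MC(\afree)$ and the auxiliary data are determined by the primary data. This collapses the proof to the coefficient computations sketched above.
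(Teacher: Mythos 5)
Your proposal follows essentially the same route as the paper's proof: track the normalized element through the three automorphisms, observe that Step 1 produces the desired $\gamma_i^0$ and frame (with the $\log B$ part of $\mu_i$ reabsorbed into $\beta_i$ via the redundancy $\sv{i}=s_ie^{\mu_i}$) at the cost of stray $\theta_iD_0$ and $\theta_i\sigma_0$ terms, that the prescribed $f_i$ in Step 2 are exactly what cancels the $\theta_iD_0$ terms so that $\beta_i^0=0$ and hence $\gamma_i^6=0$ by \lemmaref{g60}, and that Step 3 supplies the sign $\sigma$. The one place you genuinely depart is the final ``forcing'' step: instead of verifying the $\gamma_i^{1,2,3,4}$ (equivalently the shift of $\xi$) by computation, you appeal to uniqueness of the element of $\mcp$ with given primary data $(g_i^0,D_i)$. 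That shortcut is legitimate and arguably cleaner than what the paper leaves implicit, but it is only valid where all $g_i^0\neq 0$: it is constraint \eqref{eq:mce2}, solved for $D_i(\xi)$, that pins down $\xi$ up to an additive constant, and if some $g_i^0$ vanishes two tuples in $\mcp$ can share the same $(g_i^0,D_i)$ yet differ in $D_i(\xi)$, so your argument no longer identifies the output. Since \lemmaref{mcpNEW}(a) is stated for arbitrary elements of $\mcp$, you should either restrict the uniqueness argument to the nonvanishing case (which covers all downstream uses, where the $g_i^0$ are nonzero) or fall back on the direct coefficient check you sketched. One small inaccuracy: after Step 1 the element generally leaves $\theta_0D_0+{\afree^1}_{,G}$ (a lone $\theta_i\sigma_0$ is not in $\E_G$), and it is only after Step 2 that the $\theta_0\sigma_i$ and $\theta_i\sigma_0$ contributions recombine into the allowed combination $\theta_0\sigma_i+\theta_i\sigma_0$; your wording acknowledges the recombination but misstates where the intermediate element lives.
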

\begin{proof}
  We use the over-parametrization in \lemmaref{overp} even if
  intermediate results lie outside its range.
  Step 1) uses $a = c$ to keep $\theta_0 D_0$ in place. The choice
  $a = c = A$ brings the $\gamma_i^0$ into the desired form,
  and $b = \log B$ brings the $\beta_i$ into the desired form.
  In Step 2) the $f_i$
  are uniquely determined by requiring that they remove any
  $s_js_k \theta_i D_0$ terms
  that we may have inadvertently introduced in Step 1),
  so now $\beta_i^0 = 0$.
  Any $s_js_k \theta_i \sigma_0$
  that we may have inadvertently introduced in Step 1)
  only appear in the combination $s_js_k(\theta_0 \sigma_i + \theta_i \sigma_0)$,
  so after Step 2)
  we are back in the range of \lemmaref{overp}.
  The $\gamma_i^5$ are zero because
  no $s_js_k(\theta_k \sigma_j + \theta_j \sigma_k)$ was introduced.
  The $\gamma_i^6$ are zero by \lemmaref{g60}.
  Note that Step 2) did not change $\gamma_i^0$ or $\beta_i$.
  Step 3) is clear.
\qed\end{proof}

%%%%%%%%%%%%%%%%%%%%%%%%%%%%%%%%%%%%%%%%%%%%%%%%%%%%%%%%%%%%%%%%%%%%%%%%%%%%%

%%%%%%%%%%%%%%%%%%%%%%%%%%%%%%%%%%%%%%%%%%%%%%%%%%%%%%%%%%%%%%%%%%%
%%%%%%%%%%%%%%%%%%%%%%%%%%%%%%%%%%%%%%%%%%%%%%%%%%%%%%%%%%%%%

\appendix
\section{Review of the spectral sequence of a filtered complex}\label{appendix:ssfc}

We review spectral sequences in a concrete form,
suited for our application.
Suppose we have a complex in the most mundane sense, namely
a vector space $V$ and a $d \in \End(V)$ with $d^2 = 0$.
Its homology is $H(d) = \ker d/\image d$.

Suppose a bounded non-increasing filtration $V_{\geq i}$ is fixed
that respects the differential in the sense that $d V_{\geq i} \subset V_{\geq i}$.
For concreteness, suppose the filtration comes from a grading, so
$V_{\geq i} = V_i \oplus V_{i+1} \oplus \ldots \oplus V_P$
for some grading
$V = V_0 \oplus V_1 \oplus \ldots \oplus V_P$.
Then $d$ is a lower triangular matrix
with entries
$d_{ij} \in \Hom(V_j,V_i)$.
For every $p > 0$
let $d_{ip}^+$ be the $p \times p$
submatrix of $d$
with upper left corner $d_{ii}$.
For example
\[
  d_{22}^+ = \begin{pmatrix}
    d_{22} & 0 \\ d_{32} & d_{33}
  \end{pmatrix}
  \qquad
  d_{23}^+ = \textstyle\begin{pmatrix}
    d_{22} & 0 & 0 \\ d_{32} & d_{33} & 0 \\
    d_{42} & d_{43} & d_{44}
  \end{pmatrix}
\]
Likewise, let $d_{ip}^-$ be the $p\times p$ submatrix with lower right corner $d_{ii}$.
%%%%%%%%%%%%%%%%%%%%%%%%%%%%%%%%%%%%%%%%%%%
\begin{lemma}[The spectral sequence]
Set $Z_{i0}=V_i$ and $B_{i0}=0$
and for $p > 0$ set
\begin{align*}
    Z_{ip}
    \;&=\;
    \{ x \in V_i \,\mid\,
      \exists \ast
      \in V_{i+1} \oplus \ldots \oplus V_{i+p-1}:
      \;\; (\begin{smallmatrix} x \\ \ast\end{smallmatrix}) \in \ker d_{ip}^+ \}\\
    B_{ip}
    \;&=\;
    \{ x \in V_i \,\mid\,
      (\begin{smallmatrix} 0 \\ x \end{smallmatrix}) \in \image d_{ip}^- \}
\end{align*}
Then $B_{ip} \subset Z_{ip} \subset V_i$. Define $\bullet_{ip} = Z_{ip}/B_{ip}$.
A well-defined $D_{i+p,i} \in \Hom(\bullet_{ip}, \bullet_{i+p,p})$
is given by $D_{ii} = d_{ii}$ if $p=0$,
and if $p > 0$ induced by
the map $Z_{ip} \to \bullet_{i+p,p}$ given by
\[
x
\;\mapsto\;
(d_{i+p,i}\, \ldots \, d_{i+p,i+p-1})
(\begin{smallmatrix} x \\ \ast \end{smallmatrix})
\]
with any filler $\ast$ as in the definition of $Z_{ip}$.
For every fixed $p$,
the vector space $\bigoplus_i \bullet_{ip}$
together with the maps
$(D_{i+p,i})_i$
(combined in a block matrix with entries only along the $p$-th subdiagonal)
is a complex. 
The homology at $\bullet_{ip}$ is equal to $\bullet_{i,p+1}$.
\end{lemma}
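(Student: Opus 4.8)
The statement is the standard construction of the spectral sequence of a filtered complex, presented concretely in terms of the lower-triangular matrix decomposition of $d$. The key claims to verify are: (i) the inclusions $B_{ip} \subset Z_{ip} \subset V_i$; (ii) well-definedness of the map $D_{i+p,i}$ on the quotient $\bullet_{ip} = Z_{ip}/B_{ip}$, i.e. independence of the choice of filler $\ast$ and descent to the quotient; (iii) that $\bigoplus_i \bullet_{ip}$ with the $(D_{i+p,i})_i$ is a complex, i.e. $D \circ D = 0$; and (iv) the homology identification $H_{ip} = \bullet_{i,p+1}$. Let me sketch the proof.

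**The plan.** First I would verify the inclusions. The containment $Z_{ip} \subset V_i$ is definitional. For $B_{ip} \subset Z_{ip}$: if $x \in B_{ip}$, then $(\begin{smallmatrix} 0 \\ x \end{smallmatrix})$ lies in the image of $d_{ip}^-$, say $(\begin{smallmatrix} 0 \\ x \end{smallmatrix}) = d_{ip}^- y$ for some $y$ supported on $V_{i-p+1} \oplus \cdots \oplus V_i$; here $x \in V_i$ is the bottom component. Using $d^2 = 0$ and the block structure, applying $d$ again and reading off the relevant components shows $x$ admits a filler witnessing $x \in Z_{ip}$; this is a direct manipulation of the nilpotency relations among the $d_{kl}$. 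Next, for well-definedness of $D_{i+p,i}$, I would check that two fillers $\ast, \ast'$ for the same $x$ differ by an element killed after applying the bottom row of $d_{ip}^+$, which again follows from $d^2 = 0$; and that $B_{ip}$ is sent into $B_{i+p,p}$, so the map descends to the quotient.

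**The complex and homology identifications.** For the complex condition $D \circ D = 0$, the composite $\bullet_{ip} \to \bullet_{i+p,p} \to \bullet_{i+2p,p}$ is computed by applying two successive bottom-row contractions of $d$; the vanishing is precisely an instance of $d^2 = 0$ together with the fact that the intermediate filler can be chosen compatibly. The final and most substantial step is the homology identification $H_{ip} = \bullet_{i,p+1}$. The plan is to show directly that $\ker D_{i+p,i}$, as a subquotient of $Z_{ip}/B_{ip}$, pulls back to $Z_{i,p+1}$ (the cocycle condition at stage $p$ is exactly the existence of a filler of length one greater), while $\image D_{i,i-p}$ pulls back to $B_{i,p+1}/B_{ip}$ (a boundary at stage $p$ is exactly one coming from $d_{i,p+1}^-$ rather than $d_{ip}^-$). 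Carefully tracking these two pullbacks and quotienting gives the isomorphism.

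**Main obstacle.** I expect the bookkeeping in the homology identification (iv) to be the real work. The concrete matrix formulation makes each individual verification a finite computation with the relation $d^2=0$, but the identification $H(\bullet_{\cdot p}) = \bullet_{\cdot,p+1}$ requires chasing fillers of varying lengths and matching the $Z$ and $B$ subspaces at adjacent stages, so the challenge is organizing the filler manipulations cleanly rather than any deep conceptual point. A robust alternative, which I would fall back on if the direct chase becomes unwieldy, is to identify this construction with the classical spectral sequence of a filtered complex (via $V_{\geq i}$) and cite the standard isomorphism $E_{p+1} = H(E_p, d_p)$; the concrete $Z_{ip}, B_{ip}$ are then recognized as representatives of the usual $Z_p/B_p$ cycle and boundary spaces, and (i)--(iv) are the familiar identities rephrased in matrix form.
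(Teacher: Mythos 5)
Your proposal is correct and takes the same route as the paper: the paper's entire proof is the one-line remark ``By working out the ramifications of $d^2=0$,'' and your plan is precisely a (more detailed) elaboration of that — verifying the inclusions, well-definedness, $D\circ D=0$, and the homology identification by filler bookkeeping with the block-triangular relations coming from $d^2=0$. The only minor imprecision is that in step (iv) the kernel of $D_{i+p,i}$ pulls back to $Z_{i,p+1}+B_{ip}$ (not just $Z_{i,p+1}$) and the image pulls back to $B_{i,p+1}$, but this is exactly the kind of bookkeeping you already flag as the real work.
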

\begin{proof}
  By working out the ramifications of $d^2 = 0$.
\qed\end{proof}
Even though one
iteratively takes homology,
each $\bullet_{ip}$ is a concrete space, namely a subquotient of $V_i$.
This is because,
a subquotient of a subquotient is a subquotient.
%%%%%%%%%%%%%%%%%%%%%%%%%%%%%%%%%%%%%%%%%%%%%%%%%%%%%%%%%%%%
\begin{theorem}[Main fact] \label{theorem:specseq}
  Let $H(d)_{\geq i} \subset H(d)$ be all elements
  that have a representative in $V_{\geq i}$.
  Then the associated graded, $\Gr H(d)$,
  is isomorphic, as a graded vector space, to the last page
  $\bigoplus_i \bullet_{i,P+1}$
  of the spectral sequence.
\end{theorem}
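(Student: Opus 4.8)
The plan is to identify the last page $\bigoplus_i \bullet_{i,P+1}$ directly with $\Gr H(d)$, bypassing any abstract convergence machinery, since the subquotients $Z_{ip}, B_{ip} \subset V_i$ are already given concretely. Throughout I write $Z = \ker d$ and $B = \image d$, I let $\pi_i : V_{\geq i} \to V_{\geq i}/V_{\geq i+1} = V_i$ be the projection, and I use that $B \subset Z$ because $d^2 = 0$.

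First I would translate the submatrix conditions into filtration-theoretic statements. Unwinding the definition, $x \in Z_{ip}$ means there is some $\ast \in V_{i+1} \oplus \cdots \oplus V_{i+p-1}$ with $d(x+\ast) \in V_{\geq i+p}$, i.e.\ $Z_{ip} = \pi_i(\{z \in V_{\geq i} \mid dz \in V_{\geq i+p}\})$; dually, $x \in B_{ip}$ means $x = \pi_i(dw)$ for some $w \in V_{\geq i-p+1}$ with $dw \in V_{\geq i}$, i.e.\ $B_{ip} = \pi_i\big(d(V_{\geq i-p+1}) \cap V_{\geq i}\big)$. Establishing this dictionary---checking that the truncated kernel of $d_{ip}^+$ and the truncated image of $d_{ip}^-$ encode exactly these filtered cycle and boundary conditions, including the lower-triangular structure and the boundary cases where indices run off the ends of the grading---is the one place that requires care; everything afterward is formal.

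Next I would specialize to $p = P+1$. Since $V_j = 0$ for $j>P$ and $j<0$, we have $V_{\geq i+P+1}=0$ and $V_{\geq i-P}=V$ for all $0 \leq i \leq P$, so the conditions collapse to $Z_{i,P+1} = \pi_i(Z \cap V_{\geq i})$ and $B_{i,P+1} = \pi_i(B \cap V_{\geq i})$, namely honest cycles and honest boundaries that land in $V_{\geq i}$.

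Finally I would match this against the associated graded. By definition $H(d)_{\geq i}$ is the image of $Z \cap V_{\geq i}$ in $H(d) = Z/B$, so the second isomorphism theorem gives $\Gr_i H(d) = H(d)_{\geq i}/H(d)_{\geq i+1} \cong (Z \cap V_{\geq i})/\big((Z \cap V_{\geq i+1}) + (B \cap V_{\geq i})\big)$, where $B \subset Z$ is used to simplify the denominator. On the other hand, since $\ker(\pi_i|_{V_{\geq i}}) = V_{\geq i+1}$, the quotient $\bullet_{i,P+1} = \pi_i(Z \cap V_{\geq i})/\pi_i(B \cap V_{\geq i})$ is canonically isomorphic to that same space. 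Assembling these degreewise isomorphisms over $i$ yields the claimed graded isomorphism $\Gr H(d) \cong \bigoplus_i \bullet_{i,P+1}$. The only genuine obstacle is the bookkeeping in the first translation step; the remaining identifications are routine diagram-chasing.
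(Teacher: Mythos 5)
Your proposal is correct and follows essentially the same route as the paper: the paper's proof defines the map $Z_{i,P+1}\to H(d)_{\geq i}/H(d)_{\geq i+1}$, $x\mapsto[x+\ast]$, and checks it is filler-independent, surjective, with kernel $B_{i,P+1}$, which is exactly your identification $Z_{i,P+1}=\pi_i(Z\cap V_{\geq i})$, $B_{i,P+1}=\pi_i(B\cap V_{\geq i})$ followed by the second isomorphism theorem. The only difference is organizational — you make the filtration-theoretic dictionary for $Z_{ip}$, $B_{ip}$ explicit before specializing to $p=P+1$, whereas the paper works directly with fillers — and both hinge on the same verification.
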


\begin{proof}
The map $Z_{i,P+1} \to H(d)_{\geq i} / H(d)_{\geq i+1}$ defined by
$x \mapsto (\begin{smallmatrix} x \\ \ast \end{smallmatrix})$,
with $\ast$ any filler as in the definition of $Z_{i,P+1}$,
is independent of $\ast$, surjective, and has kernel $B_{i,P+1}$.
\qed\end{proof}
%%%%%%%%%%%%%%%%%%%%%%%%%%%%%%%%%%%%%%%%%%%%%%%%%%%%%%%%
 The definition of $\bullet_{ip}$ given above makes the proof of
 \theoremref{specseq} straightforward.
 In practice, it is important to know the differentials on the first few pages.
\begin{lemma}[The first few pages]
  Choose $h_i \in \End(V_i)$ such that
  $d_{ii} h_i d_{ii} = d_{ii}$.
  Then:
  \begin{itemize}
    \item $D_{ii} = d_{ii}$.
    \item $D_{i+1,i}$ is induced by $d_{i+1,i}$.
    \item $D_{i+2,i}$ is induced by $d_{i+2,i} - d_{i+2,i+1} h_{i+1} d_{i+1,i}$.
  \end{itemize}
\end{lemma}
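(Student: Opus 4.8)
The plan is to unwind the definitions in the spectral sequence lemma one page at a time, for $p=0,1,2$, and in each case exhibit an explicit linear representative of the differential $D_{i+p,i}$ on the level of $V_i$. The cases $p=0$ and $p=1$ are immediate. For $p=0$ the equality $D_{ii}=d_{ii}$ holds by the definition of $D_{ii}$. For $p=1$ the filler range $V_{i+1}\oplus\ldots\oplus V_i$ is empty, so $Z_{i1}=\ker d_{ii}$ and $B_{i1}=\image d_{ii}$, giving $\bullet_{i1}=H(d_{ii})$; the prescribed map $x\mapsto(d_{i+1,i})(x)$ is just $d_{i+1,i}x$, so $D_{i+1,i}$ is induced by $d_{i+1,i}$.

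The substance is the case $p=2$, and the whole point is to convert the implicit choice of filler into the explicit operator $h_{i+1}$. For $x\in Z_{i2}$, membership in $\ker d_{i2}^+$ says $d_{ii}x=0$ together with the existence of a filler $\ast\in V_{i+1}$ solving $d_{i+1,i+1}\ast=-d_{i+1,i}x$; in particular $d_{i+1,i}x\in\image d_{i+1,i+1}$. The differential then sends the class of $x$ to the class of $d_{i+2,i}x+d_{i+2,i+1}\ast$ for any such filler. I would produce a canonical filler from $h_{i+1}$: since $d_{i+1,i+1}h_{i+1}d_{i+1,i+1}=d_{i+1,i+1}$, the operator $d_{i+1,i+1}h_{i+1}$ fixes $\image d_{i+1,i+1}$ pointwise, so taking $\ast=-h_{i+1}d_{i+1,i}x$ gives $d_{i+1,i+1}\ast=-d_{i+1,i+1}h_{i+1}d_{i+1,i}x=-d_{i+1,i}x$, using $d_{i+1,i}x\in\image d_{i+1,i+1}$. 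Thus $\ast$ is an admissible filler, and substituting it into the differential yields $d_{i+2,i}x-d_{i+2,i+1}h_{i+1}d_{i+1,i}x$, which is the claimed operator.

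The main obstacle — in fact the only delicate point — is checking that the explicit filler $\ast=-h_{i+1}d_{i+1,i}x$ is valid exactly on $Z_{i2}$, and this is precisely where the two ingredients combine: the membership $d_{i+1,i}x\in\image d_{i+1,i+1}$ is built into the definition of $Z_{i2}$, and the reflexivity $d_{i+1,i+1}h_{i+1}d_{i+1,i+1}=d_{i+1,i+1}$ of $h_{i+1}$ turns that membership into the filler identity. Because the earlier lemma already establishes that $D_{i+2,i}$ is well defined independently of the chosen filler, I do not need to re-verify filler-independence; it suffices that my $\ast$ is one admissible choice. Likewise, that the representative $d_{i+2,i}x-d_{i+2,i+1}h_{i+1}d_{i+1,i}x$ lands in $Z_{i+2,2}$ is automatic, since it represents the value of the well-defined map $D_{i+2,i}$.
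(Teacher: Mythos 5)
Your proof is correct; the paper itself omits the proof of this lemma, and your argument is the standard one that the omission implicitly assumes: unwind $Z_{i2}$, observe that $d_{i+1,i}x\in\image d_{i+1,i+1}$, and use $d_{i+1,i+1}h_{i+1}d_{i+1,i+1}=d_{i+1,i+1}$ to exhibit the explicit filler $\ast=-h_{i+1}d_{i+1,i}x$, after which filler-independence from the preceding lemma finishes the job. The $p=0,1$ cases are handled exactly as they should be.
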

\begin{proof}
  Omitted.
  \qed\end{proof}
  If $V$ carries an additional homological grading,
  the construction is compatible with it.
This is useful in practice,
but irrelevant for the logic of spectral sequences.

\end{document}